\renewcommand{\Re}{\mathop{\rm Re}\nolimits}
\renewcommand{\Im}{\mathop{\rm Im}\nolimits}
\theoremstyle{plain}
\newtheorem{theorem}{Theorem}[section]
\newtheorem{lemma}[theorem]{Lemma}
\newtheorem{proposition}[theorem]{Proposition}
\newtheorem{corollary}[theorem]{Corollary}
\theoremstyle{definition}
\newtheorem{definition}[theorem]{Definition}
\theoremstyle{remark}
\newtheorem{remark}[theorem]{Remark}
\newtheorem{assumption}[theorem]{Assumption}
\newcommand{\R}{{\mathbb R}}
\newcommand{\Z}{{\mathbb Z}}
\newcommand{\N}{{\mathbb N}}
\def\im{{\rm i}}
\newcommand{\C}{\mathbb{C}}
\newcommand{\T}{\mathbb{T}}
\def\({\left(}
\def\){\right)}
\def\<{\left\langle}
\def\>{\right\rangle}
\numberwithin{equation}{section}
\begin{document}

\title{Asymptotic stability of small bound state of\\ nonlinear quantum walks}

\author{Masaya Maeda}
\maketitle

\begin{abstract}
In this paper, we study the long time behavior of nonlinear quantum walks when the initial data is small in $l^2$.
In particular, we study the case where the linear part of the quantum walk evolution operator has exactly two eigenvalues and show that the solution decomposed into nonlinear bound states bifurcating from the eigenvalues and scattering waves.
\end{abstract}


\section{Introduction}
Discrete time quantum walks, which we refer quantum walks (QWs) in this paper, are space-time discrete dynamics which can be considered as a quantum counter part of classical random walks \cite{ABNVW01}.
Quantum mechanics is linear, so at first glance it seems difficult to consider nonlinear QWs.
However, it is possible to realize QWs with classical waves, such as light using optical devices \cite{BMKSW99PRA,KRS03OC}, and by this reason it becomes possible to implement nonlinearity in QWs.
In fact, models of nonlinear QWs are proposed using nonlinear optical elements \cite{NPR07PRA} and feed-forward mechanisms \cite{SWH14SR}.
In addition to experimental realizations, nonlinear QWs can be used to simulate nonlinear Dirac equations \cite{LKN15PRA, MS20RMP} and computing long time dynamics of wave packet spreading in nonlinear distorted media \cite{VFF19PRL}.

As Dirac equations \cite{EG21JMPA} (which are the continuous limit of QWs \cite{ANF14JPA,MA20QIP,MS20RMP}) and discrete Schr\"odinger equations \cite{Cuccagna09JMAA,EKT15JST,KKK06AA,PS08JMP} (also called continuous time QWs), QWs are dispersive \cite{MSSSSdis,MSSSS18DCDS,ST12JFA} in the sense that the $l^\infty$ norm of the solution decreases while $l^2$ norm of the solution conserves.
Given a nonlinear dispersive equation, it is natural to ask the existence and stability of bound state (also called soliton) solutions.
Recall that for KdV equation, which is one of the most famous nonlinear dispersive equation, solitons were found even before KdV equation itself \cite{Russell1844}.
The importance of solitons can also be seen by soliton resolution conjecture \cite{Soffer06ICM}, which claims generic solutions decouple into scattering waves and solitons, as in RAGE theorem for the linear case \cite{FO17JFA, SS16QSMF}.
Indeed, many papers studying nonlinear QWs numerically observe solitonic behavior of the solution and focus on the study of its dynamics \cite{BD2010.15281v1,BD20PRA,MDB15PRE,LKN15PRA,MSSSS19JPC,NPR07PRA,VFZF18Chaos}.
For the stability analysis of bound states, related to the study of topological phases \cite{Asboth12PRB,AO13PRB,CGSVWW18AHP,CGSVWW16JPA,Kitagawa12QIP,KRBD10PRA,Matsuzawa20QIP,ST19QIP,Suzuki19QIP,TAD14PRA}, Gerasimenko, Tarasinski, and Beenakker \cite{GTB16PRA}, followed by Mochizuki, Kawakami and Obuse \cite{MKO20JPA} studied the linear stability of bound states bifurcating from linear bound states.

Motivated by the above results, the main results of this paper is the following:
\begin{itemize}
\item Construct nonlinear bound states bifurcating from linear bound states (Proposition \ref{prop:nlbs}).
\item 
Prove the nonlinear (dynamical) stability of nonlinear bound states which we have constructed (Theorem \ref{thm:main}).
\end{itemize}
Actually, we will prove stronger result in Theorem \ref{thm:main}.
That is, we will show the soliton resolution for small solutions (i.e.\ we show all small solutions decouple into a nonlinear bound state and scattering waves).
Such asymptotic stability results have been investigated for nonlinear bound states for nonlinear Schr\"odinger equations \cite{GNT04IMRN,KM09JFA,Mizumachi08JMKU,SW90CMP}, nonlinear Dirac equations \cite{Boussaid06CMP,Boussaid08SIMA,PS12JMP} and discrete nonlinear Schr\"odinger equations \cite{CT09SIMA,KPS09SIAM,Maeda17SIMA,MP12DCDS}.
However, for discrete dynamical systems having dispersive properties, including QWs, we are not aware of such results.


Before going into the details we summarize notations which we use though out the paper:

\subsection{notations}
\begin{itemize}
	\item We write $a\lesssim b$ if there exists a constant $C$ s.t.\ $a\leq Cb$.
	Further, if the constant $C$ depends on some parameter $s$, we write $a\lesssim_s b$.
	\item
	Let $X$ be a Banach space, $x\in X$ and $r>0$, we set $B_X(x,r):=\{y\in X\ |\ \|x-y\|_X<r\}$ and $\overline{B_X(x,r)}:=\{y\in X\ |\ \|x-y\|_X \leq r\}$.
	\item For Banach spaces $X,Y$, we set $\mathcal{L}(X,Y)$ to be the Banach space of bounded operators from $X$ to $Y$ equipped with the usual operator norm $\|T\|_{\mathcal{L}(X,Y)}=\sup_{\|u\|_X = 1}\|Tu\|_Y$.
	Further, we set $\mathcal{L}(X):=\mathcal{L}(X,X)$.
	Notice that $\mathcal{L}(\C^n)$ are merely set of $n\times n$ matrices.
	\item For an operator $U$ on a Hilbert space, $\sigma(U)$ (resp.\ $\sigma_{\mathrm{d}}(U)$) will denote the set of spectrum (discrete spectrum) of $U$.
	\item For a Banach space $X$, $p\in[1,\infty]$ and $s\in\R$, we set 
		\begin{align*}
l^{p,s}(\Z,X):=\{u:\Z\to X\ |\ \|u\|_{l^{p,s}(\Z,X)}^p:=\sum_{x\in \Z}\<x\>^{ps}\|u(x)\|_X^p <\infty\},
		\end{align*}
		where $\<x\>:=(1+|x|^2)^{1/2}$.
		Moreover, we set $l^p(\Z,X):=l^{p,0}(\Z,X)$.
		\item For $F \in C^1(X,Y)$, the Fr\'echet derivative of $F$ will be denoted by $DF$.
			In particular, for $x,w\in X$, $DF(x)w=\left.\frac{d}{d\epsilon}\right|_{\epsilon=0}F(x+\epsilon w)$.
	\item For $u={}^t(u_{\uparrow}\ u_{\downarrow}),v={}^t(v_{\uparrow}\ v_{\downarrow})\in \C^2$, we set $(u,v)_{\C^2}:=u_\uparrow\overline{v_\uparrow}+u_{\downarrow}\overline{v_\downarrow}$ and $\<u,v\>_{\C^2}:=\mathrm{Re}(u,v)_{\C^2}$.
	\item For $u,v\in l^2(\Z,\C^2)$, we set $(u,v):=\sum_{x\in \Z}(u(x),v(x))_{\C^2}$ and $\<u,v\>:=\Re (u,v)$.

\end{itemize}
\subsection{Set up and main results}
We define the shift operator $S\in \mathcal{L}(l^2(\Z,\C^2))$ by
\begin{align*}
	Su(x):=\begin{pmatrix}
		u_{\uparrow}(x-1)\\ u_{\downarrow}(x+1)
	\end{pmatrix},\ \text{where} \ u(x)=\begin{pmatrix}
			u_{\uparrow}(x)\\ u_{\downarrow}(x)
		\end{pmatrix}.
\end{align*}
Let $C:\Z\to U(2)$, where $U(2)$ is the set of $2\times 2$ unitary matrices.
A coin operator $\hat{C}\in \mathcal{L}(l^2(\Z,\C^2))$ associated to such map is defined by
\begin{align*}
	\hat{C}u(x):=C(x)u(x).
\end{align*}
By abuse of notation we identify $C$ and $\hat{C}$ and simply write $\hat{C} =C $.
Recall that general elements of $ U(2)$ can be expressed as
\begin{align*}
	e^{\im \theta}\begin{pmatrix}
		\beta & \overline{\alpha}\\ -\alpha & \overline{\beta}
	\end{pmatrix},\ \theta\in\R,\ \alpha,\beta\in\C\text{ and } \ |\alpha|^2+|\beta|^2=1.
\end{align*} 
We assume:
\begin{assumption}\label{ass:1}
	There exists  $C_\infty=\begin{pmatrix}
		\beta_\infty & \overline{\alpha_\infty}\\ -\alpha_\infty & \overline{\beta_\infty}
	\end{pmatrix}\in U(2)
	$ satisfying $0<|\alpha_\infty|<1$
	s.t. $$\|C(\cdot)-C_\infty\|_{l^{1,1}(\Z,\mathcal{L}(\C^2))}<\infty.$$
\end{assumption} 

\begin{remark}
	There is no loss of generality assuming $\theta_\infty :=\lim_{|x|\to \infty }\theta(x)=0$.
\end{remark}

The linear time evolution operator $U\in \mathcal{L}(l^2(\Z,\C^2))$ of quantum walk is given by
\begin{align*}
	U=SC.
\end{align*} 
Similarly, we set $U_\infty:=SC_\infty$.
We note that all $S,C,C_\infty,U,U_\infty$ are unitary operators.

We next introduce the nonlinear coin operator.
Let $\gamma \in \mathcal{L}(\C^2)$ be self-adjoint and $g\in C^\infty(\R,\R)$ with $g(0)=0$.
Then, we set $N=N_{\gamma,g}:\C^2\to \C^2$ by
\begin{align}\label{eq:N}
	N(u):=e^{\im g(\<u,\gamma u\>_{\C^2})\gamma}u.
\end{align}
As the coin operator, $N$ will be used to express the (nonlinear) operator on $l^2(\Z,\C^2)$.
The nonlinear coin operator given in the above form covers many nonlinear QWs appeared in the literature such as \cite{LKN15PRA,NPR07PRA}.
\begin{remark}
The nonlinearity considered in \cite{NPR07PRA} actually have the form
\begin{align*}
N(u)=e^{\im g_1(\<u,\gamma_1 u\>)\gamma_1} e^{\im g_2(\<u,\gamma_2 u\>)\gamma_2}u,\ \gamma_1=\begin{pmatrix}
1 & 0\\ 0 & 0
\end{pmatrix},\ 
\gamma_2=\begin{pmatrix}
0 & 0\\ 0 & 1
\end{pmatrix},
\end{align*}
with $g_1(s)=g_2(s)=s$,
and so it can be considered as having two nonlinear coins.
One can handle such nonlinearity by trivial modification.
\end{remark}
The specific choice of the nonlinearity \eqref{eq:N} is also natural in the sense that when we consider the continuous limit of nonlinear QWs, the limit becomes nonlinear Dirac equation having Hamiltonian, see e.g. \cite{MS20RMP}.
The above specific form is used in Lemma \ref{lem:symporth} to obtain the "orthogonality" of the linearized operator (see, Lemma \ref{lem:symporth} below).
An important example which do not fall in the above framework is the model considered in \cite{GTB16PRA,MKO20JPA}, with the nonlinear coin given by
\begin{align*}
N(u)=e^{\im g(\<u,\sigma_3 u\>)\sigma_2}u,
\end{align*}
where $\sigma_j$'s are the Pauli matrices.
Study of such wider class of nonlinearity may be a good direction for the future research.
 
\begin{remark}
The assumption $g(0)=0$ is not essential.
For general $g$, we can write $CN(u)=Ce^{g(0)\gamma} Ce^{\im \(g(\<u,\gamma u\>_\C^2)-g(0)\)\gamma}u$.
Thus, replacing $C$ by $Ce^{g(0)\gamma}$ and $g$ by $g-g(0)$, we can reduce the general case to the case $g(0)=0$.
\end{remark}
The time evolution of nonlinear QW is given by the following recursion relation:
\begin{align}\label{eq:nlqw}
	u(t+1)=UN(u(t)).
\end{align}
In other words, if $u\in l^\infty(\Z,l^2(\Z,\C^2))$ satisfies \eqref{eq:nlqw}, we say $u$ is a solution of nonlinear QW \eqref{eq:nlqw}.
Notice that we have $\|u(t)\|_{l^2}=\|u(0)\|_{l^2}$ for all $t\in \Z$.
\begin{remark}
In the context of quantum mechanics, $\|u\|_{l^2}^2$ is the total probability so it is natural to consider the case $\|u\|_{l^2}=1$ and such restriction makes no difference in the linear case.
However, in this paper we mainly consider the case $\|u\|_{l^2}\ll 1$, which is a natural starting point when we consider nonlinear problems.
We note that considering the nonlinear QW with the coin
\begin{align*}
N(u)=e^{\im g(\chi\<u,\gamma u\>)\gamma}u,
\end{align*}
and $\chi \ll 1$ with $\|u\|_{l^2}=1$ is equivalent to considering the same coin with $\chi=1$ and $\|u\|_{l^2}\ll1$.
Such tuning parameter $\chi$ is widely used in the study of nonlinear QWs and our result can be considered as a study of dynamics in the weak nonlinear regime.
We will study the nonlinear QWs in strong nonlinear regime ($\|u\|_{l^2}$ and $\chi$ are not small) in the forthcoming paper.
\end{remark}

Our first result is the existence of bound states bifurcating from $\sigma_{\mathrm{d}}(U)$ (the discrete spectrum of $U$).

\begin{proposition}[Nonlinear bound states]\label{prop:nlbs}
	Assume $e^{\im \lambda} \in \sigma_{\mathrm{d}}(U)$ and let $\phi$ be the associated eigenfunction.
	Then, for arbitrary $s>0$ there exists $\delta_s>0$ s.t.\ there exists $\Lambda[\cdot] \in C^\infty(B_{\C}(0,\delta_s),\R)$ and $\Phi[\cdot]\in C^\infty(B_{\C}(0,\delta_s),l^{2,s}(\Z,\C^2))$ s.t. for $z\in B_{\C}(0,\delta_s)$, 
	\begin{align}\label{nlbs:1}
		UN(\Phi[z])=e^{\im \Lambda[z]}\Phi[z].
	\end{align}
	Moreover, for $z\in B_{\C}(0,\delta_s)$ we have 
	\begin{align}\label{nlbs:2}
		\Phi[e^{\im \theta}z]=e^{\im \theta}\Phi[z]\ \text{ and }\ \Lambda[e^{\im\theta}z]=\Lambda[z],
	\end{align}
and for $w\in \C$, we have
\begin{align}\label{nlbs:3}
	\|\Phi[z] - z\phi\|_{l^{2,s}}\lesssim_s |z|^3,\quad \|D\Phi[z]w-w\phi\|_{l^{2,s}}\lesssim_s |z|^2|w|,
\end{align}
\end{proposition}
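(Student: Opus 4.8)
\emph{Plan.}
I would prove this by a Lyapunov--Schmidt bifurcation argument built on the gauge covariance $N(e^{\im\theta}u)=e^{\im\theta}N(u)$, which is immediate from \eqref{eq:N} since $\<e^{\im\theta}u,\gamma e^{\im\theta}u\>_{\C^2}=\<u,\gamma u\>_{\C^2}$. Two preliminaries are needed. First, $U-U_\infty=S(C-C_\infty)$ is compact, so $e^{\im\lambda}\notin\sigma(U_\infty)=\sigma_{\mathrm{ess}}(U)$; as $U$ is a band operator and $e^{\im\lambda}$ sits in a spectral gap, a Combes--Thomas/Levinson argument gives exponential decay of $\phi$, hence $\phi\in l^{2,s}(\Z,\C^2)$ for every $s>0$. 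Taking $e^{\im\lambda}$ simple (only to keep the exposition clean), let $Pu:=\|\phi\|_{l^2}^{-2}(u,\phi)\,\phi$ be the orthogonal projection onto $\C\phi$ and $Q:=1-P$; then $P,Q\in\mathcal L(l^{2,s})$ for every $s$ (as $\phi$ is localized), $PU=UP=e^{\im\lambda}P$, and the reduced resolvent $(U-e^{\im\lambda})|_{Ql^{2,s}}^{-1}$ is bounded on $Ql^{2,s}$ (again Combes--Thomas). Second, from \eqref{eq:N} with $g(0)=0$ one has, for $v$ in a ball of $l^{2,s}$ and $\rho$ small, a $C^\infty$ factorization $N(\rho v)-\rho v=\rho^{3}\mathcal R(\rho^{2},v)$ with $\mathcal R(0,v)(x)=\im g'(0)\<v(x),\gamma v(x)\>_{\C^2}\gamma v(x)$; smoothness of this pointwise-acting map on $l^{2,s}$ is routine.

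\emph{Reduction and rescaling.} By gauge covariance it is enough to construct $\Phi[\rho]$ and $\Lambda[\rho]\in\R$ solving \eqref{nlbs:1} for real $\rho\in[0,\delta_s)$, and then to set $\Phi[z]:=(z/|z|)\Phi[|z|]$ (with $\Phi[0]:=0$) and $\Lambda[z]:=\Lambda[|z|]$, which automatically yields \eqref{nlbs:1} and \eqref{nlbs:2} on $B_{\C}(0,\delta_s)$. Insert the ansatz $\Phi[\rho]=\rho\phi+\rho^{3}\psi$ with $\psi\in Ql^{2,s}$ and $\Lambda[\rho]=\lambda+\rho^{2}\mu$, $\mu\in\R$; using $U\phi=e^{\im\lambda}\phi$ and the factorization, the residual $UN(\Phi[\rho])-e^{\im\Lambda[\rho]}\Phi[\rho]$ equals $\rho^{3}\mathcal G(\rho,\mu,\psi)$, where
\begin{align*}
\mathcal G(\rho,\mu,\psi):=(U-e^{\im(\lambda+\rho^{2}\mu)})\psi+U\mathcal R(\rho^{2},\phi+\rho^{2}\psi)+e^{\im\lambda}\,\frac{1-e^{\im\rho^{2}\mu}}{\rho^{2}}\,\phi
\end{align*}
extends to a $C^\infty$ map near $(0,\mu_0,\psi_0)$ (defined below) which depends on $\rho$ only through $\rho^{2}$. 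It remains to solve $\mathcal G=0$.

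\emph{Bifurcation and closing.} At $\rho=0$, $\mathcal G(0,\mu,\psi)=(U-e^{\im\lambda})\psi-\im\mu e^{\im\lambda}\phi+U\mathcal R(0,\phi)$ is affine in $(\mu,\psi)$. Its $P$-component vanishes iff $\im\mu=\|\phi\|_{l^2}^{-2}(\mathcal R(0,\phi),\phi)$, and this is solvable with $\mu$ real precisely because the Hamiltonian form \eqref{eq:N} (via $\gamma=\gamma^{*}$) makes $(\mathcal R(0,\phi),\phi)=\im g'(0)\sum_x\<\phi(x),\gamma\phi(x)\>_{\C^2}^{2}$ purely imaginary; thus $\mu_0:=g'(0)\|\phi\|_{l^2}^{-2}\sum_x\<\phi(x),\gamma\phi(x)\>_{\C^2}^{2}\in\R$, and the $Q$-component gives $\psi_0:=-(U-e^{\im\lambda})|_{Ql^{2,s}}^{-1}QU\mathcal R(0,\phi)$, so $\mathcal G(0,\mu_0,\psi_0)=0$. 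The linearization $D_{(\mu,\psi)}\mathcal G(0,\mu_0,\psi_0):[\dot\mu,\dot\psi]\mapsto-\im\dot\mu e^{\im\lambda}\phi+(U-e^{\im\lambda})\dot\psi$ is injective with one-dimensional cokernel $\R\,e^{\im\lambda}\phi$ (the gauge direction), so I apply the implicit function theorem not to $\mathcal G=0$ but to $\Pi\mathcal G=0$, where $\Pi u:=u-\|\phi\|_{l^2}^{-2}\<u,e^{\im\lambda}\phi\>\,e^{\im\lambda}\phi$ is the bounded real projection killing $\R\,e^{\im\lambda}\phi$; this produces $\delta_s>0$ and $C^\infty$ solutions $\mu[\rho]=\hat\mu(\rho^{2})$, $\psi[\rho]=\hat\psi(\rho^{2})$. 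To upgrade $\Pi\mathcal G=0$ to $\mathcal G=0$: since $\Pi\mathcal G(\rho,\mu[\rho],\psi[\rho])=0$, the residual is $UN(\Phi[\rho])-e^{\im\Lambda[\rho]}\Phi[\rho]=\rho^{3}\tau(\rho)\,e^{\im\lambda}\phi$ for a smooth real $\tau$ with $\tau(0)=0$; since $N$ acts by pointwise unitaries and $U$ is unitary, $\|UN(\Phi[\rho])\|_{l^2}=\|\Phi[\rho]\|_{l^2}$, and squaring the residual identity in $l^2$ and using $(\Phi[\rho],\phi)=\rho\|\phi\|_{l^2}^{2}$ yields
\begin{align*}
0=\rho^{4}\tau(\rho)\,\|\phi\|_{l^2}^{2}\,(2\cos(\rho^{2}\mu[\rho])+\rho^{2}\tau(\rho)),
\end{align*}
so $\tau\equiv0$ near $0$ (the last factor being $\approx 2$), i.e.\ \eqref{nlbs:1} holds for real $\rho$ and hence for all $z$. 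Finally $\Phi[z]=z\phi+|z|^{2}z\,\hat\psi(|z|^{2})$ and $\Lambda[z]=\lambda+|z|^{2}\hat\mu(|z|^{2})$; smoothness of $z\mapsto|z|^{2}$ and of $\hat\psi,\hat\mu$ gives $\Phi[\,\cdot\,]\in C^\infty(B_{\C}(0,\delta_s),l^{2,s})$ and $\Lambda[\,\cdot\,]\in C^\infty(B_{\C}(0,\delta_s),\R)$, and \eqref{nlbs:2} together with the estimates \eqref{nlbs:3} (using $\sup_{0\le\epsilon<\delta_s^{2}}(\|\hat\psi(\epsilon)\|_{l^{2,s}}+\|\hat\psi'(\epsilon)\|_{l^{2,s}})\lesssim_s 1$ and the chain rule) is read off from these formulas.

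\emph{Main obstacle.} The delicate point is the gauge degeneracy: $\C\phi$ lies in both the kernel and the cokernel of $U-e^{\im\lambda}$, so one real obstruction remains after the single real parameter $\rho$ has been used; it is removed by solving only $\Pi\mathcal G=0$ and then recovering the last scalar component from $l^{2}$-norm conservation. The same mechanism is what forces the leading nonlinear coefficient $(\mathcal R(0,\phi),\phi)$ to be purely imaginary in the step above, which is exactly where the Hamiltonian form \eqref{eq:N} of the nonlinearity enters essentially.
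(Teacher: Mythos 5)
Your proof is correct and follows essentially the same route as the paper: the ansatz $\Phi[\rho]=\rho\phi+\rho^{3}\psi$, $\Lambda=\lambda+\rho^{2}\mu$ with dependence only through $\rho^{2}$, the boundedness of the reduced resolvent on $l^{2,s}$ (the paper's Lemma \ref{lem:Resbound}; you invoke Combes--Thomas, the paper gives a commutator/induction proof), an implicit-function/contraction step for $\psi$, and unitarity of $UN$ to settle the reality of the phase. The only real variation is where that last point enters: the paper solves for the complex number $e^{\im|z|^{2}\mu}$ directly from the $\phi$-projection and verifies $|e^{\im\Lambda[z]}|=1$ a posteriori, whereas you keep $\mu\in\R$ throughout, accept the one-dimensional real cokernel $\R\,e^{\im\lambda}\phi$, and remove it at the end with the same $l^{2}$-norm-conservation identity.
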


\begin{remark}
It is known that all $e^{\im \lambda}\in \sigma_{\mathrm{d}}$ are simple eigenvalues \cite{MSSSSdis}.
\end{remark}

\begin{remark}
Notice that for given $z\in B_{\C}(0,\delta_0)$, $u(t)=e^{\im \Lambda[z]t}\Phi[z]$ is a solution of nonlinear QW \eqref{eq:nlqw}.
\end{remark}

\begin{remark}
The above result is not restricted for the nonlinearity given in \eqref{eq:N}.
Following the argument of the proof, it is easy to prove Proposition \ref{prop:nlbs} for smooth local nonlinearity satisfying the gauge symmetry $e^{\im\theta}N(u)=N(e^{\im \theta}u)$.
\end{remark}

Before going to the explanation of the main theorem, we would like to introduce a symmetry of QWs.
Set the "zig-zag" transform $Z\in \mathcal{L}(l^2(\Z,\C^2))$ by
\begin{align*}
	Zu(x)=(-1)^xu(x).
\end{align*}
It is easy to show
\begin{align}\label{eq:anticomUZ}
	UZ=-ZU.
\end{align}
This symmetry is called Chiral symmetry in \cite{SK10PRE,Kitagawa12QIP}, sublattice symmetry in \cite{Kitagawa12QIP} and stroboscopic sublattice factorization in \cite{VFZF18Chaos}.
By \eqref{eq:anticomUZ}, we conclude:
\begin{proposition}
	$\sigma(U)=-\sigma(U)$ and $\sigma_{\mathrm{d}}(U)=-\sigma_{\mathrm{d}}(U)$.
	Moreover, if $\phi$ is the eigenfunction of $U$ associated to $e^{\im \lambda}\in \sigma_{\mathrm{d}}(U)$, $Z\phi$ is the eigenfunction of $U$ associated to $-e^{\im \lambda}$.
\end{proposition}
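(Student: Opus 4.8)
\emph{Proof proposal.} The plan is to read off everything from the anticommutation relation \eqref{eq:anticomUZ} together with the elementary fact that $Z$ is a unitary involution: indeed $Z^2=\mathrm{id}$ and $\|Zu\|_{l^2}=\|u\|_{l^2}$ for all $u$, so $Z=Z^{-1}=Z^*$ is unitary. First I would rewrite \eqref{eq:anticomUZ} in the conjugated form $Z^{-1}UZ=-U$, which exhibits $U$ and $-U$ as unitarily equivalent operators. Unitary equivalence is a homeomorphism-type relation that preserves the whole spectrum as well as its decomposition into essential and discrete parts (isolated eigenvalues of finite multiplicity), so $\sigma(U)=\sigma(-U)$ and $\sigma_{\mathrm d}(U)=\sigma_{\mathrm d}(-U)$.

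Next I would observe that $\sigma(-U)=-\sigma(U)$ and $\sigma_{\mathrm d}(-U)=-\sigma_{\mathrm d}(U)$. This is immediate from the identity $-U-\mu=-(U-(-\mu))$: the operator $-U-\mu$ is invertible (resp.\ $\mu$ is an isolated eigenvalue of $-U$ of finite multiplicity) if and only if $U-(-\mu)$ is invertible (resp.\ $-\mu$ is an isolated eigenvalue of $U$ of finite multiplicity). Combining with the previous step yields $\sigma(U)=-\sigma(U)$ and $\sigma_{\mathrm d}(U)=-\sigma_{\mathrm d}(U)$.

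For the statement about eigenfunctions, suppose $U\phi=e^{\im\lambda}\phi$ with $\phi\neq 0$ and $e^{\im\lambda}\in\sigma_{\mathrm d}(U)$. Applying \eqref{eq:anticomUZ} gives
\begin{align*}
U(Z\phi)=-ZU\phi=-e^{\im\lambda}Z\phi,
\end{align*}
and $Z\phi\neq 0$ because $Z$ is invertible. Since $-e^{\im\lambda}\in\sigma_{\mathrm d}(U)$ by the first part, $Z\phi$ is an eigenfunction of $U$ associated to $-e^{\im\lambda}$, as claimed.

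I do not expect any real obstacle here; the only points requiring a word of care are that both operations involved—conjugation by the unitary $Z$ and multiplication by $-1$—map isolated points of the spectrum to isolated points and preserve (geometric and algebraic) multiplicities, so that the discrete spectrum is genuinely carried to the discrete spectrum. In the present setting this is in any case moot, since all elements of $\sigma_{\mathrm d}(U)$ are simple eigenvalues by the cited result of \cite{MSSSSdis}.
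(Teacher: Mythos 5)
Your proposal is correct and follows the same route the paper intends: the paper simply asserts the proposition as an immediate consequence of the anticommutation relation \eqref{eq:anticomUZ}, and your argument (conjugation $Z^{-1}UZ=-U$ by the unitary involution $Z$, preservation of spectrum and discrete spectrum under unitary equivalence, and the computation $U(Z\phi)=-ZU\phi=-e^{\im\lambda}Z\phi$) is exactly the standard way to make that one-line deduction precise.
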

Since $Z^2=1$, and $Z$ is self-adjoint,
\begin{align*}
	P_{\pm}=\frac{1}{2}\(1\pm Z\),
\end{align*}
are orthogonal projections satisfying $P_++P_-=1$, $P_+P_-=0$.
Further, we have
\begin{align*}
	UP_\pm =P_\mp U\text{ and }\ N(P_\pm \cdot) = P_\pm N(\cdot).
\end{align*}
Thus, letting $u(t;u_0)$ be the solution of nonlinear QW \eqref{eq:nlqw} with $u(0)=u_0$, we have
\begin{align*}
	u(t;u_0) = u(t;P_+u_0)+u(t;P_-u_0),
\end{align*}
with $u(t;P_\pm u_0)=P_\pm u(t;P_\pm u_0)$ if $t$ is even and $u(t;P_\pm u_0)=P_\mp u(t;P_\pm u_0)$ if $t$ is odd.
This implies that nonlinear QW \eqref{eq:nlqw} consists of two non-interacting dynamics.
In the following, we will only consider solutions with $u(0)=P_+u(0)$.
Further, if we understand the dynamics of $u(t)$ with $t\in 2\Z$, then we can compute $u(t+1)$ easily by \eqref{eq:nlqw}.
Therefore, we will only consider $t\in 2\Z$.
Having this in mind, we define
\begin{align*}
	\mathcal{U}(u):=UN(UN(u)),
\end{align*}
and by retaking the time, we consider the following evolution equation
\begin{align}\label{eq:nlqw2}
	u(t+1)=\mathcal{U}(u(t)),\ u(0)=u_0\in l_+^2(\Z,\C^2):=P_+l^2(\Z,\C^2),
\end{align}
instead of \eqref{eq:nlqw}.
Notice that $\Phi_+[z]:=P_+\Phi[z]$ satisfies
\begin{align}\label{eq:Phiplus}
	\mathcal{U}(\Phi_+[z])=e^{\im \Lambda_+[z]}\Phi_+[z],
\end{align}
where $\Lambda_+[z]:=2\Lambda[z]$ so $u(t)=e^{\im \Lambda_+[z]t}\Phi_+[z]$ is a solution of \eqref{eq:nlqw2} for $z\in B_{\C}(0,\delta_0)$.

For the main result.
We need additional assumptions for the linear evolution operator $U$ and the nonlinear term $N$.
First, for the nonlinear term, we assume:
\begin{assumption}\label{ass:nonlinear}
$g'(0)=g''(0)=0$.
\end{assumption}
Thus, the typical nonlinearity we have in mind is $g(s)=s^3$.

For the spectrum of $U$, we assume:
\begin{assumption}\label{ass:linear}
	$U$ is generic and have exactly two discrete spectrum.
\end{assumption}
An explicit example we have in mind is given in \cite{KLS13QIP} where the coin is perturbed only at the origin.
We set $\pm e^{\im \lambda}$ be the two discrete spectrum of $U$ with eigenvectors $\phi$ and $Z\phi$.
Then, $e^{2\im \lambda}$ will be the unique discrete spectrum of $U^2$ restricted on $l_+^2(\Z,\C^2)$.
We normalize the corresponding eigenvector $\phi_+=P_+\phi$ so that $\|\phi_+\|_{l^2}=1$.
We set 
\begin{align*}
	P_c=1-(\cdot,\phi_+)\phi_+.
\end{align*}

We are now in the position to state our main theorem.

\begin{theorem}\label{thm:main}
	There exists $\epsilon_0>0$ s.t.\ if $u_0\in B_{l^2_+(\Z,\C^2)}(0,\epsilon_0)$, then there exist $\eta_+\in l^2_+(\Z,\C^2)$, $\rho\geq 0$ and $z\in l^\infty(\Z\cap[0,\infty),\C)$  s.t.
	\begin{align}\label{eq:thm:main:1}
\lim_{t\to\infty}\|u(t)-\Phi_+[z(t)]-U_\infty^{2t}\eta_+\|_{l^2}= 0,
	\end{align}
where $u(t)$ is the solution of nonlinear QW \eqref{eq:nlqw2} and moreover, we have
\begin{align}\label{eq:thm;main:2}
	\lim_{t\to\infty}|z(t)|= \rho \lesssim \|u_0\|_{l^2},\ \|\eta_+\|_{l^2}\lesssim \|P_c u_0\|_{l^2}.
\end{align}
\end{theorem}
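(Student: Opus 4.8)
The plan is to run the standard modulation-plus-dispersion scheme for asymptotic stability of solitons in discrete time, using Assumption \ref{ass:nonlinear} to avoid both a Fermi-golden-rule computation and any normal-form transformation. The starting observation is that $g(0)=g'(0)=g''(0)=0$ forces $N(u)=u+O(|u|^{7})$ and $DN(u)=\mathrm{id}+O(|u|^{6})$, so that $D\mathcal{U}(0)=U^{2}$, $\mathcal{U}(u)-U^{2}u=O(\|u\|^{7})$, and $D\mathcal{U}(\Phi_{+}[z])-U^{2}$ is a short-range, multiplication-type operator of size $O(\|\Phi_{+}[z]\|_{l^{\infty}}^{6})=O(|z|^{6})$; this high order is exactly what makes every interaction term short range (a cubic or quintic nonlinearity would be long range or critical for the $\langle t\rangle^{-1/3}$ dispersion of quantum walks). \emph{Step 1 (modulation).} For $u_{0}$ small, write the solution of \eqref{eq:nlqw2} as $u(t)=\Phi_{+}[z(t)]+\eta(t)$, fixing $z(t)\in\C$ by the implicit function theorem so that $\eta(t)$ is symplectically orthogonal to the two-dimensional co-kernel $K_{z(t)}$ attached by Lemma \ref{lem:symporth} to the linearized operator $\mathcal{L}_{z(t)}:=e^{-\im\Lambda_{+}[z(t)]}D\mathcal{U}(\Phi_{+}[z(t)])$ at its eigenvalue $1$; here $K_{z}$ is $\mathcal{L}_{z}^{*}$-invariant and gauge-covariant, $K_{e^{\im\theta}z}=e^{\im\theta}K_{z}$, reflecting the Jordan-block generalized kernel $\mathrm{span}_{\R}\{\im\Phi_{+}[z],\partial_{|z|}\Phi_{+}[z]\}$, and by Assumption \ref{ass:linear} and continuity in $z$ this is the only discrete spectrum of $\mathcal{L}_{z}$ for $|z|$ small and there are no threshold resonances. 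By \eqref{nlbs:3} and a Pythagoras-type identity, $\|\eta(0)\|_{l^{2}}\lesssim\|P_{c}u_{0}\|_{l^{2}}$ and $|z(0)|\lesssim\|u_{0}\|_{l^{2}}$.

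\emph{Step 2 (the coupled system).} Using $\mathcal{U}(\Phi_{+}[z])=e^{\im\Lambda_{+}[z]}\Phi_{+}[z]=\Phi_{+}[e^{\im\Lambda_{+}[z]}z]$ and Taylor expansion, one obtains, with $\zeta(t):=e^{\im\Lambda_{+}[z(t)]}z(t)$,
\begin{align*}
\eta(t+1)=e^{\im\Lambda_{+}[z(t)]}\mathcal{L}_{z(t)}\eta(t)+\mathcal{F}(t),\qquad z(t+1)-\zeta(t)=\mathcal{G}(t).
\end{align*}
The decisive point is a cancellation: the gauge covariance and the $\mathcal{L}_{z}^{*}$-invariance of $K_{z}$ from Lemma \ref{lem:symporth} make the term of the bare modulation equation that is linear in $\eta$ with an $O(1)$ coefficient vanish identically (it equals $\langle\eta(t),\mathcal{L}_{z(t)}^{*}\xi[z(t)]\rangle$ with $\mathcal{L}_{z(t)}^{*}\xi[z(t)]\in K_{z(t)}$), leaving, for a fixed $\sigma>\tfrac12$, $|\mathcal{G}(t)|\lesssim\|\langle x\rangle^{-\sigma}\eta(t)\|_{l^{2}}^{2}$ (after absorbing into the left side the bilinear-in-$(\eta,\mathcal{G})$ and the $O(|z(t)|^{2})|\mathcal{G}(t)|$ pieces, and estimating the genuinely nonlinear part --- localized by the weight in $\xi$ --- by $O(\|\langle x\rangle^{-\sigma}\eta(t)\|_{l^{2}}^{2})$ thanks to the septic order). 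Feeding this back, $\mathcal{F}(t)=-\mathcal{G}(t)\phi_{+}+(\text{non-localized, septic in }(\Phi_{+}[z(t)],\eta(t)))+O(|z(t)|^{2}|\mathcal{G}(t)|+|\mathcal{G}(t)|^{2})$, so the localized part of $\mathcal{F}(t)$ has $l^{2}$-norm $\lesssim\|\langle x\rangle^{-\sigma}\eta(t)\|_{l^{2}}^{2}$.

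\emph{Step 3 (linear estimates).} We prove, uniformly for $z$ in a small ball, the discrete dispersive bound $\|\mathcal{L}_{z}^{\,t}P_{c}^{z}\|_{l^{1}\to l^{\infty}}\lesssim\langle t\rangle^{-1/3}$, the associated discrete homogeneous and inhomogeneous Strichartz estimates, and the homogeneous and doubly weighted ($\langle x\rangle^{-\sigma}\,\cdot\,\langle x\rangle^{-\sigma}$) local-smoothing estimates. These follow from the linear dispersive theory of quantum walks \cite{MSSSSdis,MSSSS18DCDS,ST12JFA} for $U_{\infty}^{2}$, a Born-series perturbation argument ($C-C_{\infty}$ is short range by Assumption \ref{ass:1}, and $\mathcal{L}_{z}$ differs from a phase times $U^{2}$ by the even smaller short-range operator of Step 1), and genericity (Assumption \ref{ass:linear}), which excludes embedded eigenvalues and threshold resonances so that the limiting absorption principle holds up to the spectrum. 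The time-dependence of $\mathcal{L}_{z(t)}$ and the rotation of $\arg z(t)$ are handled via the conjugation identity $\mathcal{L}_{e^{\im\theta}z}=R_{\theta}\mathcal{L}_{z}R_{-\theta}$ (with $R_{\theta}$ multiplication by $e^{\im\theta}$) together with the convergence of $|z(t)|$ obtained in Step 4, so that the time-ordered propagator is, up to small errors, a rotation-conjugate of a frozen-coefficient propagator; we also record the chain of bounded invertible wave operators relating the continuous spectral subspaces of $\mathcal{L}_{z_{\infty}}$ ($|z_{\infty}|=\rho$), $U^{2}$ and $U_{\infty}^{2}$.

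\emph{Step 4 (bootstrap and conclusion).} We run a continuity argument on $\mathcal{M}(T):=\sup_{0\le t\le T}\|\eta(t)\|_{l^{2}}+\|\eta\|_{\mathrm{Str}([0,T])}+\big(\sum_{t=0}^{T}\|\langle x\rangle^{-\sigma}\eta(t)\|_{l^{2}}^{2}\big)^{1/2}+\sup_{0\le t\le T}|z(t)|$. Inserting Step 3 into the discrete Duhamel formulas for the system of Step 2: the localized part of $\mathcal{F}$ has $l^{2}$-norm summable in $t$ and therefore contributes $\lesssim\mathcal{M}(T)^{2}$ to every norm in $\mathcal{M}$ (directly, or by duality against local smoothing), while the non-localized septic part lands in the dual Strichartz class with a gain --- the one place Assumption \ref{ass:nonlinear} is essential --- so $\mathcal{M}(T)\lesssim\|u_{0}\|_{l^{2}}+\mathcal{M}(T)^{2}$, hence $\mathcal{M}(\infty)\lesssim\|u_{0}\|_{l^{2}}$. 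Then $\sum_{t}|\mathcal{G}(t)|\lesssim\sum_{t}\|\langle x\rangle^{-\sigma}\eta(t)\|_{l^{2}}^{2}<\infty$ makes $\big||z(t+1)|-|z(t)|\big|\le|\mathcal{G}(t)|$ summable, so $|z(t)|\to\rho$ with $\rho\lesssim|z(0)|\lesssim\|u_{0}\|_{l^{2}}$ (whereas $z(t)$ itself does not converge, its argument winding by $\approx\Lambda_{+}[z(t)]\to2\lambda$ per step); and since the inhomogeneity of the $\eta$-equation is summable in $l^{1}_{t}l^{2}_{x}$ plus dual Strichartz, the limit $\eta_{\mathcal{L}}:=\lim_{t}\mathcal{L}_{z_{\infty}}^{-t}\eta(t)$ exists in $l^{2}$ (for the dual-Strichartz part via a discrete Christ--Kiselev argument), and transporting it through the wave operators of Step 3 yields $\eta_{+}\in l^{2}_{+}$ with $\|\eta(t)-U_{\infty}^{2t}\eta_{+}\|_{l^{2}}\to0$ and $\|\eta_{+}\|_{l^{2}}\lesssim\|\eta(0)\|_{l^{2}}\lesssim\|P_{c}u_{0}\|_{l^{2}}$; with $u(t)=\Phi_{+}[z(t)]+\eta(t)$ this is \eqref{eq:thm:main:1}--\eqref{eq:thm;main:2}. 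The main obstacle is Step 3 together with this discrete bookkeeping: establishing the full dispersive/Strichartz/local-smoothing suite for the \emph{non-self-adjoint, time-dependent} operator $\mathcal{L}_{z(t)}$ uniformly in $z$, ruling out embedded eigenvalues and threshold resonances by Assumption \ref{ass:linear}, controlling the slowly varying parameter through the conjugation structure and the convergence of $|z(t)|$, and replacing the continuous-time Christ--Kiselev and integration-by-parts arguments by discrete (Abel-summation) ones; by comparison the nonlinear closure is soft, since Assumption \ref{ass:nonlinear} makes the genuinely nonlinear terms short range and Lemma \ref{lem:symporth} removes the dangerous linear-in-$\eta$ modulation term, so that no Fermi-golden-rule or normal-form step is needed.
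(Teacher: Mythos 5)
Your overall architecture (modulation with symplectic orthogonality, cancellation of the linear-in-$\eta$ term in the modulation equation via Lemma \ref{lem:symporth}, dispersive/Strichartz/local-smoothing estimates, continuity argument, and wave-operator completeness at the end) coincides with the paper's. The genuine gap is in your Step 3. You propose to run Duhamel with the time-dependent, non-self-adjoint propagator $\mathcal{L}_{z(t)}$ and to establish for it the full suite of dispersive, Strichartz and doubly weighted local-smoothing bounds, uniformly in $z$, by a Born-series perturbation of the linear theory of \cite{MSSSSdis}. This is not a routine perturbation: the linear theory you cite is for the fixed unitary $U$ (the limiting absorption principle, absence of resonances, and Kato smoothness all rest on the spectral calculus of a unitary operator), and no argument is given for transferring it to the $\R$-linear, non-normal family $\mathcal{L}_z$, let alone to the time-ordered product $\mathcal{L}_{z(T)}\cdots\mathcal{L}_{z(0)}$. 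Worse, your treatment of the time dependence explicitly invokes ``the convergence of $|z(t)|$ obtained in Step 4,'' while Step 4's bootstrap needs Step 3's estimates; as written the argument is circular, and the claim that the time-ordered propagator is, up to small errors, a rotation-conjugate of a frozen-coefficient propagator would itself require summable control of $z(t+1)-e^{\im \Lambda_+[z(t)]}z(t)$, which is again a Step 4 output.

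The paper avoids all of this: after applying $P_c$, the $\eta$-equation is written as \eqref{eq:eta} with the \emph{fixed unitary} $U^2$ as propagator, and the difference $(L[z]-U^2)R[z]\eta=F_2[z,\eta]$ is treated as a source term that is localized and of size $O(|z|^2)\|\eta\|_{l^{2,-s}}$ (see \eqref{eq:F2bound}); it is then absorbed in the bootstrap through the inhomogeneous Kato-smoothing estimate \eqref{eq:inhom2} of Proposition \ref{prop:inhomest}, using only the smallness of $\epsilon$. Only the linear theory for $U$ itself is ever needed, and the wave operator at the end relates $U^2$ to $U_\infty^2$ rather than $\mathcal{L}_{z_\infty}$ to $U_\infty^2$. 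If you reorganize your Step 2 so that Duhamel is taken with respect to $U^2$ and $\mathcal{L}_{z(t)}-U^2$ is moved into the nonlinearity --- your own observation that this operator is short-range and $O(|z|^6)$, a fortiori $O(|z|^2)$, is exactly what makes this work --- then your Steps 1, 2 and 4 go through essentially as you describe and the problematic Step 3 reduces to the estimates the paper actually proves for $U$.
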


The equality \eqref{eq:thm:main:1} states that all small solutions asymptotically become a sum of nonlinear bound state $\Phi_+[z]$ with $|z|=\rho$ and linear scattering wave $U_\infty^{2t}\eta_+$ as expected from soliton resolution conjecture.  
Moreover, from \eqref{eq:thm;main:2} we see that the soliton part $\Phi_+[z]$ converges modulo phase, and the norm of the scattering wave is comparable to the continuous spectrum component $P_cu_0$ of the initial data.

By some additional argument, one can show the "orbital stability" of small nonlinear bound states of nonlinear QWs.
Here, the precise definition of orbital stability is given by the following:
\begin{definition}
Let $\Phi\in l^2$ and $\Lambda\in\R$ and suppose $e^{\im \Lambda t}\Phi$ be a solution of nonlinear QWs \eqref{eq:nlqw2}.
We say $e^{\im \Lambda t}\Phi$ (or simply just $\Phi$) is orbitally stable if for all $\epsilon>0$, there exists $\delta>0$ s.t. if $\|u_0-\Phi\|_{l^2}<\delta$, then $\sup_{t>0} \inf_{\theta\in\R}\|u(t)-e^{\im \theta}\Phi\|_{l^2}<\epsilon$.
\end{definition} 

\begin{corollary}\label{cor:orb}
There exists $\epsilon_0$ s.t. if $|z|<\epsilon_0$, then $e^{\im \Lambda_+[z]}\Phi_+[z]$ is orbitally stable.
\end{corollary}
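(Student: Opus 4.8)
The plan is to derive the corollary from Theorem~\ref{thm:main}, $l^2$-conservation and the scaling relations \eqref{nlbs:2}--\eqref{nlbs:3}, plus one quantitative refinement of Theorem~\ref{thm:main} taken from its proof. Fix $z_*$ with $0<|z_*|<\epsilon_0$ (if $z_*=0$ then $\Phi_+[z_*]=0$ and orbital stability is immediate from $\|u(t)\|_{l^2}=\|u_0\|_{l^2}$), write $\Phi_*:=\Phi_+[z_*]$, $\Lambda_*:=\Lambda_+[z_*]$, and let $\epsilon>0$. We are free to shrink $\epsilon_0$, i.e.\ to assume $|z_*|$ as small as we wish, and then to choose $\delta=\delta(\epsilon,z_*)$. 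For $\|u_0-\Phi_*\|_{l^2}<\delta$ we have $\|u_0\|_{l^2}\le\|\Phi_*\|_{l^2}+\delta\lesssim|z_*|+\delta$, so for $\epsilon_0,\delta$ small Theorem~\ref{thm:main} applies and produces $z(\cdot)$ and $\eta_+$. By \eqref{nlbs:3} (applied to $\Phi$ and composed with the bounded projection $P_+$), $\|\Phi_*-z_*\phi_+\|_{l^2}\lesssim|z_*|^3$, hence $\bigl|\,\|\Phi_*\|_{l^2}-|z_*|\,\bigr|\lesssim|z_*|^3$; and since $P_c\phi_+=0$, $\|P_c\Phi_*\|_{l^2}=\|P_c(\Phi_*-z_*\phi_+)\|_{l^2}\lesssim|z_*|^3$, so $\|P_cu_0\|_{l^2}\lesssim\delta+|z_*|^3$ and, by \eqref{eq:thm;main:2}, $\|\eta_+\|_{l^2}\lesssim\delta+|z_*|^3$.

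The refinement needed is the uniform-in-$t$ form of \eqref{eq:thm:main:1}: the proof of Theorem~\ref{thm:main} goes through a modulation decomposition $u(t)=\Phi_+[z(t)]+\xi(t)$ with a priori/scattering estimates, and therefore yields a \emph{finite}
\[
M:=\sup_{t\ge 0}\bigl\|u(t)-\Phi_+[z(t)]-U_\infty^{2t}\eta_+\bigr\|_{l^2}\ \lesssim\ \|P_cu_0\|_{l^2}+\|u_0\|_{l^2}^2\ \lesssim\ \delta+|z_*|^3,
\]
(read off from the control of $\xi(0)$ and of the Duhamel tail $\xi(t)-U_\infty^{2t}\eta_+$), not merely the statement that the left side tends to $0$. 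Granting this, I bound $|z(t)|$ uniformly by mass conservation: since $\|u(t)\|_{l^2}=\|u_0\|_{l^2}$ and $U_\infty$ is unitary, $\bigl|\,\|\Phi_+[z(t)]\|_{l^2}-\|u_0\|_{l^2}\,\bigr|\le\|u(t)-\Phi_+[z(t)]\|_{l^2}\le\|\eta_+\|_{l^2}+M$, while \eqref{nlbs:3} gives $\bigl|\,\|\Phi_+[z(t)]\|_{l^2}-|z(t)|\,\bigr|\lesssim|z(t)|^3$; combined with $|z(t)|$ small (it lies in the domain of $\Phi_+$) these also force $|z(t)|\lesssim\|u_0\|_{l^2}$, and hence
\[
\sup_{t\ge 0}\bigl|\,|z(t)|-\|u_0\|_{l^2}\,\bigr|\ \lesssim\ \|u_0\|_{l^2}^3+\|\eta_+\|_{l^2}+M\ \lesssim\ \delta+|z_*|^3 .
\]
Since $\bigl|\,\|u_0\|_{l^2}-|z_*|\,\bigr|\le\|u_0-\Phi_*\|_{l^2}+\bigl|\,\|\Phi_*\|_{l^2}-|z_*|\,\bigr|<\delta+C|z_*|^3$, we get $\sup_{t\ge 0}\bigl|\,|z(t)|-|z_*|\,\bigr|\lesssim\delta+|z_*|^3$; in particular $z(t)\ne 0$ for all $t$ once $\delta\ll|z_*|$.

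To conclude, note that by \eqref{nlbs:2} (applied to $\Phi_+=P_+\Phi$) we have $\Phi_+[w]=e^{\im\arg w}\Phi_+[|w|]$ for $w\ne 0$, and by the second bound in \eqref{nlbs:3} the map $r\mapsto\Phi_+[r]$ is Lipschitz near $0$ with constant $\lesssim 1$. Taking $\theta(t):=\arg z(t)-\arg z_*$, so that $\Phi_+[z(t)]-e^{\im\theta(t)}\Phi_*=e^{\im\arg z(t)}\bigl(\Phi_+[|z(t)|]-\Phi_+[|z_*|]\bigr)$, we obtain
\[
\inf_{\theta\in\R}\bigl\|u(t)-e^{\im\theta}\Phi_*\bigr\|_{l^2}\le\bigl\|\Phi_+[|z(t)|]-\Phi_+[|z_*|]\bigr\|_{l^2}+\|\eta_+\|_{l^2}+M\ \lesssim\ \bigl|\,|z(t)|-|z_*|\,\bigr|+\|\eta_+\|_{l^2}+M\ \lesssim\ \delta+|z_*|^3 .
\]
Choosing $\epsilon_0$ so small that the implicit constant times $|z_*|^3$ is $<\epsilon/2$ whenever $|z_*|<\epsilon_0$, and then $\delta$ so small (in particular $\delta\ll|z_*|$, and below the thresholds used above) that the remaining contribution is $<\epsilon/2$, gives $\sup_{t>0}\inf_{\theta\in\R}\|u(t)-e^{\im\theta}\Phi_*\|_{l^2}<\epsilon$, i.e.\ orbital stability of $e^{\im\Lambda_*t}\Phi_*$.

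The genuine difficulty is exactly the bound on $M$: the statement of Theorem~\ref{thm:main} records only the limit \eqref{eq:thm:main:1}, whereas the argument needs the decomposition error to be \emph{uniformly} small and controlled by $\|P_cu_0\|_{l^2}$ (up to higher-order terms in $\|u_0\|_{l^2}$). This is precisely the ``additional argument'' alluded to before the corollary, obtained by keeping track, throughout the proof of Theorem~\ref{thm:main}, of the a priori estimates on the modulation remainder $\xi(t)$ and on the Duhamel remainder $\xi(t)-U_\infty^{2t}\eta_+$; once $M$ is in hand, the rest is routine bookkeeping with $l^2$-conservation and \eqref{nlbs:2}--\eqref{nlbs:3}, using the freedom to make $|z_*|$ small.
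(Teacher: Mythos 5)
Your argument does not establish orbital stability in the sense of the paper's definition, because of a quantifier problem created by routing the estimate through $\|P_cu_0\|_{l^2}$ and the scattering data $\eta_+$. You bound $\|P_cu_0\|_{l^2}\lesssim\delta+|z_*|^3$ (the $|z_*|^3$ being the genuine size of $P_c\Phi_*$), hence $\|\eta_+\|_{l^2}\lesssim\delta+|z_*|^3$, and you arrive at $\sup_t\inf_\theta\|u(t)-e^{\im\theta}\Phi_*\|_{l^2}\lesssim\delta+|z_*|^3$. For a \emph{fixed} bound state with $|z_*|<\epsilon_0$, orbital stability requires that for \emph{every} $\epsilon>0$ there is a $\delta>0$ making this supremum $<\epsilon$; when $\epsilon\ll|z_*|^3$ no choice of $\delta$ does so. Your fix --- ``choose $\epsilon_0$ so small that $C|z_*|^3<\epsilon/2$'' --- inverts the quantifiers: $\epsilon_0$ and $z_*$ are fixed before $\epsilon$ is given. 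The same defect sits in the asserted bound $M\lesssim\|P_cu_0\|_{l^2}+\|u_0\|_{l^2}^2$ (note moreover that $\|u_0\|_{l^2}^2\sim|z_*|^2$ is not $\lesssim|z_*|^3$), and that bound is in any case only claimed to be ``read off'' from the proof of Theorem \ref{thm:main} rather than derived.

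The paper's proof avoids both problems by never leaving the modulation coordinates. Lemma \ref{lem:mod}/\ref{lem:modcoor} applied at $t=0$ gives $|z(0)-z_0|+\|\eta(0)\|_{l^2}\lesssim\|u_0-\Phi_+[z_0]\|_{l^2}$: the relevant quantity is $\eta(0)=P_c\bigl(u_0-\Phi_+[z(0)]\bigr)$, the projection of the distance to the nearby \emph{soliton}, not $P_cu_0$, so the $O(|z_0|^3)$ contribution of $P_c\Phi_+[z_0]$ is absorbed into $\Phi_+[z(t)]$ rather than counted as radiation. The uniform-in-$t$ control you call the ``genuine difficulty'' is then already recorded in Corollary \ref{cor:main}: since $\mathrm{Stz}\supset l^\infty l^2$, estimate \eqref{eq:spest1} gives $\sup_t\|\eta(t)\|_{l^2}\lesssim\|\eta(0)\|_{l^2}$, and \eqref{eq:discest2} controls $\sup_t\bigl||z(t)|-|z(0)|\bigr|$; there is no need to pass through $\eta_+$ or to bound $u(t)-\Phi_+[z(t)]-U_\infty^{2t}\eta_+$ uniformly. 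With these, $\sup_t\inf_\theta\|u(t)-e^{\im\theta}\Phi_+[z_0]\|_{l^2}\lesssim\|u_0-\Phi_+[z_0]\|_{l^2}$ with no additive term depending only on $|z_0|$, and choosing $\delta\sim\epsilon$ closes the argument. If you replace your step ``$\|\eta_+\|\lesssim\|P_cu_0\|$'' by ``$\sup_t\|\eta(t)\|\lesssim\|\eta(0)\|\lesssim\delta$'', the remaining bookkeeping in your last paragraph (the phase choice $\theta(t)=\arg z(t)-\arg z_*$ and the Lipschitz bound on $r\mapsto\Phi_+[r]$ from \eqref{nlbs:3}) is sound.
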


We now explain the technical hypothesis Assumption \ref{ass:nonlinear} and \ref{ass:linear}.
First, if we set $g(s)= s^p$ (putting aside the fact that $g$ is not smooth when $p>0$ is not an integer), the problem becomes more hard when we let $p$ smaller.
This can be understood from the the trivial inequality $\epsilon^{p_1}<\epsilon^{p_2}$ for $p_1>p_2$ and $\epsilon\in (0,1)$, the embedding $l^2\hookrightarrow l^\infty$ and the fact that we are considering small in $l^2$ solutions.
Our assumption corresponds to $p\geq 3$ in the above nonlinearity.
This restriction is needed to close the estimates using Strichartz estimate (Proposition \ref{eq:stzkato}), which is a consequence of the dispersive estimate $\|U^{t}P_c u_0\|_{l^\infty}\lesssim \<t\>^{-1/3}\|u_0\|_{l^\infty}$ (Proposition \ref{Prop:disp}).
Same constraint is given in the study of asymptotic stability for discrete nonlinear Schr\"odinger equations \cite{CT09SIMA,KPS09SIAM}, which have the same dispersive estimate $\|e^{\im t H_d}P_c u_0\|_{l^\infty}\lesssim \<t\>^{-1/3}\|u_0\|_{l^1}$, where $H_d=-\Delta_d+V$ is the discrete Schr\"odinger operator.
For the nonlinear Schr\"odinger equations such restriction is relaxed to $p\geq 2$ \cite{Mizumachi08JMKU}, due to the fact that in the continuous case, we have better decay $\|e^{\im t H}P_c u_0\|_{L^\infty}\lesssim |t|^{-1/2}\|u_0\|_{L^1}$, where $H=-\Delta+V$ is the Schr\"odinger operator, see \cite{GS04CMP}.

Since the cases $p=1,2$ appear more naturally in physics, there are several attempts to lower $p$.
However, it seems that with current technology, the only way to lower $p$ below 3 is to strengthen the assumption (such as taking $u_0 \in l^{1,1}$) or weakening the result (such as showing the convergence $u(t)-\Phi[z]\to 0$ in a compact domain).
For example, one of the former type result is \cite{MP12DCDS} proving asymptotic stability for $p>2.75$ for discrete nonlinear Schr\"odinger equations by adopting refined decay estimates of Mielke and Patz \cite{MP10AA}.
On the other hand, one of the latter results is \cite{CM19SIMA} proving the convergence in the exponentially weighted space for all $p>0$ for nonlinear Schr\"odinger equation with delta potential using the virial argument developed in the important paper \cite{KMM17JAMS}.
For QWs, we are not aware of any viral type inequalities, thus the next step may be to prove asymptotic stability for $p=2$ with additional assumption on the initial data.
The case $p=1$ seems to be completely out of reach.
From the decay estimate, one expects that the scattering wave needs to be modified already for $p\leq 4/3$ and so $p=1<4/3$ is not even in the threshold of long range scattering.
Thus, such result will be very interesting even if we are considering the case with no bound states.

The assumption that $U$ is generic (i.e.\ $U$ has no edge resonance) is needed for local decay type estimate.
In this paper, following \cite{CT09SIMA} we have used Kato smoothness for the $l^2$-in-time estimate.
Here, Kato smoothness for unitary operator is prepared and since it may have independent interest (for application to smooth scattering theory for QWs, say), we have given the equivalent conditions in the appendix of this paper, see Theorem \ref{thm.KSU.4.1}.

The assumption that $U$ have precisely $2$ eigenvalues is equivalent to $U^2$ having precisely $1$ eigenvalue on $l_+^2(\Z,\C^2)$.
When, there is no eigenvalue, one can show the scattering by the argument of \cite{MSSSS18DCDS} using the estimates given in \cite{MSSSSdis}.
On the other hand, if $U^2$ have more than two eigenvalues on $l_+^2(\Z,\C^2)$, we expect that the quasi-periodic (in time) solutions which exist in the linear case will disappear due to the nonlinear Fermi Golden Rule \cite{Sigal93CMP,SW99IM}.
For results in this direction, see \cite{CM20DCDS} and reference therein.

We next explain the strategy of the proof of the main results.
The proof of the existence of small nonlinear bound states is simple and we just use standard contraction mapping argument.
However, the explicit exponential decay rate for the eigenfunctions given in Proposition \ref{prop:ede} seems not to have appeared in the literature.
Thus, even though the proof is standard, we have put the proof of Proposition \ref{prop:ede} in the appendix of this paper since there may be some independent interest.

The proof of Theorem \ref{thm:main} consists by the following three steps:
\begin{enumerate}
\item Introduce a modulation coordinate $(z,\eta)\mapsto u=\Phi_+[z]+\xi$,
\item prepare linear estimates,
\item and bound $\|\xi\|_{l^p([0,N]\cap\Z,l^{q,\sigma})}$ for some specific $p,q,\sigma$ by induction in $N$,
\end{enumerate}
as established in \cite{SW90CMP} (of course the induction in step 3 is replaced by continuity argument).

In this paper, step.2 is based on the dispersive estimate and the integral representation of the resolvent given in \cite{MSSSSdis}.
This allows us to prepare all necessary estimates by standard duality argument and Christ-Kiselev lemma.
As noted above, for the Kato smoothness $\|U^t P_c u_0\|_{l^2l^{2,-\sigma}}\lesssim \|u_0\|_{l^2}$, we have provided equivalent condition in Theorem \ref{thm.KSU.4.1} and sufficient condition in corollary \ref{cor:KS} in the appendix of this paper.
After the preparation of step 1 and 2, step 3 is more or less standard, we note that the idea using Kato smoothness type estimates comes from \cite{CT09SIMA}.

We would like to emphasis that the novelty of the proof of Theorem \ref{thm:main} is in step 1.
Before explaining our case, we review the modulation argument in the continuous case following \cite{GNT04IMRN}.
First, let us consider the linear Schr\"odinger equation $\im \partial_t u = H u$ where the Schr\"odinger operator having exactly one negative eigenvalue ($-\lambda$).
In this case, it is obvious that the solution having the eigenfunction $\phi$ as an initial data do not decrease because it will evolve as $e^{\im \lambda t}\phi$.
So, it is natural to decompose $u=z \phi + \xi$, where $z=(u,\phi)$ and $\xi =u-(u,\phi)\phi=:P_c u$, where we have normalized $\phi$.
Since the continuous component $\xi$ satisfies $\im \partial_t \xi =HP_c\xi$, we can use the decay estimates of $HP_c$ to show the decay (or scattering) of $\xi$.
Next, we consider the nonlinear Schr\"odinger equation $\im \partial_t u = Hu +g(|u|^2)u$ having nonlinear bound states $e^{\im \lambda[z]t}\phi[z]$ for small $z$ (we have $\phi[z]= z\phi +o(z)$).
As the linear case, we would like to write $u=\phi[z]+\xi$.
The "continuous part" $\xi$ should be determined by some orthogonality condition expressed by a subspace $\mathcal{H}_c\subset L^2$, which should be similar to the linear case $P_cL^2$.
The linearized equation for $\xi$ is $$\partial_t\xi = L_{\mathrm{cont}}[z]\xi,\quad \text{where}\quad L_{\mathrm{cont}}[z]=-\im \(H + g(|\phi(z)|^2)+2g'(|\phi(z)|^2)\Re(\phi(z)\overline{\cdot})\phi(z)\).$$
Since, $L_{\mathrm{cont}}[z]$ is only $\R$-linear due to the complex conjugate $\bar{\cdot}$ and depends on $z$, the "continuous subspace" $\mathcal{H}_c=\mathcal{H}_c[z]$ should also depend on $z$ and it is only required to be $\R$-linear.
The specific choice of the orthogonality condition given in \cite{GNT04IMRN} is
\begin{align}\label{eq:contspacecont}
\mathcal{H}_c[z]:=\{u\in L^2\ |\ \forall w\in\C,\ \<u,\im D\phi[z]w\>=0\},
\end{align}
where $D\phi[z]w=\left.\frac{d}{d\epsilon}\right|_{\epsilon=0}\phi[z+\im w]$ (notice that $D\phi[z]$ is only $\R$-linear).
The continuous space" $\mathcal{H}_c[z]$  can also viewed as the symplectic orthogonal space of the soliton manifold with the symplectic form $\Omega =\<\cdot,\im \cdot\>$ or the condition to eliminate the first order term of $\xi$ in the expansion of the Hamiltonian (or energy) $E(\phi[z]+\xi)$, see e.g.\ \cite{Maeda17SIMA}.
In any case, the point of this choice is that $\mathcal{H}_c[z]$ is near $P_c L^2$ in the sense that $\mathcal{H}_c[0]=P_cL^2$ and moreover it is compatible with the linear evolution of $\xi$.
That is, if $\xi\in \mathcal{H}_c[z]$, we have $\partial_t \xi = L_{\mathrm{cont}}[z] \xi \in \mathcal{H}_c[z]$.

We now come back to the nonlinear QWs.
The difficulty for QWs is that QWs seems not to be Hamilton equations (actually, the author do not even know what "Hamilton equation" means for time discretized system like QWs).
The effect of the lack of Hamiltonian/Energy immediately appears in the study of orbital stability of bound states.
For Schr\"odinger and discrete Schr\"odinger equations the orbital stability of bound states can be proved by the fact that bound states are trapped by the energy under the mass constraint \cite{RW88PD,FO03DIE}.
However, for QWs we cannot use such argument because we are not aware of conservation quantity corresponding to the energy, and in this paper we have proved orbital stability as a consequence of asymptotic stability which give a sieve constraint in the nonlinearity as we have discussed above.
Now, going back to the problem how to choose the "continuous space", first linearizing the dynamics of nonlinear QWs, we have
 $\xi(t+1)=L[z(t)]\xi(t)$, where $L[z]=D\mathcal{U}(\Phi_+[z])$.
 As the nonlinear Schr\"odinger equations, we want to have if $\xi(t)\in \mathcal{H}_c[z(t)]$, then $\xi(t+1)=L[z(t)]\xi(t)\in \mathcal{H}_c[z(t+1)]$.
 The point is that unlike the continuous time case, we cannot fix $t$ and discuss the orthogonality condition at $z(t)$ but we need two points $z(t)$ and $z(t+1)$, where $z(t)$ also evolve in nonlinear manner by the (yet unspecified) orthogonality condition applied to nonlinear QWs.
 The first idea to overcome this difficulty is to replace $z(t+1)$ by $e^{\im \Lambda_+[z(t)]}z(t)$ and treat the remainder $Z(t)=e^{\im \Lambda_+[z(t)]}-z(t+1)$ as an error.
 Indeed, in the end we will be able to show $Z\in l^1$ (see Corollary \ref{cor:main}).  
Once we have replaced $z(t+1)$ by $e^{\im \Lambda_+[z(t)]}z(t)$, what we want becomes $\xi\in \mathcal{H}_c[z]$ implies $L[z]\xi \in \mathcal{H}_c[e^{\im \Lambda_+[z]}z]$.
Although it is not clear the choice \eqref{eq:contspacecont} works, this space have the property $\mathcal{H}_c[e^{\im \theta}z]=e^{\im \theta}\mathcal{H}_c[z]$, so the problem becomes to show $\xi\in \mathcal{H}_c[z]\Rightarrow e^{-\im \Lambda_+[z]}L[z]\xi\in \mathcal{H}_c[z]$.
This property, which is the second difficulty, can be proved by using the "orthogonality" of $L[z]$ with respect to the symplectic form $\<\cdot,\im \cdot\>$.
That is, we prove $\<L[z]\xi_1,\im \xi_2\>=\<\xi_1,\im L[z]^{-1}\xi_2\>$.
This can be thought as an analog relation to the continuous case, where $L_{\mathrm{cont}}[z]$ is symmetric with respect to the symplectic form, that is $\<L_{\mathrm{cont}}[z]\xi_1,\im \xi_2\>=\<\xi_1,\im L_{\mathrm{cont}}[z] \xi_2\>$.
For this property we will rely on the special structure \eqref{eq:N} of the nonlinearity.
It will be interesting to specify the class of gauge invariant nonlinearity which have the above property.
Therefore, although the choice of the continuous space did not have a good reasoning, our choice works and we can complete the modulation argument.

The paper is organized as follows.
In section \ref{sec:nbs} we prove Proposition \ref{prop:nlbs}.
In section \ref{sec:mod}, we give the modulation coordinate.
In section \ref{sec:lin}, we prepare the linear estimates to close the argument.
In section \ref{sec:prmain} we complete the proof of Theorem \ref{thm:main} and Corollary \ref{cor:orb}.
In section \ref{sec:A1} we prove Proposition \ref{prop:ede} and in section \ref{sec:A2}, we provide the necessary and sufficient condition for Kato smoothness.

\section{Nonlinear bound states}\label{sec:nbs}

In this section, we prove Proposition \ref{prop:nlbs}.

First, we claim that as Schr\"odinger (resp.\ discrete Schr\"odinger operator) with $L^1$ ($l^1$) potential, the eigenfunctions corresponding to the discrete spectrum decays exponentially.

\begin{proposition}\label{prop:ede}
	Let $\lambda\in\R$ satisfying $ e^{\im \lambda}\in  \sigma_{\mathrm{d}}(U)$ and let $\phi \in l^2(\Z,\C^2)$ be an eigenfunction of $U$ associated to $e^{\im \lambda}$.
	Then, we have
	\begin{align*}
		\|\phi(x)\|_{\C^2}\sim e^{-\xi(\lambda)|x|},
	\end{align*}
where $\xi(\lambda)>0$ is defined by $\sqrt{1-|\alpha_\infty|^2}\cosh \xi = \cos \lambda$.
\end{proposition}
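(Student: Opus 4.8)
The plan is to recast the eigenvalue equation $U\phi=e^{\im\lambda}\phi$ as a one‑step spatial recursion, compute the limiting transfer matrix explicitly, and feed the result into a discrete Levinson (Hartman--Wintner) asymptotics theorem. Writing the coin as $C(x)=e^{\im\theta(x)}\left(\begin{smallmatrix}\beta(x)&\overline{\alpha(x)}\\ -\alpha(x)&\overline{\beta(x)}\end{smallmatrix}\right)$ with $\theta(x)\to0$, and using that $U=SC$ is invertible, the equation $U\phi=e^{\im\lambda}\phi$ is equivalent to $C\phi=e^{\im\lambda}S^{-1}\phi$, i.e.\ to the two scalar relations
\begin{align*}
e^{\im\theta(x)}\bigl(\beta(x)\phi_\uparrow(x)+\overline{\alpha(x)}\phi_\downarrow(x)\bigr)&=e^{\im\lambda}\phi_\uparrow(x+1),\\
e^{\im\theta(x)}\bigl(-\alpha(x)\phi_\uparrow(x)+\overline{\beta(x)}\phi_\downarrow(x)\bigr)&=e^{\im\lambda}\phi_\downarrow(x-1).
\end{align*}
Since $0<|\alpha_\infty|<1$ we have $|\beta_\infty|=\sqrt{1-|\alpha_\infty|^2}>0$, so there is $R>0$ with $\alpha(x)\beta(x)\neq0$ for $|x|\geq R$. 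For $x\geq R$ the first relation gives $\phi_\uparrow(x+1)$ in terms of $\phi(x)$, and the second relation evaluated at $x+1$ (divided by $\overline{\beta(x+1)}$) gives $\phi_\downarrow(x+1)$ in terms of $\phi_\downarrow(x)$ and $\phi_\uparrow(x+1)$; hence on $[R,\infty)$ the eigenvalue equation is equivalent to $\phi(x+1)=T(x)\phi(x)$ for an explicit $2\times2$ matrix $T(x)$, smooth in $(C(x),C(x+1),e^{\im\lambda})$, with $T(x)\to T_\infty:=T(C_\infty,C_\infty,e^{\im\lambda})$ and $\sum_{x\geq R}\|T(x)-T_\infty\|\lesssim\|C-C_\infty\|_{l^{1,1}(\Z,\mathcal{L}(\C^2))}<\infty$ by Assumption~\ref{ass:1}. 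A symmetric backward recursion holds on $(-\infty,-R]$.

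Next I would compute $T_\infty$. Using $\theta_\infty=0$ one finds $\det T_\infty=\beta_\infty/\overline{\beta_\infty}$ (of modulus $1$) and, after simplifying with $|\alpha_\infty|^2/|\beta_\infty|=|\beta_\infty|^{-1}-|\beta_\infty|$, $\operatorname{tr}T_\infty=\dfrac{2\cos\lambda}{|\beta_\infty|}\,e^{\im\arg\beta_\infty}$, so the eigenvalues of $T_\infty$ have moduli $|\nu_\pm|$ with $\nu_\pm=\dfrac{\cos\lambda}{|\beta_\infty|}\pm\Bigl(\dfrac{\cos^2\lambda}{|\beta_\infty|^2}-1\Bigr)^{1/2}$. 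On the other hand $e^{\im\lambda}\in\sigma_{\mathrm d}(U)$ lies in a gap of $\sigma_{\mathrm{ess}}(U)$; since $C-C_\infty$ is a compact multiplication operator (its symbol vanishes at infinity), $\sigma_{\mathrm{ess}}(U)=\sigma(U_\infty)=\{e^{\im\omega}:|\cos\omega|\leq\sqrt{1-|\alpha_\infty|^2}\}$, the last equality obtained by Fourier transforming $U_\infty=SC_\infty$. Hence $|\cos\lambda|>\sqrt{1-|\alpha_\infty|^2}=|\beta_\infty|$, so $\nu_\pm$ are real up to a common sign and $|\nu_\pm|=e^{\pm\xi(\lambda)}$ with $\cosh\xi(\lambda)=|\cos\lambda|/\sqrt{1-|\alpha_\infty|^2}>1$; using the chiral symmetry $\phi\mapsto Z\phi$ (which preserves $\|\phi(x)\|_{\C^2}$) one may assume $\cos\lambda>0$, recovering exactly the $\xi(\lambda)$ of the statement. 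In particular $T_\infty$ is hyperbolic, with one‑dimensional stable and unstable eigenspaces, the stable eigenvalue $\mu_{\mathrm s}$ having modulus $e^{-\xi(\lambda)}$.

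I would then invoke the discrete Levinson (Hartman--Wintner) theorem: since $\sum_{x\geq R}\|T(x)-T_\infty\|<\infty$ and $T_\infty$ is hyperbolic with eigenvalues of distinct modulus, the recursion $\phi(x+1)=T(x)\phi(x)$ on $[R,\infty)$ has a basis of solutions $\phi_{\mathrm s}(x)=\mu_{\mathrm s}^{\,x}(v_{\mathrm s}+o(1))$ and $\phi_{\mathrm u}(x)=\mu_{\mathrm u}^{\,x}(v_{\mathrm u}+o(1))$ as $x\to+\infty$, with $|\mu_{\mathrm s}|=e^{-\xi(\lambda)}$, $|\mu_{\mathrm u}|=e^{\xi(\lambda)}$ ($\phi_{\mathrm s}$ is the discrete analogue of a Jost solution). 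Since $\phi\in l^2$ forces $\phi(x)\to0$, the restriction $\phi|_{[R,\infty)}$ has no $\phi_{\mathrm u}$‑component, so $\phi(x)=c\,\phi_{\mathrm s}(x)$ for $x\geq R$; in particular $\|\phi(x)\|_{\C^2}\lesssim e^{-\xi(\lambda)x}$, the upper bound. Moreover $c\neq0$: $T(x)$ is invertible for $x\geq R$ (as $\det T(x)\to\det T_\infty\neq0$), so $\phi|_{[R,\infty)}$ vanishes either nowhere or identically, and in the latter case substituting $\phi\equiv0$ on $[R,\infty)$ back into the two scalar relations at the boundary and using $\alpha(x)\beta(x)\neq0$ for $|x|\geq R$ propagates $\phi\equiv0$ down the lattice, contradicting that $\phi$ is an eigenfunction (the finitely many sites, if any, at which the coin degenerates are handled separately). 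Hence $\|\phi(x)\|_{\C^2}=|c|\,e^{-\xi(\lambda)x}\,\|v_{\mathrm s}+o(1)\|\sim e^{-\xi(\lambda)x}$ as $x\to+\infty$; the symmetric argument on $(-\infty,-R]$ gives $\|\phi(x)\|_{\C^2}\sim e^{\xi(\lambda)x}$ as $x\to-\infty$, and combining the two yields $\|\phi(x)\|_{\C^2}\sim e^{-\xi(\lambda)|x|}$.

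The main obstacle is the bookkeeping in the first two steps: because $S$ advances $\phi_\uparrow$ but retreats $\phi_\downarrow$, the eigenvalue equation is not literally a one‑step recursion and one must splice the two scalar relations at neighbouring sites to build $T(x)$; the reward is the explicit evaluation of $\operatorname{tr}T_\infty$ and $\det T_\infty$ and, crucially, the collapse of the resulting expression to $2\cos\lambda/|\beta_\infty|$ via $|\alpha_\infty|^2+|\beta_\infty|^2=1$, which is precisely what pins the decay rate to the stated $\xi(\lambda)$. The Levinson step and the non‑vanishing argument behind the lower bound are routine, the latter needing only a little care at the (non‑generic) sites where a coin matrix degenerates.
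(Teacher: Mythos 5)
Your proposal is correct and follows essentially the same route as the paper: both recast the eigenvalue equation as a one-step transfer-matrix recursion whose limiting matrix is hyperbolic with eigenvalues of moduli $e^{\pm\xi(\lambda)}$, and both extract the two-sided asymptotics from the $l^1$ summability of the perturbation together with the $l^2$ condition selecting the decaying (Jost) solution. The only differences are cosmetic: you invoke the discrete Levinson/Hartman--Wintner theorem as a black box where the paper (following Simon) proves the needed asymptotics directly by a contraction mapping in a weighted space, and you use the state vector $(\phi_\uparrow(x),\phi_\downarrow(x))$ rather than the paper's $(\phi_\downarrow(x-1),\phi_\uparrow(x))$, which is immaterial.
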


\begin{remark}\label{rem:disc}
	If $e^{\im \lambda} \in \sigma(U)\setminus \sigma_{\mathrm{ess}} (U)$, then $\cos \lambda >\sqrt{1-|\alpha_\infty|^2}$.
\end{remark}

\begin{remark}
	Proposition only requires $\|C-C_\infty\|_{l^1(\Z,\mathcal{L}(\C^2))}<\infty$ instead of the $l^{1,1}$ boundedness given in Assumption \ref{ass:1}.
	This will be clear from the proof of Proposition \ref{prop:ede}, which will be given in the appendix. 
\end{remark}

Before going into the proof of Proposition \ref{prop:nlbs}, we prepare an elementary lemma.

\begin{lemma}\label{lem:Resbound}
Let $s\geq 0$.
	Under the assumption \eqref{ass:1}, for $\lambda\in \R$ satisfying $e^{\im \lambda}\in \sigma_{\mathrm{d}}(U)$, we have
	\begin{align*}
		\|(U-e^{\im \lambda})^{-1}\tilde{P}_c\|_{\mathcal{L}(l^{2,s}(\Z,\C^2))}\lesssim_s 1,
	\end{align*}
where $\tilde{P}_c:=1-(\cdot,\phi)\phi$ and $\phi$ is the normalized eigenvector of $U$ associated to $e^{\im \lambda}$.
\end{lemma}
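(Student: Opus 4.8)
The plan is to reduce the bound on $(U-e^{\im\lambda})^{-1}\tilde P_c$ in $l^{2,s}$ to the analogous bound in $l^2$ together with the exponential decay of the eigenfunction $\phi$ from Proposition~\ref{prop:ede}. First I would observe that since $e^{\im\lambda}$ is an isolated simple eigenvalue of the unitary operator $U$, the rest of the spectrum of $U\tilde P_c$ stays a fixed positive distance away from $e^{\im\lambda}$ on the unit circle, so $(U-e^{\im\lambda})^{-1}\tilde P_c$ is a bounded operator on $l^2(\Z,\C^2)$; this is the $s=0$ case and is immediate from the spectral theorem (or directly from the resolvent being holomorphic near $e^{\im\lambda}$ after removing the eigenprojection). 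The work is therefore entirely in transferring this to the weighted space.

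The key step is a commutator/weight-conjugation argument. Write $w_s(x)=\langle x\rangle^s$ for the multiplication operator realizing the $l^{2,s}$ norm, so that estimating $\|(U-e^{\im\lambda})^{-1}\tilde P_c\|_{\mathcal L(l^{2,s})}$ is the same as estimating $\|w_s(U-e^{\im\lambda})^{-1}\tilde P_c\, w_s^{-1}\|_{\mathcal L(l^2)}$. Set $R:=(U-e^{\im\lambda})^{-1}\tilde P_c$ and note $R$ is the bounded $l^2$-inverse of $(U-e^{\im\lambda})$ on $\mathrm{ran}\,\tilde P_c$. From the resolvent identity $w_s R w_s^{-1} = R + w_s R w_s^{-1}\,[\,U - w_s U w_s^{-1}\,] R$ (formally: conjugate $(U-e^{\im\lambda})w_sR w_s^{-1}=\tilde P_c$-type identities and move the commutator $[w_s,U]w_s^{-1}$ to the other side), so that if $\|[w_s,U]w_s^{-1}\|_{\mathcal L(l^2)}$ were small we could close by Neumann series. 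In general it is only bounded, so instead I would run the standard interpolation/iteration on $s$: prove the claim for $s\in[0,1]$ first, using that $|\langle x\rangle/\langle y\rangle - 1|$ is bounded when $|x-y|\le 1$ and $U=SC$ is a near-nearest-neighbor operator (the shift $S$ moves mass by exactly one site, and $C$ is local), hence $[w_1,U]w_1^{-1}$ is bounded on $l^2$ with norm controlled by Assumption~\ref{ass:1}; then bootstrap to all $s>0$ by writing $w_s=w_1^{\lceil s\rceil}$-type factorizations, or more cleanly by the three-lines/complex interpolation between $l^{2,n}$ and $l^{2,n+1}$. At each stage one must also control the weighted norm of the rank-one piece $(\cdot,\phi)\phi$ inside $\tilde P_c$, but $\|\,(\cdot,\phi)\phi\,\|_{\mathcal L(l^{2,s})}\le \|\phi\|_{l^{2,s}}^2<\infty$ by the exponential decay in Proposition~\ref{prop:ede}, so that term is harmless and in fact the bound is uniform in $s$ only up to the $\|\phi\|_{l^{2,s}}^2$ factor, which is the source of the $\lesssim_s$.

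The main obstacle is the commutator estimate $\|[w_s,U]w_s^{-1}\|_{\mathcal L(l^2)}\lesssim_s 1$: one has to exploit both the quasi-locality of $U$ (so the commutator only sees weight ratios at neighboring sites, which are bounded uniformly) and the $l^{1,1}$ decay of $C-C_\infty$ from Assumption~\ref{ass:1} to handle the non-constant part of the coin; the constant-coin part $U_\infty=SC_\infty$ contributes a commutator that is bounded because the weight ratio $\langle x+1\rangle/\langle x\rangle$ and $\langle x-1\rangle/\langle x\rangle$ are bounded, and the perturbation $S(C-C_\infty)$ is already Hilbert–Schmidt-type small in the relevant weighted sense. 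Once this commutator bound is in hand, the Neumann-series/iteration step and the treatment of the rank-one correction are routine, and the proof closes.
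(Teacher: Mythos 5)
Your overall strategy --- reduce to the $s=0$ case by a weight--commutator argument exploiting the quasi-locality of $U=SC$ and the decay of $C-C_\infty$ --- is the same one the paper uses, but the step that actually closes the estimate is missing. You correctly note that a Neumann-series inversion of the conjugated operator fails because $[w_s,U]w_s^{-1}$ is only bounded, not small; however, your fallback does not repair this. Boundedness of $[w_1,U]w_1^{-1}$ on $l^2$ does not imply boundedness of $w_1Rw_1^{-1}$ (an invertible operator plus a bounded perturbation need not be invertible), so your case $s\in[0,1]$ is not established; and the proposed bootstrap, whether by factorizing $w_s$ or by interpolating between $l^{2,n}$ and $l^{2,n+1}$, presupposes exactly the integer cases that are in question. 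The missing quantitative input is the one-power gain in the commutator with the shift, $|\<x\pm1\>^s-\<x\>^s|\lesssim_s\<x\>^{s-1}$, which means $[\<x\>^s,S]$ is controlled by the \emph{lower} weight $\<x\>^{s-1}$. This is what permits the paper's induction on integer $s$ (interpolation fills in non-integer $s$): from $(U-e^{\im\lambda})f=u$ one derives an identity of the form $(U_\infty-e^{\im\lambda})\<x\>^s f=S(C_\infty-C)\<x\>^sf-[\<x\>^s,S]Cf+\<x\>^su$, in which the commutator term is bounded by $\|f\|_{l^{2,s-1}}$ (known from the previous inductive step) and the term $S(C-C_\infty)\<x\>^sf$ is absorbed the same way using $\<x\>\|C(x)-C_\infty\|_{\mathcal{L}(\C^2)}\in l^1\subset l^\infty$.

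Two further points. First, after conjugation by the weight the operator no longer acts on $\tilde P_cl^2$, so one cannot invert $(U-e^{\im\lambda})$ there on the left-hand side; the paper instead inverts $(U_\infty-e^{\im\lambda})$, which is boundedly invertible on all of $l^2$ since $e^{\im\lambda}$ lies off $\sigma(U_\infty)=\sigma_{\mathrm{ess}}(U)$ (Remark \ref{rem:disc}). Second, there is an a priori issue that $\<x\>^sf$ need not belong to $l^2$ before the estimate is proved; the paper handles this with the truncated weights $\<x\>_\epsilon=\<x\>(1+\epsilon\<x\>)^{-1}$ and a limit $\epsilon\to0$, which your conjugation argument would also need. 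I would add that your observation that $\<x\>/\<y\>-1$ is controlled for $|x-y|\le1$ is in fact stronger than ``bounded'' --- it decays like $\<x\>^{-1}$ --- and pursuing that would lead either to the paper's induction or to a compactness/Fredholm argument; but as written neither route is carried out. The treatment of the rank-one piece via the exponential decay of $\phi$ is fine but peripheral.
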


\begin{proof}
We argue by induction.
The case $s=0$ is obvious.
If we have the case $s\in\N$, we have the result for all $s>0$ by interpolation.
Thus, we only consider $s\in\N$. 
Let $u\in \tilde{P}_cl^{2,s}$, $f\in \tilde{P}_c l^2$ such that
\begin{align}\label{NLQW8.2}
f=(U-e^{\im \lambda})^{-1}u\ \Leftrightarrow \  (U-e^{\im \lambda})f=u.
\end{align}
We assume that we have the conclusion up to $s-1$, i.e.
\begin{align*}
\|f\|_{l^{2,s-1}}\lesssim_{s-1} \|u\|_{l^{2,s-1}}.
\end{align*}
We set $\<x\>_\epsilon:=\<x\>(1+\epsilon\<x\>)^{-1}$.
Then we have
\begin{align*}
\sup_{x\in \Z}|\<x+1\>_\epsilon^s-\<x\>_\epsilon^s|\lesssim_k \<x\>^{s-1}
\end{align*}
and thus
\begin{align*}
\|[S,\<x\>_\epsilon^k]u(x)\|_{\C^2}\lesssim_k \|\<x\>^{k-1} Su(x)\|_{\C^2}.
\end{align*}
Thus, multiplying $\<x\>_\epsilon$ to both sides of \eqref{NLQW8.2}, we have
\begin{align*}
(U_\infty-e^{\im \lambda})\<x\>_\epsilon^k f= (U-U_\infty)\<x\>_\epsilon^k f+[S,\<x\>_\epsilon^k]Cf+\<x\>_\epsilon^k u,
\end{align*}
where $U_\infty=SC_\infty$,
and thus
\begin{align*}
\|\<x\>_\epsilon^k f\|_{l^2}\lesssim \|\<x\>_\epsilon^{k-1}f\|_{l^2}+\|\<x\>_\epsilon^k u\|_{l^2}.
\end{align*}
Therefore, taking $\epsilon\to 0$, we have the conclusion.
\end{proof}

\begin{proof}[Proof of Proposition \ref{prop:nlbs}]
	We set 
	\begin{align}\label{NLQW1:1}
		\Phi[z]=z\(\phi+|z|^2\psi[|z|^2]\)
	\end{align}
	and
	\begin{align}\label{NLQW1:2}
		\Lambda[z]=\lambda+|z|^2\mu[|z|^2]
	\end{align}
	and look for the solution of \eqref{nlbs:1}.
	Notice that under the above ansatz \eqref{NLQW1:1} and \eqref{NLQW1:2}, \eqref{nlbs:2} and \eqref{nlbs:3} are trivial.
	
	Substituting \eqref{NLQW1:1}, \eqref{NLQW1:2} in \eqref{nlbs:1}, we have
	\begin{align}\label{NLQW1:4}
		\(U-e^{\im \lambda}\)\psi=-U_0\(\frac{N(|z|\(\phi+|z|^2\psi\))-|z|\(\phi+|z|^2\psi\)}{|z|^3}\)+e^{\im \lambda}\frac{e^{\im |z|^2\mu}-1}{|z|^2}\(\phi_0+|z|^2\psi\).
	\end{align}
	Now, taking the inner-product between $\phi$ and assuming $\(\phi,\psi\)=0$, we have
	\begin{align*}
		\(U_0\(\frac{N(|z|\(\phi+|z|^2\psi\))-|z|\(\phi+|z|^2\psi\)}{|z|^3}\),\phi\)=e^{\im \lambda}\frac{e^{\im |z|^2\mu}-1}{|z|^2}.
	\end{align*}
Recall \eqref{eq:N} and setting
	\begin{align*}
		\mathcal{N}(r,\varphi):=&-(U_0-e^{\im \lambda})^{-1}\tilde{P}_cU_0 r^{-1}\(e^{\im g(r\<\phi+r\varphi,\gamma(\phi+r\varphi)\>_{\C^2})\gamma}-1\)(\phi+r\varphi),
	\end{align*}
	where $\tilde{P}_c$ is the orthogonal projection to the orthogonal complement of $\phi$, we can write \eqref{NLQW1:4} as
	\begin{align*}
		\psi=\mathcal{N}(|z|^2,\psi).
	\end{align*}
Further, we have $\mathcal{N}\in C^\infty(\R\times l^{2,s},l^{2,s})$.
	By implicit function theorem, it suffices to show that there exist $\delta_s>0$ and $C_s>0$ s.t.\ for $r\leq \delta_s^2$
	\begin{align*}
		\mathcal{N}(r,\cdot):\overline{B_{\tilde{P}_cl^{2,s}(\Z^d,\C^N)}(0,C_s)}\to \overline{B_{\tilde{P}_cl^{2,s}(\Z^d,\C^N)}(0,C_s)}
	\end{align*}
	and
	\begin{align}\label{eq:Ndiff}
		\| \mathcal{N}(r,\varphi_1)-\mathcal{N}(r,\varphi_2)\|_{l^{2,s}}\leq \frac{1}{2}\|\varphi_1-\varphi_2\|_{l^{2,s}}
	\end{align}
	
	We set $C_s:=2\max_{|r|\leq 1}\|\mathcal{N}(r,0)\|_{l^{2,s}}$.
	Next, take $\varphi_1,\varphi_2\in \overline{B_{\tilde{P}_cl^{2,s}(\Z^d,\C^N)}(0,C_s)}$ and set
	\begin{align*}
		f(\tau)=r^{-1}\(e^{\im g(r\<\phi_\tau,\gamma\phi_\tau\>_{\C^2})\gamma}-1\)\phi_\tau,
	\end{align*}
where $\phi_\tau = \phi+r(\varphi_2+\tau(\varphi_1-\varphi_2))$.
Then, from Lemma \ref{lem:Resbound}, we have
\begin{align*}
	\| \mathcal{N}(r,\varphi_1)-\mathcal{N}(r,\varphi_2)\|_{l^{2,s}}\lesssim_s \|f(1)-f(0)\|_{l^{2,s}}.
\end{align*}
Computing the derivative of $f$, we have
\begin{align*}
	f'(\tau)=&r^{-1}\(e^{\im g(r\<\phi_\tau,\gamma\phi_\tau\>_{\C^2})\gamma}-1\)r(\varphi_1-\varphi_2)\\&+\im re^{\im g(r\<\phi_\tau,\gamma\phi_\tau\>_{\C^2}) \gamma}g'(r\<\phi_\tau,\gamma\phi_\tau\>_{\C^2}) \<\phi_\tau, \varphi_1-\varphi_2\>_{\C^2}\gamma \phi_\tau.
\end{align*}
Thus, by $f(1)-f(0)=\int_0^1f'(\tau)\,d\tau$, we have
	\begin{align*}
		\|f(1)-f(0)\|_{l^{2,s}}\lesssim_s r \|\varphi_1-\varphi_2\|_{l^{2,s}}.
	\end{align*}
Therefore, taking $\delta_s$ sufficiently small, we have \eqref{eq:Ndiff}.

Finally, we prove $\Lambda[z]\in \R$.
By the above, we have \eqref{nlbs:1}.
Then, taking the $l^2$ norm of both sides, we see $|e^{\im \Lambda[z]}|=1$.
Thus, we have the conclusion.
\end{proof}

Notice that by the fact that $\Phi[z]\in  l^{2}$ and $g$ is smooth with $g(0)=0$, we can show $\|C[z]-C_\infty\|_{l^1}<\infty$ where $C[z]:=Ce^{\im g(\<\Phi[z],\gamma\Phi[z]\>_{\C^2})\gamma}$.
Because $C[z]\Phi[z]=e^{\im \Lambda[z]}\Phi[z]$, we can apply Proposition \ref{prop:ede} and thus we have the following sharp exponential decay estimate of the nonlinear bound state:
\begin{align*}
\|\Phi[z](x)\|_{\C^2}\sim e^{-\xi(\Lambda[z])|x|}.
\end{align*}
However, we will not use this estimate.

\section{Modulation equation}\label{sec:mod}

As discussed in the introduction, we set the continuous space $\mathcal{H}_c[z]$ by
\begin{align*}
	\mathcal{H}_c[z]:=\{u\in l^2(\Z,\C^2)\ |\ \forall w\in \C,\ \<u,\im D\Phi_+[z]w\>=0\}.
\end{align*}

We first show that all small in $l^2$ functions can be decompose as $u=\Phi_+[z]+\xi$ with $\xi\in \mathcal{H}_c[z]$ by standard implicit function theorem argument.

\begin{lemma}\label{lem:mod}
	There exists $\epsilon>0$ s.t.\ for $u\in B_{l^2_+(\Z,\C^2)}(0,\epsilon)$, there exists $z\in \C$ s.t.
	\begin{align*}
		u-\Phi_+[z]\in \mathcal{H}_c[z].
	\end{align*}
\end{lemma}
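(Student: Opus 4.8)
The plan is to apply the implicit function theorem to the map
\[
F(u,z) := \bigl(\langle u - \Phi_+[z], \im\, D\Phi_+[z]\,1\rangle,\ \langle u - \Phi_+[z], \im\, D\Phi_+[z]\,\im\rangle\bigr),
\]
viewed as a map $F\colon l^2_+(\Z,\C^2)\times\C\to\R^2$, where we identify $z\in\C$ with a pair of real coordinates and use that $D\Phi_+[z]$ is only $\R$-linear, so $D\Phi_+[z]1$ and $D\Phi_+[z]\im$ are the two natural "directions" spanning the tangent space of the soliton manifold at $z$. The condition $u-\Phi_+[z]\in\mathcal{H}_c[z]$ is precisely $F(u,z)=0$, since every $w\in\C$ is an $\R$-linear combination of $1$ and $\im$. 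First I would record that $\Phi_+[\cdot]=P_+\Phi[\cdot]\in C^\infty(B_\C(0,\delta_s),l^{2,s})$ by Proposition \ref{prop:nlbs}, so in particular $F$ is smooth (indeed real-analytic in the relevant sense) in a neighbourhood of $(0,0)$, and that $F(0,0)=0$ because $\Phi_+[0]=0$.

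The core computation is the invertibility of $\partial_z F(0,0)$, the $2\times 2$ Jacobian in the $z$-variables at the base point. Differentiating, at $u=0$, $z=0$ one gets for the derivative in a direction $w\in\C$:
\[
\partial_z F(0,0)[w] = \bigl(-\langle D\Phi_+[0]w,\ \im\, D\Phi_+[0]\,1\rangle,\ -\langle D\Phi_+[0]w,\ \im\, D\Phi_+[0]\,\im\rangle\bigr),
\]
where I have used that the terms coming from differentiating the second slot $D\Phi_+[z]$ vanish because they are paired against $u-\Phi_+[z]=0$ at the base point. By \eqref{nlbs:3} we have $D\Phi[0]w=w\phi$, hence $D\Phi_+[0]w = w\phi_+$ with $\phi_+=P_+\phi$ normalized so $\|\phi_+\|_{l^2}=1$. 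So the Jacobian is, up to sign, the Gram-type matrix with entries $\langle a\phi_+,\im\, b\phi_+\rangle$ for $a,b\in\{1,\im\}$; using $\langle u,v\rangle=\Re(u,v)$ and $(\phi_+,\phi_+)=\|\phi_+\|_{l^2}^2=1$ this evaluates to $\Re(\im\, a\bar b) = -\Im(a\bar b)$, giving the matrix $\begin{pmatrix}0 & 1\\ -1 & 0\end{pmatrix}$ (or its negative), which is invertible — this is just the statement that the symplectic form $\langle\cdot,\im\cdot\rangle$ is nondegenerate on the two-real-dimensional space $\C\phi_+$. Hence $\partial_z F(0,0)$ is an isomorphism of $\R^2$.

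With $F(0,0)=0$ and $\partial_z F(0,0)$ invertible, the implicit function theorem yields $\epsilon>0$ and a smooth map $u\mapsto z(u)$ defined on $B_{l^2_+}(0,\epsilon)$ with $z(0)=0$ and $F(u,z(u))=0$, which is exactly the assertion $u-\Phi_+[z]\in\mathcal{H}_c[z]$. I would then note that uniqueness of $z$ near $0$ also follows, which will be needed later for the modulation coordinate to be well-defined.

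I do not expect any serious obstacle here: the only subtlety is the pervasive bookkeeping forced by $\Phi_+[\cdot]$ and $\mathcal{H}_c[\cdot]$ being only $\R$-differentiable / $\R$-linear, so one must consistently work over $\R$ (treating $\C\cong\R^2$) rather than $\C$, and be careful that differentiating $\mathcal{H}_c[z]$'s defining functional in $z$ produces two kinds of terms, of which one kind drops out at the base point. The genuinely quantitative input — that $D\Phi_+[0]w=w\phi_+$ and $\|\phi_+\|_{l^2}=1$ — is already supplied by Proposition \ref{prop:nlbs} and the normalization preceding Theorem \ref{thm:main}, so the proof is essentially a one-line application of the implicit function theorem once the nondegeneracy of the symplectic pairing on $\C\phi_+$ is observed.
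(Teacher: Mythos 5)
Your proposal is correct and follows exactly the paper's argument: the same functional $F$, the same application of the implicit function theorem, and the same computation that the Jacobian in $z$ at $(0,0)$ reduces (via $D\Phi_+[0]w=w\phi_+$ and $\|\phi_+\|_{l^2}=1$) to the invertible matrix $\begin{pmatrix}0 & -1\\ 1 & 0\end{pmatrix}$, i.e.\ to nondegeneracy of the symplectic pairing on $\C\phi_+$. The only discrepancy is an irrelevant overall sign in the Gram matrix, which you already flag and which does not affect invertibility.
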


\begin{proof}
	Set
	\begin{align*}
	F(z,u):=\begin{pmatrix}
	\<u-\Phi_+[z],\im D\Phi_+[z]1\>\\
	\<u-\Phi_+[z],\im D\Phi_+[z]\im \>
	\end{pmatrix}.
	\end{align*}
	By the definition of $F$, the problem is reduced to find $z$ s.t.\ $F(z,u)=0$ for given small $u$.
	However, this follows easily from implicit function theorem.
	Indeed, we have
	\begin{align*}
	\left.\frac{\partial F(z,u)}{\partial(z_R,z_I)}\right|_{(z,u)=(0,0)}=\begin{pmatrix}
	\<-D\Phi_+[0]1,\im D\Phi_+[0]1\>\ \<-D\Phi_+[0]\im,\im D\Phi_+[0]1\>\\
	\<-D\Phi_+[0]1,\im D\Phi_+[0]\im\>\ \<-D\Phi_+[0]\im,\im D\Phi_+[0]\im\>
	\end{pmatrix}
	=
	\begin{pmatrix}
	0 & -1\\
	1 & 0
	\end{pmatrix}.
	\end{align*}
	Thus, we have the conclusion.
\end{proof}

By lemma \ref{lem:mod}, given a solution $u(t)$ of nonlinear QW \eqref{eq:nlqw2}, we can write
\begin{align}\label{eq:mod1}
	u(t)=\Phi_+[z(t)]+\xi(t),\ \xi(t)\in \mathcal{H}_c[z(t)].
\end{align}
We set
\begin{align*}
	L[z]:=D\mathcal{U}(\Phi_+[z]).
\end{align*}
Since $L[0]=U^2$, there exists $\delta>0$ s.t. if $|z|<\delta$, then $L[z]$ is invertible.
In the following, if necessary we replace $\delta_0$ by the above $\delta$ and assume $L[z]$ is always invertible.
Notice that $L[z]$ is not $\C$-linear but only $\R$-linear.
$L[z]$ is "orthogonal" w.r.t.\ the symplectic form $\<\cdot,\im\cdot\>$ in the following sense:
\begin{lemma}\label{lem:symporth}
	We have
	\begin{align*}
		\<L[z]u,\im v\>=\<u,\im L[z]^{-1}v\>.
	\end{align*}
\end{lemma}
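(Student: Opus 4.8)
## Proof Plan for Lemma \ref{lem:symporth}

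The plan is to prove the identity $\langle L[z]u, \im v\rangle = \langle u, \im L[z]^{-1}v\rangle$ by exploiting the special form of the nonlinearity $N(u) = e^{\im g(\langle u,\gamma u\rangle_{\C^2})\gamma}u$. The key observation is that $\mathcal{U} = UN \circ UN$ is a composition, so $L[z] = D\mathcal{U}(\Phi_+[z])$ factorizes via the chain rule: writing $\Psi = UN(\Phi_+[z])$, we get $L[z] = U\,DN(\Psi) \circ U\,DN(\Phi_+[z])$. Since $U$ is $\C$-linear and unitary, and hence symplectic in the sense that $\langle Uu, \im Uv\rangle = \langle u, \im v\rangle$ (because $U$ is unitary and commutes with multiplication by $\im$), the whole problem reduces to the corresponding statement for the pointwise operators $DN(u_0)$ at a point $u_0 \in \C^2$: namely that $\langle DN(u_0)a, \im b\rangle_{\C^2} = \langle a, \im DN(u_0)^{-1}b\rangle_{\C^2}$, i.e. $DN(u_0)$ is "symplectically orthogonal" on $\C^2$ with respect to $\Omega(a,b) = \langle a, \im b\rangle_{\C^2}$.

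The heart of the matter is therefore a finite-dimensional computation. First I would compute $DN(u_0)$ explicitly. Differentiating $N(u) = e^{\im g(\langle u,\gamma u\rangle)\gamma}u$ at $u_0$ in direction $a$, and using that $\langle u,\gamma u\rangle_{\C^2}$ has real-linear derivative $2\langle a, \gamma u_0\rangle_{\C^2}$ (since $\gamma$ is self-adjoint), one obtains
\begin{align*}
DN(u_0)a = e^{\im g\gamma}a + 2\im g'\,\langle a,\gamma u_0\rangle_{\C^2}\, e^{\im g\gamma}\gamma u_0,
\end{align*}
where $g = g(\langle u_0,\gamma u_0\rangle_{\C^2})$ and $g' = g'(\langle u_0,\gamma u_0\rangle_{\C^2})$ are scalars, and $e^{\im g\gamma}$ commutes with $\gamma$. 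So $DN(u_0)$ is a unitary (namely $e^{\im g\gamma}$) plus a rank-one real-linear perturbation aligned with $e^{\im g\gamma}\gamma u_0$. I would then verify directly that this operator satisfies the symplectic orthogonality relation. The cleanest route: show that $DN(u_0)^* J\, DN(u_0) = J$ where $J$ is the real-linear map representing $\Omega$ (multiplication by $\im$, with the real inner product $\langle\cdot,\cdot\rangle_{\C^2}$), equivalently that $\langle DN(u_0)a, \im DN(u_0)b\rangle_{\C^2} = \langle a, \im b\rangle_{\C^2}$ for all $a,b$; the identity in the lemma then follows by replacing $b$ with $DN(u_0)^{-1}b$. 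The rank-one term contributes two cross terms and one quadratic term; using that $\langle \gamma u_0, \im \gamma u_0\rangle_{\C^2} = \Im(\gamma u_0, \gamma u_0)_{\C^2} = 0$ and that $e^{\im g\gamma}$ is unitary and commutes with $\gamma$, these should cancel — this cancellation is exactly why the specific exponential form \eqref{eq:N} with $\gamma$ self-adjoint was chosen.

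Once the pointwise statement $\langle DN(u_0)a,\im DN(u_0)b\rangle_{\C^2} = \langle a,\im b\rangle_{\C^2}$ is established, I would assemble the global statement: it immediately gives $\langle DN(w)\xi, \im DN(w)\zeta\rangle = \langle \xi,\im\zeta\rangle$ for $w,\xi,\zeta \in l^2(\Z,\C^2)$ by summing over $x \in \Z$ (the operator $DN(w)$ acts pointwise), and combining with the unitarity of $U$ one gets $\langle L[z]u, \im L[z]v\rangle = \langle u,\im v\rangle$ for the composition. Substituting $v \mapsto L[z]^{-1}v$ (legitimate since $L[z]$ is invertible for $|z|<\delta_0$, as noted before the lemma) yields the claim. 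The main obstacle I anticipate is purely bookkeeping in the finite-dimensional cancellation — keeping track of which terms are $\C$-bilinear versus only $\R$-bilinear, and correctly using $\Im(\gamma u_0,\gamma u_0)_{\C^2}=0$ together with the commutation $[e^{\im g\gamma},\gamma]=0$; there is no deep analytic difficulty, since everything reduces to $2\times 2$ matrix algebra and the summation over $\Z$ is trivial.
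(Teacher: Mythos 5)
Your proposal is correct and follows essentially the same route as the paper: factor $L[z]=U\,DN(UN(\Phi_+[z]))\,U\,DN(\Phi_+[z])$ by the chain rule, use $\<Uu,\im v\>=\<u,\im U^{-1}v\>$, and reduce to the pointwise statement for $DN(w)$ on $\C^2$, where the cancellation comes from $\<\im\gamma w,\gamma w\>_{\C^2}=0$. The only cosmetic difference is that the paper verifies $\<DN(w)u,\im v\>=\<u,\im DN(w)^{-1}v\>$ directly by computing $DN(w)^{-1}=(1-A(w))e^{-\im g\gamma}$ from the nilpotency $A(w)^2=0$, whereas you verify the equivalent symplectic-form preservation $\<DN(w)a,\im DN(w)b\>=\<a,\im b\>$ and then substitute $b\mapsto DN(w)^{-1}b$.
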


\begin{proof}
	We first show that for $u,v\in \C^2$ and $w\in B_{\C^2}(0,\delta)$ with sufficiently small $\delta$,
	\begin{align}\label{eq:pr:symporth1}
		\<DN(w)u,\im v\>_{\C^2}=\<u,\im DN(w)^{-1}v\>_{\C^2}.
	\end{align}
We have
\begin{align}\label{eq:pr:symporth2}
	DN(w)u=e^{\im g(\<w,\gamma w\>_{\C^2})\gamma}\(u+2g'(\<w,\gamma w\>_{\C^2})\<u,\gamma\phi\>_{\C^2}\im \gamma w\).
\end{align}
Now, set
\begin{align*}
	A(w)u:=2g'(\<w,\gamma w\>_{\C^2})\<u,\gamma w\>_{\C^2}\im \gamma w.
\end{align*}
Then, we have
\begin{align*}
	A(w)^2u&=2g'(\<w,\gamma w\>_{\C^2})\<A(w)u,\gamma w\>_{\C^2}\im \gamma w
	=4\(g'(\< w,\gamma w\>_{\C^2})\)^2\<\im \gamma w,\gamma w\>_{\C^2}\im \gamma w=0.
\end{align*}
Thus, we have
\begin{align}\label{eq:pr:symporth3}
	DN(w)^{-1}=(1-A(w))e^{-\im g(\<w,\gamma w\>_{\C^2})\gamma }.
\end{align}
Substituting \eqref{eq:pr:symporth2} and \eqref{eq:pr:symporth3} into \eqref{eq:pr:symporth1} and comparing the both side we can verify \eqref{eq:pr:symporth1}.

Next, since $L[z]=UDN\(UN\(\Phi_+[z]\)\)UDN(\Phi_+[z])$ and $U$ satisfies $\<Uu,\im v\>=\<u,\im U^{-1}v\>$ for $u,v\in l^2(\Z,\C^2)$, we have
\begin{align*}
	\<L[z]u,\im v\>=\<u,\im DN(\Phi_+[z])^{-1}U^{-1}DN\(UN\(\Phi_+[z]\)\)^{-1}U^{-1}v\>.
\end{align*}
Since $L[z]^{-1}=DN(\Phi_+[z])^{-1}U^{-1}DN\(UN\(\Phi_+[z]\)\)^{-1}U^{-1}$, we have the conclusion.
\end{proof}

Before writing down the equations of $z(t)$ and $\xi(t)$, we prepare a lemma
\begin{lemma}\label{lem:orth}
	Let $z\in B_{\C}(0,\delta)$ and $\xi \in \mathcal{H}_c[z]$.
	Then, $e^{-\im \Lambda_+[z]}L[z]\xi \in \mathcal{H}_c[z]$.
\end{lemma}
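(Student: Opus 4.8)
The plan is to unravel the definition of $\mathcal{H}_c[z]$ and reduce the claim to the symplectic orthogonality of $L[z]$ established in Lemma \ref{lem:symporth}, together with the equivariance relation $\Phi_+[e^{\im\theta}z]=e^{\im\theta}\Phi_+[z]$ from \eqref{nlbs:2} (which descends to $\Phi_+$ since $\Phi_+=P_+\Phi$ and $P_+$ is $\C$-linear). By definition, $e^{-\im\Lambda_+[z]}L[z]\xi\in\mathcal{H}_c[z]$ means that $\<e^{-\im\Lambda_+[z]}L[z]\xi,\im D\Phi_+[z]w\>=0$ for all $w\in\C$. Since $\<\cdot,\cdot\>$ is the real inner product and $e^{-\im\Lambda_+[z]}$ acts as multiplication by a unimodular scalar on all of $l^2$, I would first rewrite this (using that multiplication by $e^{\im\Lambda_+[z]}$ commutes with multiplication by $\im$ and is an isometry for $\<\cdot,\cdot\>$) as $\<L[z]\xi,\im e^{\im\Lambda_+[z]}D\Phi_+[z]w\>=0$.

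The key algebraic step is to identify $e^{\im\Lambda_+[z]}D\Phi_+[z]w$ as lying in the range of $L[z]$ applied to the tangent space of the soliton manifold. Differentiating the bound state equation \eqref{eq:Phiplus}, namely $\mathcal{U}(\Phi_+[z])=e^{\im\Lambda_+[z]}\Phi_+[z]$, in the direction $w\in\C$ gives $L[z]D\Phi_+[z]w = D\mathcal{U}(\Phi_+[z])D\Phi_+[z]w = e^{\im\Lambda_+[z]}D\Phi_+[z]w + \im (D\Lambda_+[z]w)e^{\im\Lambda_+[z]}\Phi_+[z]$. The gauge covariance \eqref{nlbs:2} is what kills the extra term after the right pairing: because $\Lambda_+[e^{\im\theta}z]=\Lambda_+[z]$, one has $D\Lambda_+[z](\im z)=0$, and because $\Phi_+[e^{\im\theta}z]=e^{\im\theta}\Phi_+[z]$, one has $D\Phi_+[z](\im z)=\im\Phi_+[z]$. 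I would use this to show that for the specific directions $w$ it suffices to check the orthogonality on, the generator $\im\Phi_+[z]$ of the phase orbit is itself of the form $D\Phi_+[z](\text{something})$, so that the "bad" term $\im(D\Lambda_+[z]w)e^{\im\Lambda_+[z]}\Phi_+[z]$ is symplectically orthogonal to everything in $D\Phi_+[z]\C$; more precisely $\<\im e^{\im\Lambda_+[z]}\Phi_+[z],\im D\Phi_+[z]w\> = \<\Phi_+[z], D\Phi_+[z]w\>$ and a short computation relates this to $D(\tfrac12\|\Phi_+[z]\|^2)$, which I would handle directly.

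With the differentiated identity in hand, the computation runs: $\<L[z]\xi,\im e^{\im\Lambda_+[z]}D\Phi_+[z]w\>$; now apply Lemma \ref{lem:symporth} with $u=\xi$ and $v=e^{\im\Lambda_+[z]}D\Phi_+[z]w$ to get $\<\xi,\im L[z]^{-1}(e^{\im\Lambda_+[z]}D\Phi_+[z]w)\>$. From the differentiated bound state equation, $L[z]^{-1}(e^{\im\Lambda_+[z]}D\Phi_+[z]w) = D\Phi_+[z]w - \im(D\Lambda_+[z]w)L[z]^{-1}(e^{\im\Lambda_+[z]}\Phi_+[z])$, and $e^{\im\Lambda_+[z]}\Phi_+[z]=\mathcal{U}(\Phi_+[z])=L[0]$-type object; in fact since $\Phi_+$ solves \eqref{eq:Phiplus} one checks $L[z](\im\Phi_+[z])=\im e^{\im\Lambda_+[z]}\Phi_+[z] $ (using $\R$-linearity of $L[z]$ and that $\mathcal{U}$ is gauge covariant, $\mathcal{U}(e^{\im\theta}u)=e^{\im\theta}\mathcal{U}(u)$, differentiated in $\theta$), so $L[z]^{-1}(e^{\im\Lambda_+[z]}\Phi_+[z])=-\im\cdot\im\Phi_+[z]=\Phi_+[z]$... rather $L[z]^{-1}(\im e^{\im \Lambda_+[z]}\Phi_+[z]) = \im\Phi_+[z]$. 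Substituting, the first pairing becomes $\<\xi,\im D\Phi_+[z]w\> - (D\Lambda_+[z]w)\<\xi,\im\cdot(-\im)\Phi_+[z]\>$-type terms, each of which vanishes: the first because $\xi\in\mathcal{H}_c[z]$ by hypothesis, and the second because $\<\xi,\Phi_+[z]\>$ reduces (after noting $\im\Phi_+[z]=D\Phi_+[z](\im z)$) again to $\<\xi,\im D\Phi_+[z]w'\>=0$ for $w'= -z$, hence zero by the same hypothesis. Therefore $\<e^{-\im\Lambda_+[z]}L[z]\xi,\im D\Phi_+[z]w\>=0$ for all $w$, which is the claim.

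The main obstacle I anticipate is bookkeeping the $\R$-linear (not $\C$-linear) nature of $L[z]$ and $D\Phi_+[z]$ so that the commutation of $e^{\im\Lambda_+[z]}$ with $\im$ and with the pairing is used correctly, and in particular cleanly extracting that $\im\Phi_+[z]$ is a tangent vector $D\Phi_+[z](\im z)$ of the orbit — this is exactly where gauge covariance \eqref{nlbs:2} enters and must be invoked carefully for $\Phi_+$ rather than $\Phi$. Once that reduction is made, everything else is the symplectic orthogonality Lemma \ref{lem:symporth} plus the differentiated bound-state identity \eqref{eq:Phiplus}, and the terms involving $D\Lambda_+[z]$ collapse because the phase generator lies in both the tangent space and (by the hypothesis $\xi\in\mathcal{H}_c[z]$) the symplectic-orthogonal complement tested against it.
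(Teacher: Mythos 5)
Your proposal is correct and follows essentially the same route as the paper: differentiate the bound state equation \eqref{eq:Phiplus}, use the gauge covariance \eqref{nlbs:2} to identify $\im\Phi_+[z]=D\Phi_+[z](\im z)$, and transfer $L[z]$ across the symplectic pairing via Lemma \ref{lem:symporth}; the paper merely packages the leftover $D\Lambda_+$ term by inverting the $\R$-linear map $w\mapsto w+\im(D\Lambda_+[z]w)z$ on $\C$, whereas you kill it term by term (and your direction should be $w'=\pm\im z$ rather than $-z$, an immaterial slip since the orthogonality holds for all $w\in\C$).
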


\begin{proof}
Take arbitrary $w\in \C$.
	Differentiating \eqref{eq:Phiplus} w.r.t.\ $z$ in $w$ direction, we have
	\begin{align}\label{eq:pr:orth1}
		L[z]D\Phi_+[z]w=e^{\im \Lambda_+[z]}\(D\Phi[z]w + (D\Lambda_+[z]w)\im \Phi_+[z]\).
	\end{align}
Next, by differentiating the first equation of \eqref{nlbs:2}, we have
\begin{align}\label{eq:pr:orth2}
	D\Phi_+[z]\im z = \im \Phi_+[z].
\end{align}
Substituting \eqref{eq:pr:orth2} into \eqref{eq:pr:orth1}, we have
\begin{align*}
	L[z]D\Phi_+[z]w=e^{\im \Lambda_+[z]}D\Phi[z]\(w + \im(D\Lambda_+[z]w) z\).
\end{align*}
Now, since the $\R$-linear map $w\mapsto w + \im(D\Lambda_+[z]w) z$ is invertible for small $z$, we have
\begin{align*}
	D\Phi_+[z](1+\im (D\Lambda_+[z]\cdot)z)^{-1}w=L[z]^{-1}e^{\im \Lambda_+[z]}D\Phi[z]w.
\end{align*}
Thus, by Lemma \ref{lem:symporth} we have
\begin{align*}
	\<e^{-\im \Lambda_+[z]}L[z]\xi,\im D\Phi_+[z]w\>=\<\xi,\im L[z]^{-1}e^{\im \Lambda_+[z]}D\Phi_+[z]w\>=\<\xi,\im D\Phi_+[z](1+\im (D\Lambda_+[z]\cdot)z)^{-1}w\>=0.
\end{align*}
Therefore, we have the conclusion.
\end{proof}

Substituting \eqref{eq:mod1} in \eqref{eq:nlqw2}, we have
\begin{align}\label{eq:xi}
	\xi(t+1)=L[z(t)]\xi(t)+F_1[z(t),z(t+1)]+G[z(t),\xi(t)],
\end{align}
where
\begin{align*}
	F_1[z_1,z_2]&:=e^{\im \Lambda_+[z_1]}\Phi_+[z_1]-\Phi_+[z_2],\\
	G[z,\xi]&:=\mathcal{U}(\Phi_+[z]+\xi)-\mathcal{U}(\Phi_+[z])-L[z]\xi.
\end{align*}
Let $w\in \C$.
Substituting \eqref{eq:xi} in $\<\xi(t+1),\im D\Phi_+[z(t+1)]w\>=0$, we have
\begin{align*}
	0=\<L[z(t)]\xi(t)+F_1[z(t),z(t+1)]+G[z(t),\xi(t)],\im D\Phi_+[z(t+1)]w\>.
\end{align*}
By $D\Phi_+[e^{\im \theta} z]w=e^{\im \theta}D\Phi_+[z]e^{-\im \theta}w$, which follows from \eqref{nlbs:2}, and lemma \ref{lem:orth} we have
\begin{align*}
	\<L[z(t)]\xi(t),D\Phi_+[z(t+1)]w\>=\<L[z(t)]\xi(t),\(D\Phi_+[z(t+1)]-D\Phi_+[e^{\im \Lambda[z(t)]}z(t)]\)w\>.
\end{align*}
Next, setting
\begin{align}\label{eq:defZ}
Z(t):=e^{\im \Lambda_+[z(t)]}z(t)-z(t+1),
\end{align}
and
 $w=\im Z(t)$, we have
\begin{align*}
\<F_1[z(t),z(t+1)],\im D\Phi[z(t+1)]w\>=-|Z(t)|^2+g(t),
\end{align*}
where
\begin{align}
	g(t)= \<F_1[z(t),z(t+1)] -Z(t)\phi,\im D\Phi[z(t+1)]\im Z(t)\> +\<Z(t)\phi,\im \(D\Phi_+[z_2]-\phi\)\im Z(t)\>.\label{eq:F2}
\end{align}
Therefore, we obtain
\begin{align}
|Z(t)|^2=&\<G[z(t),\xi(t)],\im D\Phi_+[z(t+1)]\im Z(t)\> +g(t)\label{eq:z2}\\&	+\<L[z(t)]\xi(t),\(D\Phi_+[z(t+1)]-D\Phi_+[e^{\im \Lambda[z(t)]}z(t)]\)\im Z(t)\>.\nonumber
\end{align}

We introduce the inverse of $P_{c}$ on $\mathcal{H}_c[z]$.

\begin{lemma}\label{lem:modcoor}
Let $s>0$.
There exists $\delta_s>0$ s.t.\ 
there exists $a_R,a_I \in C^\infty(B_{\C}(0,\delta_s),l^{2,s}(\Z,\C^2))$ s.t. 
\begin{align}\label{GNTop.est.1}
|a_R(z)|+|a_I(z)|\lesssim |z|^2.
\end{align}
Moreover, setting $R[z]$ by
	\begin{align*}
	R[z]\eta=\eta+\<\eta,a_R[z]\> \phi_++\<\eta,a_I[z]\>\im \phi_+,
	\end{align*}
	we have $R[z]:P_c l^2_+(\Z,\C^2)\to \mathcal{H}_c[z]$, $P_c R[z]=\left.I_d\right|_{P_c l_+^2(\Z,\C^2)}$ and $R[z]P_c =\left.\mathrm{Id}\right|_{\mathcal{H}_c[z]}$.
\end{lemma}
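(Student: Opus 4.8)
The plan is to solve directly for the correction vectors $a_R[z],a_I[z]$ by a two-dimensional linear-algebra argument, using that $\mathcal{H}_c[z]$ is a smooth $O(|z|^2)$ perturbation of $P_cl^2_+(\Z,\C^2)$ (throughout I regard $\mathcal{H}_c[z]$ as a subspace of $l^2_+(\Z,\C^2)$, which is harmless since only $\xi\in l^2_+(\Z,\C^2)$ enter the sequel). Abbreviate $\chi_1:=\im D\Phi_+[z]1$ and $\chi_\im:=\im D\Phi_+[z]\im$, so that, by the $\R$-linearity of $D\Phi_+[z]$, one has $\mathcal{H}_c[z]=\{\xi\in l^2_+(\Z,\C^2)\ |\ \<\xi,\chi_1\>=\<\xi,\chi_\im\>=0\}$. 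From the second estimate of \eqref{nlbs:3} (with $w=1$ and $w=\im$, and $\Phi_+=P_+\Phi$) together with $\im\cdot\im\phi_+=-\phi_+$, we may write
\begin{align*}
\chi_1=\im\phi_++r_1,\qquad \chi_\im=-\phi_++r_\im,
\end{align*}
where $z\mapsto r_1,r_\im$ are $C^\infty$ maps from $B_{\C}(0,\delta_s)$ to $l^{2,s}(\Z,\C^2)$ (by Proposition \ref{prop:nlbs}) with $\|r_1\|_{l^{2,s}}+\|r_\im\|_{l^{2,s}}\lesssim_s|z|^2$. I also recall $\ker P_c=\mathrm{span}_\R\{\phi_+,\im\phi_+\}$ and, since $\|\phi_+\|_{l^2}=1$, that $P_c$ is the orthogonal projection of $l^2_+(\Z,\C^2)$ onto $P_cl^2_+(\Z,\C^2)$ for $\<\cdot,\cdot\>$.

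Given $\eta\in P_cl^2_+(\Z,\C^2)$ I would seek $\xi=\eta+a\phi_++b\im\phi_+$ with $a,b\in\R$ lying in $\mathcal{H}_c[z]$. Using $\<\eta,\phi_+\>=\<\eta,\im\phi_+\>=0$, the conditions $\<\xi,\chi_1\>=\<\xi,\chi_\im\>=0$ become the $2\times2$ linear system
\begin{align*}
M(z)\begin{pmatrix}a\\ b\end{pmatrix}=-\begin{pmatrix}\<\eta,r_1\>\\ \<\eta,r_\im\>\end{pmatrix},\qquad
M(z):=\begin{pmatrix}\<\phi_+,\chi_1\> & \<\im\phi_+,\chi_1\>\\ \<\phi_+,\chi_\im\> & \<\im\phi_+,\chi_\im\>\end{pmatrix}.
\end{align*}
Using $(\phi_+,\im\phi_+)_{l^2}=-\im$ and $\|\phi_+\|_{l^2}=1$ one computes $M(0)=\begin{pmatrix}0 & 1\\ -1 & 0\end{pmatrix}$, so $M(z)$ is invertible with $M(z)^{-1}$ depending smoothly on $z$ after shrinking $\delta_s$. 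Solving the system and moving the real scalar entries of $-M(z)^{-1}$ inside the pairings displays $a=\<\eta,a_R[z]\>$, $b=\<\eta,a_I[z]\>$ with $a_R[z],a_I[z]$ the corresponding $\R$-linear combinations of $P_cr_1$ and $P_cr_\im$ (applying $P_c$ changes nothing, since $\eta\in P_cl^2_+(\Z,\C^2)$ and $P_c$ is self-adjoint). Then $a_R[z],a_I[z]\in P_cl^2_+(\Z,\C^2)\cap l^{2,s}(\Z,\C^2)$, they are $C^\infty$ in $z$, and $|a_R[z]|+|a_I[z]|\lesssim_s\|r_1\|_{l^{2,s}}+\|r_\im\|_{l^{2,s}}\lesssim_s|z|^2$, i.e.\ \eqref{GNTop.est.1}; and $R[z]\eta\in\mathcal{H}_c[z]$ by construction, so $R[z]:P_cl^2_+(\Z,\C^2)\to\mathcal{H}_c[z]$.

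It remains to verify the two one-sided inverse identities. Since $P_c\phi_+=P_c\im\phi_+=0$ and $P_c\eta=\eta$ for $\eta\in P_cl^2_+(\Z,\C^2)$, we get $P_cR[z]=\mathrm{Id}|_{P_cl^2_+(\Z,\C^2)}$ at once; in particular $R[z]$ is injective and $P_c|_{\mathcal{H}_c[z]}$ maps onto $P_cl^2_+(\Z,\C^2)$. Moreover $P_c|_{\mathcal{H}_c[z]}$ is injective: if $\xi=a\phi_++b\im\phi_+\in\mathcal{H}_c[z]$, then imposing $\<\xi,\chi_1\>=\<\xi,\chi_\im\>=0$ gives $M(z)\begin{pmatrix}a\\ b\end{pmatrix}=0$, whence $a=b=0$. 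Therefore $P_c|_{\mathcal{H}_c[z]}$ is a bijection onto $P_cl^2_+(\Z,\C^2)$, and the right inverse $R[z]$ is automatically its two-sided inverse; equivalently, for $\xi\in\mathcal{H}_c[z]$ the element $\xi-R[z]P_c\xi$ lies in $\mathcal{H}_c[z]\cap\ker P_c=\{0\}$. This yields $R[z]P_c=\mathrm{Id}|_{\mathcal{H}_c[z]}$.

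The whole argument is routine; the two points carrying the content are the computation $M(0)=\begin{pmatrix}0&1\\-1&0\end{pmatrix}$, which relies on the normalization $\|\phi_+\|_{l^2}=1$ and the identity $(\phi_+,\im\phi_+)_{l^2}=-\im$, and the observation that the deviation $r_1,r_\im$ of $\mathcal{H}_c[z]$ from $P_cl^2_+(\Z,\C^2)$ is $O(|z|^2)$ — precisely the second bound in \eqref{nlbs:3} — which is what forces the quadratic estimate \eqref{GNTop.est.1} rather than a merely linear one. The only mild obstacle is bookkeeping: keeping the $\R$-linear versus $\C$-linear structure straight, and checking that inserting $P_c$ into $a_R[z],a_I[z]$ leaves all four claimed identities intact.
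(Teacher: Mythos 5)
Your proof is correct and follows essentially the same route as the paper: set up the $2\times 2$ real linear system for the coefficients of $\phi_+$ and $\im\phi_+$, invert the matrix $M(z)$ using $M(0)=\begin{pmatrix}0&1\\-1&0\end{pmatrix}$, and deduce the identities from uniqueness. Your one refinement — defining $a_R[z],a_I[z]$ through $P_c r_1, P_c r_\im$ rather than through $\im D\Phi_+[z]1,\im D\Phi_+[z]\im$ themselves — is actually the cleaner way to make the bound \eqref{GNTop.est.1} hold literally (the paper's explicit formula \eqref{GNTop.pr.2} gives $a_R[0]=-\phi_+$, which is $O(|z|^2)$ only after projecting off $\ker P_c$, a harmless discrepancy since only pairings with $\eta\in P_cl^2_+$ are ever used).
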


\begin{proof}
We look for $a_R[z]$, $a_I[z]$ satisfying
\begin{equation}\label{GNTop.pr.1}
\begin{aligned}
\<\eta+\<\eta,a_R[z]\>\phi_++\<\eta,a_I[z]\>\im\phi_+,\im D\Phi_+[z]1\>&=0,\\
\<\eta+\<\eta,a_R[z]\>\phi_++\<\eta,a_I[z]\>\im\phi_+,\im D\Phi_+[z]\im\>&=0,
\end{aligned}
\end{equation}
%
Since
\begin{align*}
\left.\begin{pmatrix}
\<\phi_+,\im D\Phi_+[z]1\> & \<\im\phi_+,\im D\Phi_+[z]1\>\\
\<\phi_+,\im D\Phi_+[z]\im\> & \<\im\phi_+,\im D\Phi_+[z]\im\>
\end{pmatrix}\right|_{z=0}=\begin{pmatrix}
0 & 1\\ -1 & 0
\end{pmatrix},
\end{align*}
is invertible, the matrix above is also invertible for small $z$.
Solving the above w.r.t.\ $\<\eta,a_R[z]\>$ and $\<\eta,a_I[z]\>$, we have
\begin{align*}
\begin{pmatrix}
\<\eta,a_R[z]\>\\
\<\eta,a_I[z]\>
\end{pmatrix}
=-\begin{pmatrix}
\<\phi_+,\im D\Phi_+[z]1\> & \<\im\phi_+,\im D\Phi_+[z]1\>\\
\<\phi_+,\im D\Phi_+[z]\im\> & \<\im\phi_+,\im D\Phi_+[z]\im\>
\end{pmatrix}^{-1}
\begin{pmatrix}
\<\eta,\im D\Phi_+[z]1\>\\
\<\eta,\im D\Phi_+[z]\im \>
\end{pmatrix}.
\end{align*}
Thus, taking
\begin{align}\label{GNTop.pr.2}
\begin{pmatrix}
a_R[z]\\
a_I[z]
\end{pmatrix}
=-\begin{pmatrix}
\<\phi_+,\im D\Phi_+[z]1\> & \<\im\phi_+,\im D\Phi_+[z]1\>\\
\<\phi_+,\im D\Phi_+[z]\im\> & \<\im\phi_+,\im D\Phi_+[z]\im\>
\end{pmatrix}^{-1}
\begin{pmatrix}
\im D\Phi_+[z]1\\
\im D\Phi_+[z]\im
\end{pmatrix},
\end{align}
we see that $a_R$, $a_I$ have the desired properties.
We remark that the property $R[z]P_c =\left.I_d\right|_{\mathcal{H}_c[z]}$ follows from the uniqueness of the solution of \eqref{GNTop.pr.1} and the estimate \eqref{GNTop.est.1} follows from \eqref{GNTop.pr.2} and \eqref{nlbs:3}.
\end{proof}

By lemma \ref{lem:modcoor}, setting $\eta(t):=P_c\xi(t)\in P_cl^2_+(\Z,\C^2)$, we have
\begin{align}\label{eq:modcoor}
	u(t)=\Phi_+[z(t)]+R[z(t)]\eta(t).
\end{align}
Applying $P_c$ to \eqref{eq:xi}, we have
\begin{align}\label{eq:eta}
	\eta(t+1)=U^2\eta(t) + P_c F[z(t),z(t+1), \eta(t)],
\end{align}
where $F[z_1,z_2,\eta]=F_1[z_1,z_2]+F_2[z_1,\eta]+F_3[z_1,\eta]$ and
\begin{align}\label{eq:defF2F3}
F_2[z_1,\eta]:=\(L[z_1]-U^2\)R[z_1]\eta,\quad
F_3[z_1,\eta]:= G[z_1,R[z_1]\eta].
\end{align}
By Duhamel's formula, \eqref{eq:eta} can written as 
\begin{align*}
\eta(t)=U^{2t}\eta(0)+\sum_{s=0}^{t-1}U^{2(t-s)}P_c F[z(s),z(s+1),\eta(s)].
\end{align*}

\section{Linear estimates}\label{sec:lin}

In this section, we collect the linear estimate which we need for the bootstrap argument.
Since this section only deals with linear estimate, we will only assume
\begin{itemize}
\item Assumption \ref{ass:1} and
\item $U$ is generic in the sense of Definition 1.4 of \cite{MSSSSdis}. 
\end{itemize}
Thus, we will not assume that $U$ has exactly one eigenvalue ($U$ can have no eigenvalue or many eigenvalues, it is known that under the above assumption $U$ has only finitely many eigenvalues, see Proposition 1.8 of \cite{MSSSSdis}).
We set $P_c(U)$ to be the orthogonal projection to the orthogonal complement of the eigenvectors of $U$.

\begin{proposition}[Dispersive estimate]\label{Prop:disp}
We have
\begin{align*}
\|U^t P_c(U) u_0\|_{l^\infty}\lesssim \<t\>^{-1/3}\|u_0\|_{l^1}.
\end{align*}
\end{proposition}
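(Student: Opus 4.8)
The plan is to establish the dispersive estimate $\|U^t P_c(U) u_0\|_{l^\infty}\lesssim \langle t\rangle^{-1/3}\|u_0\|_{l^1}$ via a stationary-phase analysis of an oscillatory integral representation of $U^t P_c(U)$. The starting point is the spectral representation of the unitary operator $U$. For $U_\infty = SC_\infty$ one diagonalizes on the torus via the Fourier transform: after conjugating by the Fourier transform, $U_\infty$ becomes multiplication by a $2\times 2$ matrix-valued symbol $\widehat{U_\infty}(k)$, whose eigenvalues are $e^{\pm i\omega(k)}$ for an explicit band function $\omega(k)$ determined by $\cos\omega(k) = \sqrt{1-|\alpha_\infty|^2}\cos k$ (or similar, depending on conventions). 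Then $U_\infty^t$ acting on $u_0$ is, up to smooth coefficients, a sum of integrals of the form $\int_{\mathbb T} e^{i(xk \pm t\omega(k))} a(k)\,dk$, and the phase $\theta(k) = k\cdot(x/t) \pm \omega(k)$ has stationary points where $\omega'(k) = \mp x/t$. Since the group velocity $\omega'$ ranges over a compact interval and the dispersion is genuinely nonlinear — $\omega''$ vanishes only at isolated points where $\omega'''\ne 0$ (this is exactly the ``$1/3$'' degeneracy, as for the discrete Schrödinger/Dirac dispersion) — van der Corput's lemma gives $|\int e^{it\theta(k)}a(k)\,dk|\lesssim \langle t\rangle^{-1/3}\|a\|_{C^2}$ uniformly in $x/t$. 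This yields the free dispersive estimate $\|U_\infty^t u_0\|_{l^\infty}\lesssim \langle t\rangle^{-1/3}\|u_0\|_{l^1}$.

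To pass from the free estimate to the full $U = SC$, one uses the resolvent expansion / integral representation of the resolvent $(U - e^{i\mu})^{-1}$ on the continuous spectrum established in \cite{MSSSSdis}. Concretely, write $U^t P_c(U)$ as a contour integral $\frac{1}{2\pi i}\oint z^t (z-U)^{-1}\,dz$ over (a deformation of) the unit circle avoiding the eigenvalues, and use the factorization of $(z-U)^{-1}$ in terms of $(z-U_\infty)^{-1}$ together with the Jost-solution / limiting-absorption machinery from \cite{MSSSSdis}. The genericity hypothesis (no edge resonance, Definition 1.4 of \cite{MSSSSdis}) guarantees that the perturbed resolvent, and in particular the relevant scattering data, extend continuously up to the edges of the essential spectrum $e^{\pm i\lambda_{\mathrm{edge}}}$ with no extra singularity, so the spectral density is bounded and the only oscillation-decay mechanism is still the stationary phase of $\omega$. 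Away from the spectral edges this is routine; near the edges one needs the bounds on the boundary values of the resolvent from \cite{MSSSSdis} to see the amplitude stays integrable, so that the van der Corput bound survives the perturbation.

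The main obstacle is the behavior near the edges of the essential spectrum, where two effects collide: the group velocity $\omega'$ degenerates (giving the cubic, not quadratic, stationarity responsible for the $1/3$ rather than $1/2$ rate) and simultaneously the resolvent of the perturbed operator could in principle blow up. It is precisely the genericity assumption that rules out the latter (an edge resonance would force a slower decay or an additional singular term), and reconciling the two — showing the amplitude in the oscillatory integral is $C^2$ with controlled norm uniformly near the edge, so that a single van der Corput estimate of order $1/3$ applies on the whole spectrum — is the technical heart of the argument. I would isolate a neighborhood of each edge, use the \cite{MSSSSdis} expansions there, and handle the bulk of the spectrum by the straightforward non-degenerate and cubic-degenerate stationary phase, then patch with a partition of unity on $\mathbb T$.
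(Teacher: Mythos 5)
The paper offers no proof of this proposition at all: it is quoted verbatim from the reference \cite{MSSSSdis}, and the ``proof'' in the text is the single line ``See \cite{MSSSSdis}.'' So your proposal is being compared not with an argument in this paper but with the strategy of that cited work --- which it does, in outline, reproduce correctly: diagonalize $U_\infty$ in Fourier to get the band function $\omega(k)$ with $\cos\omega(k)=\sqrt{1-|\alpha_\infty|^2}\cos k$, obtain the free $\<t\>^{-1/3}$ decay from van der Corput, and transfer it to $U=SC$ via the Jost-solution/limiting-absorption machinery, with genericity excluding edge resonances. As a proof it remains a sketch, since the uniform control of the perturbed spectral density is exactly the content of the cited paper, but that is no worse than what this paper does.

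One genuine misconception should be flagged. You locate the cubic degeneracy (``$\omega'$ degenerates, giving the cubic stationarity responsible for the $1/3$'') at the edges of the essential spectrum and present the ``technical heart'' as reconciling this with the possible resolvent blow-up there. In fact the two difficulties live at different spectral points. At the band edges one has $\omega'(k)=0$ but $\omega''(k)\neq 0$: a nondegenerate stationary point contributing only $t^{-1/2}$; the edge is where the resonance/genericity issue lives. The $t^{-1/3}$ rate comes instead from the interior inflection points of $\omega$, where $\omega''=0$ and $\omega'''\neq 0$ --- these are the points of extremal group velocity responsible for the ballistic peaks of the quantum walk at $x\approx\pm t\sqrt{1-|\alpha_\infty|^2}$, and they sit in the middle of the band, away from the edges. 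So the collision you describe does not occur; the argument actually decouples into (i) a resonance analysis at the edges, handled by genericity, and (ii) an Airy-type stationary phase estimate at the interior inflection points, and your partition-of-unity scheme should be organized around those two separate regions rather than a single problematic edge neighborhood. A minor further point: van der Corput with amplitude needs only $\|a\|_{L^\infty}+\|a'\|_{L^1}$, not $\|a\|_{C^2}$, which matters when the amplitude built from the perturbed Jost functions is not smooth.
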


\begin{proof}
See, \cite{MSSSSdis}.
\end{proof}

We also use the Kato smoothness property (actually the sufficient condition of Kato smoothness \cite{Kato65MA}).
\begin{lemma}\label{lem:suffKato}
Let $s>1$.
Then, 
\begin{align*}
\sup_{\Im \mu\neq 0}\|R(\mu)P_c(U)\|_{\mathcal{L}(l^{2,s},l^{2,-s})}<\infty.,
\end{align*}
where $R(\mu)=(Ue^{-\im \mu}-1)^{-1}$.
\end{lemma}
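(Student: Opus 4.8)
The plan is to reduce the statement to a uniform bound for a family of operators on $l^2$ and then to localize in the spectral parameter. Let $A$ denote multiplication by $\langle x\rangle^{-s}$, so that $A=A^*$, $A\in\mathcal{L}(l^2,l^{2,s})$, $A\in\mathcal{L}(l^{2,-s},l^2)$, and
\[
\|R(\mu)P_c(U)\|_{\mathcal{L}(l^{2,s},l^{2,-s})}=\|AR(\mu)P_c(U)A\|_{\mathcal{L}(l^2)} .
\]
Since $Ue^{-\im\mu}$ is a scalar multiple of a unitary, it is normal with spectrum $e^{-\im\mu}\sigma(U)$, hence $\|R(\mu)\|_{\mathcal{L}(l^2)}=\mathrm{dist}(1,e^{-\im\mu}\sigma(U))^{-1}$. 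This is bounded uniformly when $|\Im\mu|\geq\mu_0>0$; since $\mu\mapsto R(\mu)$ is $2\pi$-periodic in $\Re\mu$, and since $R(\mu)$ is also $l^2$-bounded uniformly whenever $|\Im\mu|\le\mu_0$ and $e^{\im\Re\mu}$ stays at a fixed positive distance from $\sigma(U)$, it remains to bound $\|AR(\mu)P_c(U)A\|_{\mathcal{L}(l^2)}$ for $\mu$ in a fundamental domain, $0<|\Im\mu|\le\mu_0$, with $e^{\im\Re\mu}$ near one of the finitely many ``thresholds'' $\theta$ for which $e^{\im\theta}\in\sigma(U)=\sigma_{\mathrm{ess}}(U)\cup\sigma_{\mathrm d}(U)$ ($\sigma_{\mathrm d}(U)$ being finite under the standing hypotheses).

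Near a threshold $\theta$ with $e^{\im\theta}\in\sigma_{\mathrm d}(U)$ I would exploit that $R(\mu)$ commutes with the eigenprojection $\Pi=(\cdot,\phi)\phi$ onto the corresponding (simple) eigenspace, so that there $R(\mu)P_c(U)=R(\mu)(1-\Pi)$ up to a term controlled uniformly by the remaining eigenprojections. On $\mathrm{ran}(1-\Pi)$ the operator $U$ has spectrum $\sigma(U)\setminus\{e^{\im\theta}\}$, which by Remark \ref{rem:disc} stays at a positive distance from $e^{\im\theta}$; hence $\|R(\mu)(1-\Pi)\|_{\mathcal{L}(l^2)}$ is bounded for $\Re\mu$ near $\theta$, and the required $l^{2,s}\to l^{2,-s}$ bound follows a fortiori from $\|\cdot\|_{l^{2,-s}}\le\|\cdot\|_{l^2}\le\|\cdot\|_{l^{2,s}}$.

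The one substantial point is a threshold $\theta$ with $e^{\im\theta}\in\sigma_{\mathrm{ess}}(U)$, where $P_c(U)$ is of no help and one needs a genuine limiting absorption principle for $U$. Here I would use the integral representation of the resolvent from \cite{MSSSSdis}: with $R_\infty(\mu):=(U_\infty e^{-\im\mu}-1)^{-1}$ one has
\[
R(\mu)=R_\infty(\mu)\bigl(1+(U-U_\infty)e^{-\im\mu}R_\infty(\mu)\bigr)^{-1},
\]
and $R_\infty(\mu)$ is explicit by Fourier series. In the interior of the essential band its kernel is an oscillatory sum whose $\langle x\rangle^{-s}$-weighted operator norm stays bounded as $\Im\mu\to0$ for any $s>1/2$, by a direct kernel estimate; at the band edges the kernel develops a $|\Re\mu-\theta|^{-1/2}$-type singularity whose singular part is a fixed finite-rank operator, so that, after sandwiching with $A$ and using $\langle x\rangle^{-s}\in l^1(\Z)$ (this is where $s>1$ enters), $(U-U_\infty)e^{-\im\mu}R_\infty(\mu)$ becomes a compact-operator-valued function continuous up to the edge; the genericity hypothesis (no edge resonance, Definition 1.4 of \cite{MSSSSdis}) is precisely the invertibility of $1+(U-U_\infty)e^{-\im\mu}R_\infty(\mu)$ there, and a Fredholm/Neumann-series argument then yields the uniform weighted bound for $R(\mu)$. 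Combining the three regimes gives the lemma; alternatively, once Theorem \ref{thm.KSU.4.1} is available the claim is equivalent to the $l^2$-local-decay estimate $\sum_{t\in\Z}\|\langle x\rangle^{-s}U^tP_c(U)u\|_{l^2}^2\lesssim\|u\|_{l^2}^2$, which one reads off from the same representation via Plancherel in $t$. I expect the band-edge analysis to be the only real obstacle: it is there that genericity and the weight $s>1$ are needed and that one must quote the technical heart of \cite{MSSSSdis}; the eigenvalue and interior-band contributions are soft.
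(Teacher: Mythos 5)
Your architecture (trivial bound off the spectrum, $P_c$ near the discrete eigenvalues, a limiting absorption principle on the essential spectrum) is sound, and the first two regimes are indeed soft. But the band--edge step, which you correctly identify as the crux, contains a real error as written. At a band edge $e^{\im\theta}\in\partial\sigma_{\mathrm{ess}}(U)$ the free resolvent has the form $R_\infty(\mu)=|\mu-\theta|^{-1/2}\,c(\mu)\,P_0+O(1)$ in weighted spaces, where $P_0$ is a fixed finite-rank operator with \emph{bounded} kernel. Sandwiching with $A=\langle x\rangle^{-s}$, or composing with $U-U_\infty$ (which has $l^{1,1}$ decay), makes $P_0$ compact but does nothing to the scalar factor $|\mu-\theta|^{-1/2}$; hence $(U-U_\infty)e^{-\im\mu}R_\infty(\mu)$ is \emph{not} continuous up to the edge in any weighted operator topology, and the plain Fredholm/Neumann-series inversion you invoke does not close there. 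What genericity actually provides is the invertibility of a finite-dimensional Schur complement (equivalently, nonvanishing of the Wronskian of the Jost solutions at the edge), and one must carry out a Jensen--Nenciu-type threshold expansion to see that the singularity of $R_\infty(\mu)$ and that of $\bigl(1+(U-U_\infty)e^{-\im\mu}R_\infty(\mu)\bigr)^{-1}$ cancel in the product. That expansion is precisely the technical content of Sections 5--6 of \cite{MSSSSdis}, so your plan, completed honestly, amounts to redoing (or quoting wholesale) the heart of that paper rather than deducing the lemma from it.

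For contrast, the paper's proof is far more economical and bypasses the spectral-parameter decomposition entirely: it quotes from \cite{MSSSSdis} the pointwise kernel bounds, uniform in $\mu\in\C\setminus\R$, of the form $\|K_1(x,y)\|_{\mathcal{L}(\C^2)}\lesssim\max(1,x)\max(1,-y)1_{x\leq y}$ (and its mirror image), into which all the eigenvalue and threshold analysis has already been absorbed, and then concludes by a two-line Schur test against the weights $\langle x\rangle^{-s}\langle y\rangle^{-s}$. The linear growth $\max(1,x)\max(1,-y)$ of the kernel is the trace of the band-edge behavior, and the Schur test against it is the one and only place where $s>1$ is used.
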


\begin{proof}
By (5.13) of \cite{MSSSSdis} (we note that there is a typo and the r.h.s.\ is $\max$ instead of $\min$) and Lemma 6.5 of \cite{MSSSSdis}, the kernel of $R(\mu)$ can decomposed as
\begin{align*}
K(x,y)=K_{1}(x,y)+K_2(x,y),
\end{align*}
s.t.
\begin{align*}
\|K_1(x,y)\|_{\mathcal{L}(\C^2)}\lesssim \max(1,x)\max(1,-y)1_{x\leq y},\\
\|K_2(x,y)\|_{\mathcal{L}(\C^2)}\lesssim \max(1,-x)\max(1,y)1_{x\geq y},
\end{align*}
where the implicit constant is independent of $\mu\in \C\setminus\R$.
Since the two are symmetric, we only bound the first.
First, for $y<0$,
\begin{align*}
\sum_{x\in\Z}\<x\>^{-s}\|K_1(x,y)\|_{\mathcal{L}(\C^2)}\<y\>^{-s}\lesssim\sum_{x=-\infty}^{y}\<x\>^{-s}\<y\>^{1-s}\sim \<y\>^{2-s}\lesssim 1.
\end{align*}
Next, for $y\geq 0$,
\begin{align*}
\sum_{x\in\Z}\<x\>^{-s}\|K_1(x,y)\|_{\mathcal{L}(\C^2)}\<y\>^{-s}\lesssim\sum_{x=-\infty}^{-1}\<x\>^{-s}\<y\>^{1-s}+\sum_{x=0}^y\<x\>^{1-s}\<y\>^{-s}
\lesssim 1+ \<y\>^{2-2s}\lesssim 1.
\end{align*}
Thus, we have the desired bound from the Schur test.
\end{proof}

For $I\subset \Z$, we set,
\begin{align*}
\mathrm{Stz}(I)&:=l^6(I,l^\infty(\Z,\C^2))\cap l^\infty(I,l^2(\Z,\C^2)),\\
\mathrm{Stz}^*(I)&:=l^{6/5}(I,l^1(\Z,\C^2))+ l^1(I,l^2(\Z,\C^2)).
\end{align*}
Further, we set $\mathrm{Stz}_T:=\mathrm{Stz}(\Z\cap [0,T])$ and $\mathrm{Stz}:=\mathrm{Stz}(\Z\cap [0,\infty))$.
Similarly, we set $l^p_TX:=l^p(\Z\cap[0,T],X)$ and $l^pX:=l^p(\Z\cap[0,\infty),X)$.

\begin{proposition}[Strichartz estimates and Kato smoothness]\label{prop:homest}
Let $s>1$ and $0\leq t_1<t_2$.
Then, we have
\begin{align}
\|U^t P_c(U) u_0\|_{\mathrm{Stz}_T\cap l^2_Tl^{2,-s}}&\lesssim_s \|u_0\|_{l^2},\label{eq:stzkato}\\
\|\sum_{\tau=t_1}^{t_2-1} U^{-\tau}P_c(U) F(\tau)\|_{l^2}&\lesssim_s \|F\|_{\mathrm{Stz}^*\([t_1,t_2-1]\) + l^2\([t_1,t_2-1],l^{2,s}\)}. \label{eq:dualest}
\end{align}
where the implicit constant is independent of $T$.
\end{proposition}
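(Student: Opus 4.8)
The plan is to reduce the whole proposition to the homogeneous estimate \eqref{eq:stzkato}, since \eqref{eq:dualest} is simply its dual, and then to prove \eqref{eq:stzkato} by splitting the norm of $\mathrm{Stz}_T\cap l^2_Tl^{2,-s}$ into three independent pieces: the energy piece $l^\infty_Tl^2$, the Strichartz piece $l^6_Tl^\infty$, and the Kato-smoothing piece $l^2_Tl^{2,-s}$.

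\textbf{Reduction of \eqref{eq:dualest} to \eqref{eq:stzkato}.} First I would note that for $u_0\in l^2(\Z,\C^2)$, $\<\sum_{\tau=t_1}^{t_2-1}U^{-\tau}P_c(U)F(\tau),u_0\>=\sum_{\tau=t_1}^{t_2-1}\<F(\tau),P_c(U)U^{\tau}u_0\>$. Decomposing $F=F'+F''$ on $[t_1,t_2-1]$ with $F'\in\mathrm{Stz}^*$ and $F''\in l^2l^{2,s}$ and applying Hölder in time together with the dual pairings $l^1$--$l^\infty$, $l^2$--$l^2$ and $l^{2,s}$--$l^{2,-s}$ in space, this sum is bounded by $\(\|F'\|_{\mathrm{Stz}^*}+\|F''\|_{l^2l^{2,s}}\)\|P_c(U)U^{\tau}u_0\|_{\mathrm{Stz}\cap l^2l^{2,-s}}$, which by \eqref{eq:stzkato} is $\lesssim_s\(\|F'\|_{\mathrm{Stz}^*}+\|F''\|_{l^2l^{2,s}}\)\|u_0\|_{l^2}$. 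Taking the infimum over the splitting of $F$ and the supremum over $\|u_0\|_{l^2}\le 1$ yields \eqref{eq:dualest}, and extending $F$ by zero outside $[t_1,t_2-1]$ shows the constant is independent of $t_1,t_2$.

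\textbf{The three pieces of \eqref{eq:stzkato}.} The $l^\infty_Tl^2$ bound is immediate from the unitarity of $U$ and $\|P_c(U)\|_{\mathcal{L}(l^2)}\le 1$. For the $l^6_Tl^\infty$ bound I would run the standard $TT^*$ argument, which reduces the claim to boundedness of $G\mapsto\sum_{s}U^{t-s}P_c(U)G(s)$ from $l^{6/5}l^1$ to $l^6l^\infty$; Proposition \ref{Prop:disp} gives $\|U^{t-s}P_c(U)G(s)\|_{l^\infty}\lesssim\<t-s\>^{-1/3}\|G(s)\|_{l^1}$, and since $\<\cdot\>^{-1/3}$ lies in weak $l^3(\Z)$ and $1+\frac{1}{6}=\frac{1}{3}+\frac{5}{6}$, the discrete Hardy--Littlewood--Sobolev (weak Young) inequality closes the bound; because the $l^6l^\infty$ and $l^\infty l^2$ norms only decrease on restricting the time interval, the constant does not depend on $T$. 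Finally, the $l^2_Tl^{2,-s}$ bound is the Kato smoothness estimate for $UP_c(U)$ with weight $\<x\>^{-s}$: by Lemma \ref{lem:suffKato}, for $s>1$ we have $\sup_{\Im\mu\neq 0}\|R(\mu)P_c(U)\|_{\mathcal{L}(l^{2,s},l^{2,-s})}<\infty$, and Corollary \ref{cor:KS} --- whose underlying equivalence is Theorem \ref{thm.KSU.4.1} --- converts this resolvent bound into $\sum_{t\in\Z}\|\<x\>^{-s}U^{t}P_c(U)u_0\|_{l^2}^2\lesssim_s\|u_0\|_{l^2}^2$; restricting the sum to $\Z\cap[0,T]$ only decreases the left-hand side.

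\textbf{Expected main obstacle.} None of these steps is deep; the two points requiring a little care are (i) the $l^6l^\infty$ estimate is mildly endpoint-flavoured, since the dispersive kernel $\<t\>^{-1/3}$ fails to be summable over $\Z$, so one genuinely needs Hardy--Littlewood--Sobolev rather than ordinary Young's inequality (though the time exponent $6$ is finite, so this stays comfortably off the true endpoint), and (ii) one must already have the Kato-smoothness dictionary for discrete-time unitary operators available, which is exactly why Theorem \ref{thm.KSU.4.1} and Corollary \ref{cor:KS} are developed separately in the appendix. Beyond these, the argument is routine duality together with the $TT^*$ lemma.
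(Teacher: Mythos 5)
Your proposal is correct and follows essentially the same route as the paper: the homogeneous Strichartz bound via the dispersive estimate of Proposition \ref{Prop:disp} with the standard $TT^*$/Hardy--Littlewood--Sobolev argument, the $l^2_Tl^{2,-s}$ bound via Lemma \ref{lem:suffKato} together with Corollary \ref{cor:KS}, and \eqref{eq:dualest} by duality from \eqref{eq:stzkato}. The paper merely cites these steps as standard (referring to Lemma 2.3 of \cite{MSSSS18DCDS} for the Strichartz part), whereas you spell them out; the content is the same.
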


\begin{proof}
The homogeneous Strichartz estimates $\|U^t P_c(U) u_0\|_{\mathrm{Stz}_T}\lesssim \|u_0\|_{l^2}$ follows from Proposition \ref{Prop:disp} combined with standard argument, see e.g. the proof of Lemma 2.3 of \cite{MSSSS18DCDS}.
The estimate $\|U^t P_c(U)u_0\|_{l^2_T l^{2,-s}}\lesssim_s \|u_0\|_{l^2}$ is an unitary analog of the Kato smoothness of $\<x\>^{-s}P_c(U)$, see Definition \ref{def:KS}.
Thus, from Corollary \ref{cor:KS} and Lemma \ref{lem:suffKato} we have the estimate.
Finally, the estimate \eqref{eq:dualest} is the dual of \eqref{eq:stzkato}.
\end{proof}

\begin{proposition}\label{prop:inhomest}
Let $s>1$.
Then, we have
\begin{align}
\|\sum_{\tau=0}^{t-1} U^{t-\tau}P_c(U)F(\tau)\|_{\mathrm{Stz}_{T+1}}&\lesssim_s \|F\|_{\mathrm{Stz}^*_T+l^2_Tl^{2,s}},\label{eq:inhom1}\\
\|\sum_{\tau=0}^{t-1} U^{t-\tau}P_c(U)F(\tau)\|_{l^2_{T+1}l^{2,-s}}&\lesssim_s \|F\|_{l^2_Tl^{2,s}},\label{eq:inhom2}
\end{align}
where the implicit constant is independent of $T$.
\end{proposition}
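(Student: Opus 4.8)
I would obtain both inequalities from the homogeneous estimates of Proposition \ref{prop:homest} by the usual two-step scheme: first prove the \emph{non-retarded} (full-sum) version by composition/duality, then pass to the causal sum $\sum_{\tau=0}^{t-1}$. For the non-retarded step, fix $F$ supported on $[0,T]\cap\Z$ and set $g:=\sum_{\tau=0}^{T}U^{-\tau}P_c(U)F(\tau)=P_c(U)g$. By \eqref{eq:dualest} (the dual of \eqref{eq:stzkato}) we have $\|g\|_{l^2}\lesssim_s\|F\|_{\mathrm{Stz}^*_T+l^2_Tl^{2,s}}$, and then, applying $U^t$ and using \eqref{eq:stzkato} once more, $\sum_{\tau=0}^{T}U^{t-\tau}P_c(U)F(\tau)=U^tP_c(U)g$ satisfies
\[
\|U^tP_c(U)g\|_{\mathrm{Stz}_{T+1}\cap l^2_{T+1}l^{2,-s}}\lesssim_s\|g\|_{l^2}\lesssim_s\|F\|_{\mathrm{Stz}^*_T+l^2_Tl^{2,s}}.
\]
Thus the full-sum operator is bounded $\mathrm{Stz}^*_T+l^2_Tl^{2,s}\to\mathrm{Stz}_{T+1}$ and also $l^2_Tl^{2,s}\to l^2_{T+1}l^{2,-s}$, with constants independent of $T$.

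\textbf{Proof of \eqref{eq:inhom1}.} The map $F\mapsto\sum_{\tau=0}^{t-1}U^{t-\tau}P_c(U)F(\tau)$ is exactly the causal truncation of the full-sum operator. Writing $\mathrm{Stz}$ as an intersection of spaces with time-exponents $6$ and $\infty$ and $\mathrm{Stz}^*$ as a sum of spaces with time-exponents $6/5$ and $1$, and recalling that $l^2_Tl^{2,s}$ has time-exponent $2$, in each of the finitely many pairings the input time-exponent is \emph{strictly} less than the output time-exponent ($6/5,1,2<6$ and $6/5,1,2<\infty$). The Christ--Kiselev lemma therefore upgrades each full-sum bound from the previous step to the same bound for the causal truncation; summing the pieces gives \eqref{eq:inhom1} with a $T$-independent constant.

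\textbf{Proof of \eqref{eq:inhom2}.} Here Christ--Kiselev is unavailable, since input and output both carry time-exponent $2$; I would argue on the Fourier side in time. Extend $F$ by $0$ to $\Z$; then $v(t):=\sum_{n\ge1}U^nP_c(U)F(t-n)$ vanishes for $t\le0$, agrees for $t\le T+1$ with the sum in \eqref{eq:inhom2}, and lies in $l^2_\Z l^{2,-s}$ by the non-retarded step. For $r\in(0,1)$ put $v_r(t):=\sum_{n\ge1}r^nU^nP_c(U)F(t-n)$; taking Fourier series in $t$ and summing the (now norm-convergent) geometric series,
\[
\widehat{v_r}(\theta)=\Bigl(\sum_{n\ge1}(rUe^{-\im\theta})^n\Bigr)P_c(U)\widehat F(\theta)=-\bigl(1+R(\mu)\bigr)P_c(U)\widehat F(\theta),\qquad \mu=\theta+\im\log r,
\]
and $\Im\mu\neq0$. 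Inserting $1=\langle x\rangle^{s}\langle x\rangle^{-s}$ and using that $\langle x\rangle^{-s}P_c(U)\langle x\rangle^{-s}$ is bounded on $l^2$ (as $P_c(U)$ differs from the identity by a finite-rank operator with $l^2$ range) together with Lemma \ref{lem:suffKato}, which gives $\|\langle x\rangle^{-s}R(\mu)P_c(U)\langle x\rangle^{-s}\|_{\mathcal L(l^2)}\lesssim_s1$ uniformly in $r,\theta$, one gets $\|\langle x\rangle^{-s}\widehat{v_r}(\theta)\|_{l^2}\lesssim_s\|\langle x\rangle^{s}\widehat F(\theta)\|_{l^2}$. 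Plancherel in $t$ then yields $\|v_r\|_{l^2_\Z l^{2,-s}}\lesssim_s\|F\|_{l^2_\Z l^{2,s}}$ uniformly in $r$; since each $v(t)$ is a finite sum, $v_r(t)\to v(t)$ as $r\to1$, and Fatou gives $\|v\|_{l^2_\Z l^{2,-s}}\lesssim_s\|F\|_{l^2_Tl^{2,s}}$. Restricting to $t\in[0,T+1]$ proves \eqref{eq:inhom2}. (Alternatively, one may read off the required boundary-value resolvent bound from the characterization of Kato smoothness in Theorem \ref{thm.KSU.4.1} and Corollary \ref{cor:KS}.)

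\textbf{Main obstacle.} Everything except \eqref{eq:inhom2} is ``turn the crank'': duality for the full sum plus Christ--Kiselev. The one genuinely different point is \eqref{eq:inhom2}, where the endpoint failure of Christ--Kiselev forces the Fourier-side computation; its crux is the uniform weighted resolvent estimate of Lemma \ref{lem:suffKato}, and the only technical care needed is the harmless regularization $r\uparrow1$ that legitimizes summing the geometric series and applying Plancherel.
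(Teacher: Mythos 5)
Your proposal is correct and follows essentially the same route as the paper: \eqref{eq:inhom1} via the full-sum operator (duality, \eqref{eq:dualest} and \eqref{eq:stzkato}) upgraded to the retarded sum by Christ--Kiselev, and \eqref{eq:inhom2} via an exponential regularization, Plancherel in time, summation of the geometric series to the resolvent, and the uniform weighted bound of Lemma \ref{lem:suffKato}. Your $r\uparrow 1$ limit is the paper's $\epsilon\to 0$ in disguise, so there is nothing substantively different to compare.
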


\begin{proof}
The estimate \eqref{eq:inhom1} follows from Christ-Kiselev lemma  combined with \eqref{eq:stzkato} and \eqref{eq:dualest}.
For \eqref{eq:inhom2}, restricting the operator appropriately, it suffices to show
\begin{align*}
\|\sum_{\tau=-\infty}^{t-1} e^{-\epsilon(t- \tau)} U^{t-\tau}P_c(U)F(\tau)\|_{l^2\(\Z,l^{2,-s}\)}&\lesssim_s \|F\|_{l^2\(\Z,l^{2,s}\)},
\end{align*}
with the implicit constant independent of $\epsilon$.
By Plancherel and Fubini, we have
\begin{align*}
&\|\sum_{\tau=-\infty}^{t-1} e^{-\epsilon(t- \tau)} U^{t-\tau}P_c(U)F(\tau)\|_{l^2(\Z,l^{2,-s})}
=\| \mathcal{F}\(\sum_{\tau=-\infty}^{\cdot-1}(e^{-\epsilon}U)^{\cdot-\tau}P_c(U) F(\tau)\)(\lambda)\|_{L^2_\lambda(\T,l^{2,-s})}
\\&
=\frac{1}{2\pi}\| \sum_{\tau\in\Z}\sum_{t=\tau+1}^{\infty}e^{-\im \lambda(t-\tau) } (e^{-\epsilon}U)^{t-\tau}P_c(U) e^{\im \lambda \tau}F(\tau)\|_{L^2(\T,l^{2,-s})} \\&
=\frac{1}{2\pi}e^{-\epsilon}\|R(\lambda-\im \epsilon)P_c(U)\sum_{\tau\in\Z}e^{\im \lambda \tau}F(\tau)\|_{L^2(\T,l^{2,-s})}\\&
\lesssim \sup_{\Im \mu\neq 0}\|R(\mu)P_c(U)\|_{\mathcal{L}(l^{2,s},l^{2,-s})}\|\mathcal{F}\(F(-\cdot)\)(\lambda)\|_{L^2(\T,l^{2,s})}\lesssim \|F\|_{l^2(\Z,l^{2,s})}.
\end{align*}
Here, the Fourier transform is given by $\mathcal{F}u(\lambda)=(2\pi)^{-1}\sum_{t\in\Z}e^{-\im \lambda t}u(t)$.
\end{proof}
We remark that even though we have prepared the estimate for $U$, it is easy to modify to obtain the same estimate for $U^2$.

\section{Proof of the main theorem}\label{sec:prmain}
In this section, we prove Theorem \ref{thm:main}.
First, from Proposition \ref{prop:nlbs}, Lemma \ref{lem:modcoor} and \eqref{eq:modcoor}, we have
\begin{align*}
\|u(t)\|_{l^2}\sim |z(t)|+\|\eta(t)\|_{l^2}.
\end{align*}
Let $\epsilon_0>0$ to be a sufficiently small constant which we will determine later and set
\begin{align*}
\|u_0\|_{l^2}=\epsilon <\epsilon_0.
\end{align*}
Then, since $\|u(t)\|_{l^2}=\|u_0\|_{l^2}$, we have
\begin{align}\label{eq:aprioribound}
\|z\|_{l^\infty(\Z)}+\|\eta\|_{l^\infty(\Z,l^2(\Z,\C^2))}\sim \epsilon.
\end{align}

We will prove the following bootstrap result:
\begin{proposition}\label{prop:boost}
There exists $C_0>0$ s.t. for $C_1\geq C_0$, there exists $\epsilon_1>0$ s.t.\ 
if $\|u_0\|_{l^2}=\epsilon<\epsilon_1$ and
\begin{align}\label{eq:prop:boost}
\|\eta\|_{\mathrm{Stz}_T\cap l^2_Tl^{2,-2}}\leq C_1\|\eta(0)\|_{l^2},
\end{align}
then \eqref{eq:prop:boost} holds for $T$ replaced by $T+1$.
\end{proposition}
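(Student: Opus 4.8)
The plan is to run a standard continuity (bootstrap) argument: assume the a priori bound \eqref{eq:prop:boost} on $[0,T]$ and recover the same bound with a strictly smaller constant (say $C_1/2$) on $[0,T+1]$, provided $\epsilon$ is small. The starting point is the Duhamel formula for $\eta$,
\begin{align*}
\eta(t)=U^{2t}\eta(0)+\sum_{s=0}^{t-1}U^{2(t-s)}P_c F[z(s),z(s+1),\eta(s)],\quad F=F_1+F_2+F_3,
\end{align*}
together with the equation \eqref{eq:z2} governing $Z(t)=e^{\im\Lambda_+[z(t)]}z(t)-z(t+1)$. First I would apply the homogeneous Strichartz/Kato estimate \eqref{eq:stzkato} (for $U^2$) to the free term $U^{2t}\eta(0)$, and the inhomogeneous estimates \eqref{eq:inhom1}, \eqref{eq:inhom2} to the Duhamel term, which reduces everything to estimating $\|F\|_{\mathrm{Stz}^*_T + l^2_T l^{2,2}}$, i.e.\ to bounding the three nonlinear pieces $F_1,F_2,F_3$ in the dual Strichartz-plus-weighted-$l^2$ norm.

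The core of the work is these three nonlinearity estimates, using the bound $|a_R|+|a_I|\lesssim|z|^2$ from Lemma \ref{lem:modcoor}, the estimates \eqref{nlbs:3} on $\Phi_+$ and $D\Phi_+$, Assumption \ref{ass:nonlinear} ($g'(0)=g''(0)=0$, so the local nonlinearity vanishes to order $\geq 4$ in the amplitude), and the a priori bound. For $F_2=(L[z_1]-U^2)R[z_1]\eta$: since $L[z]-U^2$ is a localized (exponentially decaying coefficient) operator of size $O(|z|^2)$, this is controlled in $l^2_T l^{2,2}$ by $\epsilon^2\|\eta\|_{l^2_T l^{2,-2}}\lesssim \epsilon^2 C_1\|\eta(0)\|_{l^2}$, which beats $C_1\|\eta(0)\|_{l^2}$ for $\epsilon$ small. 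For $F_3=G[z_1,R[z_1]\eta]$, the purely nonlinear remainder, I would Taylor-expand $\mathcal U$ around $\Phi_+[z]$; because the coin nonlinearity enters through $g$ with $g'(0)=g''(0)=0$, the leading term is cubic in $\Phi_+[z]+R[z]\eta$ of total degree $\geq 4$ in $l^2$-amplitude, hence estimable in $l^{6/5}_T l^1$ via Hölder in time against $\|\eta\|_{l^6_T l^\infty}$ and the embedding $l^2\hookrightarrow l^\infty$, again gaining a power of $\epsilon$. For $F_1=e^{\im\Lambda_+[z_1]}\Phi_+[z_1]-\Phi_+[z_2]$ with $z_2=z(t+1)=e^{\im\Lambda_+[z_1]}z_1-Z(t)$: using \eqref{nlbs:3} this is $O(|Z(t)|\,|z|)+O(|z|^3|Z(t)|)$ modulo the $Z(t)\phi$ term, so it is controlled in $l^1_T l^{2,2}$ once one controls $\|Z\|_{l^1_T}$. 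That last point forces us to close the estimate for $Z$ simultaneously: from \eqref{eq:z2}, $|Z(t)|^2\lesssim |Z(t)|\big(\|G[z,\xi]\|_{l^{2,2}}+\dots\big)+|g(t)|$ with $g(t)$ from \eqref{eq:F2} also of the form $O(|Z(t)|^2|z|)$, so absorbing the self-terms gives $|Z(t)|\lesssim \|F_2+F_3\text{-type quantities}\|$, and summing in $t$ yields $\|Z\|_{l^1_T}\lesssim \epsilon\,C_1\|\eta(0)\|_{l^2}$ plus higher order.

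Putting these together, $\|F\|_{\mathrm{Stz}^*_T+l^2_Tl^{2,2}}\lesssim \|\eta(0)\|_{l^2}+\epsilon(C_1+C_1^2+\dots)\|\eta(0)\|_{l^2}$, so by \eqref{eq:stzkato}, \eqref{eq:inhom1}, \eqref{eq:inhom2} one gets $\|\eta\|_{\mathrm{Stz}_{T+1}\cap l^2_{T+1}l^{2,-2}}\leq C_0\|\eta(0)\|_{l^2}+ C\epsilon(C_1+C_1^2)\|\eta(0)\|_{l^2}$; choosing $C_0$ as the absolute constant from the linear estimates, taking $C_1\geq 2C_0$ and then $\epsilon_1$ small enough that $C\epsilon_1(C_1+C_1^2)\leq C_1/2$, one recovers \eqref{eq:prop:boost} at $T+1$. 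I expect the main obstacle to be the coupled nature of the $z$- and $\eta$-estimates: the $F_1$ term in the $\eta$-equation needs $\|Z\|_{l^1}$, while $Z$ is itself defined through the modulation/orthogonality equation whose right-hand side \eqref{eq:z2} involves $G[z,\xi]$ and $L[z]\xi$, so one must set up the bootstrap so that the bounds for $\eta$ in $\mathrm{Stz}_T\cap l^2_Tl^{2,-2}$ and for $Z$ in $l^1_T$ are proved in tandem, each feeding the other, and verify that all cross terms carry at least one extra power of $\epsilon$ so that the fixed-point/absorption step genuinely closes. The secondary technical nuisance is bookkeeping the $z$-dependent conjugations (the map $w\mapsto w+\im(D\Lambda_+[z]w)z$ and the operators $R[z]$, $D\Phi_+[z]$) and the difference $D\Phi_+[z(t+1)]-D\Phi_+[e^{\im\Lambda_+[z(t)]}z(t)]=O(|Z(t)|)$ appearing in \eqref{eq:z2}, but these are all Lipschitz in $z$ with small Lipschitz constant and contribute only higher-order terms.
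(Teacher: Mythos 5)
Your overall strategy coincides with the paper's: Duhamel for $\eta$, the homogeneous estimate \eqref{eq:stzkato} for the free term, the inhomogeneous estimates \eqref{eq:inhom1}--\eqref{eq:inhom2} for the Duhamel term, the splitting $F=F_1+F_2+F_3$, and the coupled $l^1_T$ bound on $Z$ extracted from \eqref{eq:z2} by absorbing the self-terms. Your $F_1$ and $F_2$ estimates and the final absorption step are exactly Lemmas \ref{lem:boot2}, \ref{lem:boot1}, \ref{lem:boot4} and the concluding computation of the paper.

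The one step that would fail as written is your treatment of $F_3=G[z,R[z]\eta]$. You propose to place all of it in $l^{6/5}_Tl^1$ by H\"older against $\|\eta\|_{l^6_Tl^\infty}$. This works only for the pure-power part $F_3[0,\eta]$, and there the relevant degree is $7$, not ``$\geq 4$'': since $g(s)=O(s^3)$ one has $\|N(w)-w\|_{\C^2}\lesssim\|w\|_{\C^2}^7$, and the paper uses $\mathrm{Stz}\hookrightarrow l^7l^{14}$ to get $\|F_{31}[\eta]\|_{l^1_Tl^2}\lesssim\|\eta\|_{\mathrm{Stz}}^7$; a genuinely quartic pure power would require $\eta\in l^4_Tl^8$, which the Strichartz norm does not give. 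For the cross terms $F_3[z,\eta]-F_3[0,\eta]$, which are quadratic in $\eta$ with coefficients built from $\Phi_+[z]$ (schematically $|\Phi_+[z]|^5\eta^2$ at leading order), the $l^1_x$ norm is only $\lesssim|z|^5\|\eta(t)\|_{l^2}^2$, uniformly bounded in $t$ with no time decay, so these terms are not in $l^{6/5}_Tl^1$ with a $T$-independent bound. They must instead be routed through the weighted component $l^2_Tl^{2,2}$ of the dual norm, using the localization of $\Phi_+[z]$ together with the square-summable local-decay norm $\|\eta\|_{l^2_Tl^{2,-2}}$ from the bootstrap hypothesis; this is exactly the paper's estimate \eqref{eq:F32est}, $\|F_{32}[z,\eta]\|_{l^{2,s}}\lesssim(|z|+\|\eta\|_{l^{2,-s}})\|\eta\|_{l^{2,-s}}^2$, obtained from a second-order Taylor expansion of $\mathcal{U}$. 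Your framework already contains the needed norms, so the fix is the splitting $F_3=F_{31}+F_{32}$ with the two different target spaces rather than a new idea, but without it the $F_3$ estimate does not close.
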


To prove Proposition \ref{prop:boost}, we prepare several lemmas.
\begin{lemma}\label{lem:boot1}
There exists $\epsilon_0>0$ such that, under the assumption of Proposition \ref{prop:boost}, we have
\begin{align*}
\|\sum_{\tau=0}^{t-1}U^{2(t-\tau)}P_c F_2[z(\tau),\eta(\tau)]\|_{\mathrm{Stz}_{T+1}\cap l^2_{T+1}l^{2,2}}\lesssim \epsilon^2\|\eta(0)\|_{l^2}.
\end{align*}
\end{lemma}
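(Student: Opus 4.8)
The plan is to exploit that $F_2[z,\eta]=(L[z]-U^2)R[z]\eta$ is a \emph{linear-in-$\eta$ potential term} whose coefficient has size $O(|z|^2)$ and decays faster than any polynomial in $x$: once this is made precise, the estimate follows from the inhomogeneous linear estimates of Proposition~\ref{prop:inhomest} (applied, as remarked there, to $U^2$ in place of $U$, with $s=2$) fed by the bootstrap hypothesis \eqref{eq:prop:boost}. Concretely, I would first prove the pointwise-in-$\tau$ bound
\begin{align}\label{eq:planF2}
\|F_2[z(\tau),\eta(\tau)]\|_{l^{2,2}}\lesssim |z(\tau)|^2\,\|\eta(\tau)\|_{l^{2,-2}},
\end{align}
and then conclude: squaring \eqref{eq:planF2} and summing over $\tau\in[0,T]$, using $\|z\|_{l^\infty}\lesssim\epsilon$ from \eqref{eq:aprioribound} together with \eqref{eq:prop:boost}, gives $\|F_2[z(\cdot),\eta(\cdot)]\|_{l^2_Tl^{2,2}}\lesssim\epsilon^2\|\eta(0)\|_{l^2}$; then Proposition~\ref{prop:inhomest} for $U^2$ (with $s=2>1$) bounds $\sum_{\tau=0}^{t-1}U^{2(t-\tau)}P_cF_2[z(\tau),\eta(\tau)]$ both in $\mathrm{Stz}_{T+1}$ and in $l^2_{T+1}l^{2,-2}$ — the two norms occurring in \eqref{eq:prop:boost} — by the same quantity, which is the claim.

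So the work is the localized-potential bound \eqref{eq:planF2}. I would first unwind $L[z]-U^2$: since $\mathcal{U}(u)=UN(UN(u))$ the chain rule gives $L[z]=D\mathcal{U}(\Phi_+[z])=U\,DN\bigl(UN(\Phi_+[z])\bigr)\,U\,DN(\Phi_+[z])$, and as $g(0)=0$ formula \eqref{eq:pr:symporth2} shows $DN(0)=\mathrm{Id}$, hence $L[0]=U^2$ and
\begin{align*}
L[z]-U^2=U\bigl(DN(UN(\Phi_+[z]))-\mathrm{Id}\bigr)\,U\,DN(\Phi_+[z])+U^2\bigl(DN(\Phi_+[z])-\mathrm{Id}\bigr).
\end{align*}
From \eqref{eq:pr:symporth2} one sees that $DN(w)-\mathrm{Id}$ acts pointwise in $x$ as a $2\times2$ matrix of operator norm $\lesssim\|w(x)\|_{\C^2}^2$ (immediate since $g(0)=0$, so both the exponential factor minus $\mathrm{Id}$ and the $g'$-term are $O(\|w\|_{\C^2}^2)$). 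Hence $L[z]-U^2$ is a \emph{finite-range} operator on $l^2(\Z,\C^2)$ — only shifts by at most $2$ enter, from $U=SC$ — whose matrix coefficient at a site $x$ is controlled by $\|\Phi_+[z](y)\|_{\C^2}^2$ and $\|UN(\Phi_+[z])(y)\|_{\C^2}^2$ for $|y-x|\le2$. Now Proposition~\ref{prop:nlbs} combined with the exponential decay of $\phi$ from Proposition~\ref{prop:ede} gives $\Phi_+[z]\in l^{2,k}$ for every $k$ with $\|\Phi_+[z]\|_{l^{2,k}}\lesssim_k|z|$ (for $|z|$ small depending on $k$), so $\|\Phi_+[z](x)\|_{\C^2}^2\lesssim_k|z|^2\langle x\rangle^{-2k}$ uniformly in small $z$; since $N$ is pointwise norm-preserving and $U=SC$ shifts by one, $\|UN(\Phi_+[z])(x)\|_{\C^2}^2$ satisfies the same bound. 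Feeding these into the finite range of $L[z]-U^2$, together with the boundedness of $U$ on every $l^{2,\sigma}$, a Schur-type estimate yields, for each $s\ge0$ and $|z|$ small depending on $s$,
\begin{align}\label{eq:planL}
\|(L[z]-U^2)v\|_{l^{2,s}}\lesssim_s|z|^2\,\|v\|_{l^{2,-s}}.
\end{align}
Finally, Lemma~\ref{lem:modcoor} with $s=2$ gives $\|a_R[z]\|_{l^{2,2}}+\|a_I[z]\|_{l^{2,2}}\lesssim|z|^2$, so that $R[z]$ is bounded on $l^{2,-2}$ uniformly for $|z|$ small, i.e. $\|R[z]\eta\|_{l^{2,-2}}\lesssim\|\eta\|_{l^{2,-2}}$; composing this with \eqref{eq:planL} at $s=2$ and the identity $F_2[z,\eta]=(L[z]-U^2)R[z]\eta$ proves \eqref{eq:planF2}.

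The main obstacle is \eqref{eq:planL}: one must expand $D\mathcal{U}$ fully, check that each resulting summand is a finite-range operator whose matrix coefficients are $O(|z|^2)$ and decay faster than every polynomial in $x$ uniformly for small $z$ — this is exactly where Propositions~\ref{prop:nlbs} and \ref{prop:ede} are used — and carry the polynomial weights through the bounded shifts in the Schur bound. Everything after \eqref{eq:planL} is a routine application of the linear machinery assembled in Section~\ref{sec:lin}.
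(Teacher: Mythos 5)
Your proposal is correct and follows essentially the same route as the paper: decompose $L[z]-U^2$ via the chain rule into the two summands involving $DN(\cdot)-\mathrm{Id}$, bound $DN(\cdot)-\mathrm{Id}$ as a map $l^{2,-s}\to l^{2,s}$ with norm $O(|z|^2)$ using \eqref{eq:pr:symporth2} and the weighted bounds on $\Phi_+[z]$ from Proposition \ref{prop:nlbs}, combine with the uniform boundedness of $U$ and $R[z]$ on weighted spaces to get $\|F_2[z,\eta]\|_{l^{2,2}}\lesssim|z|^2\|\eta\|_{l^{2,-2}}$, and conclude by Proposition \ref{prop:inhomest} together with \eqref{eq:aprioribound} and the bootstrap hypothesis. (You also correctly read the output norm as $l^2_{T+1}l^{2,-2}$, consistent with \eqref{eq:inhom2} and \eqref{eq:prop:boost}.)
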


\begin{proof}
First, it is easy to show 
\begin{align*}
\|U\|_{\mathcal{L}(l^{2,s})}+\|DN\(UN(\Phi_+[z])\)\|_{\mathcal{L}(l^{2,s})}+\|DN\(\Phi_+[z]\)\|_{\mathcal{L}(l^{2,s})}+\|R[z]\|_{\mathcal{L}(l^{2,s})}\lesssim_s 1,
\end{align*}
for $s>0$.
Further, by Proposition \ref{prop:nlbs} and \eqref{eq:pr:symporth2}, we have
\begin{align*}
\|DN(UN(\Phi_+[z]))-1\|_{\mathcal{L}(l^{2,-s},l^{2,s})}+\|DN(\Phi_+[z])-1\|_{\mathcal{L}(l^{2,-s},l^{2,s})}\lesssim_s |z|^2,
\end{align*}
where $z$ is taken sufficiently small depending on $s$.
Thus, we have
\begin{align}
\|F_2[z,\eta]\|_{l^{2,s}}\lesssim_s &\|U \(DN(UN(\Phi_+[z]))-1\)UDN(\Phi_+[z])R[z]\eta\|_{l^{2,s}}\nonumber\\&+\|U^2\(DN(\Phi_+[z])-1\)R[z]\eta\|_{l^{2,s}}
\lesssim_s |z|^2 \|\eta\|_{l^{2,-s}}.\label{eq:F2bound}
\end{align}
Therefore, from Proposition \ref{prop:inhomest}, we have the conclusion. 
\end{proof}

\begin{lemma}\label{lem:boot3}
	Under the assumption of Proposition \ref{prop:boost}, we have
\begin{align*}
\|\sum_{\tau=0}^{t-1}U^{2(t-\tau)}P_c F_3[z(\tau),\eta(\tau)]\|_{\mathrm{Stz}_{T+1}\cap l^2_{T+1}l^{2,2}}\lesssim \(C_1^7\epsilon^6 + C_1^2\epsilon^2\) \|\eta(0)\|_{l^2}.
\end{align*}
\end{lemma}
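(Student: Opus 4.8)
The plan is to exploit the seventh-order vanishing of the nonlinearity forced by Assumption \ref{ass:nonlinear}, writing $F_3$ as a finite sum of multilinear expressions in $\Phi_+[z]$ and $\eta$ of total degree at least $7$ and with at least two factors of $\eta$, and then placing those factors into the Strichartz and Kato-smoothing norms of Section \ref{sec:lin}. First I would apply Taylor's formula to the definition of $G$ entering \eqref{eq:defF2F3}: since $L[z]=D\mathcal{U}(\Phi_+[z])$,
\[
F_3[z,\eta]=G[z,R[z]\eta]=\int_0^1(1-s)\,D^2\mathcal{U}\bigl(\Phi_+[z]+sR[z]\eta\bigr)\bigl[R[z]\eta,R[z]\eta\bigr]\,ds,
\]
and all arguments are small in $l^\infty$, since $\|\Phi_+[z]\|_{l^\infty}+\|R[z]\eta\|_{l^\infty}\lesssim|z|+\|\eta\|_{l^2}\lesssim\epsilon$ by \eqref{eq:aprioribound}, $l^2\hookrightarrow l^\infty$ and $\|R[z]\|_{\mathcal{L}(l^2)}\lesssim1$ (Lemma \ref{lem:modcoor}).

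The algebraic core is the pointwise bound $\|D^2N(u)[v,w]\|_{\C^2}\lesssim\|u\|_{\C^2}^5\|v\|_{\C^2}\|w\|_{\C^2}$ for $\|u\|_{\C^2}\le1$. Differentiating $N(u)=e^{\im g(q(u))\gamma}u$, with $q(u):=\langle u,\gamma u\rangle_{\C^2}$, twice (as in \eqref{eq:pr:symporth2}), one sees that every resulting monomial carries a factor $g'(q(u))$ or $g''(q(u))$; since $q(u)=O(\|u\|_{\C^2}^2)$ and $g'(0)=g''(0)=0$ (Assumption \ref{ass:nonlinear}) we have $|g'(q(u))|\lesssim\|u\|_{\C^2}^4$ and $|g''(q(u))|\lesssim\|u\|_{\C^2}^2$, whence the bound. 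Propagating it through $D^2\mathcal{U}(u)[v,v]=UD^2N(UN(u))[UDN(u)v,UDN(u)v]+UDN(UN(u))UD^2N(u)[v,v]$ together with $\|DN(u)\|_{\mathcal{L}(\C^2)}+\|N(u)\|_{\C^2}/\|u\|_{\C^2}\lesssim1$ gives $\|D^2\mathcal{U}(u)[v,v]\|_{\C^2}\lesssim\|u\|_{\C^2}^5\|v\|_{\C^2}^2$; equivalently $\mathcal{U}(u)-U^2u$ vanishes to seventh order at $u=0$. Expanding $u=\Phi_+[z]+sR[z]\eta$ multilinearly in the integrand and writing $R[z]\eta=\eta+(\text{an exponentially localized term of relative size }|z|^2)$ (Lemma \ref{lem:modcoor}), one arrives at a finite decomposition of $F_3[z,\eta]$ whose leading monomials satisfy
\[
\|F_3[z,\eta](x)\|_{\C^2}\lesssim\sum_{i=0}^{5}\|\Phi_+[z](x)\|_{\C^2}^{\,i}\|\eta(x)\|_{\C^2}^{\,7-i}\ +\ (\text{strictly more localized, higher-order terms}),
\]
each monomial carrying a smooth, uniformly bounded coefficient.

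Then I would bound the Duhamel sum by Proposition \ref{prop:inhomest} (for $U^2$, cf.\ the remark ending Section \ref{sec:lin}), splitting the monomials according to whether a factor $\Phi_+[z]$ is present, exactly as in the proof of Lemma \ref{lem:boot1}. For $i\ge1$: by Proposition \ref{prop:nlbs} and the exponential decay of $\phi$ (Proposition \ref{prop:ede}) one has $\|\Phi_+[z]\|_{l^{2,\sigma}}\lesssim_\sigma|z|\lesssim\epsilon$ for every $\sigma$; putting one copy of $\Phi_+[z]$ in a heavily weighted $l^2$, the remaining $i-1$ copies in $l^\infty$ (each $\lesssim\epsilon$), and distributing the $7-i\ge2$ copies of $\eta$ over $l^\infty_T l^2$ (a priori bound, $\lesssim\epsilon$), $l^6_T l^\infty$ and $l^2_T l^{2,-2}$ (bootstrap hypothesis \eqref{eq:prop:boost}, $\lesssim C_1\|\eta(0)\|_{l^2}$) via H\"older, these terms lie in $l^2_T l^{2,\sigma}$ with a bound $C_1^{a}\epsilon^{b}\|\eta(0)\|_{l^2}$, $a\le7$, $b\ge6$ (using $\|\eta(0)\|_{l^2}\le\|u_0\|_{l^2}=\epsilon$ to trade any surplus $\|\eta(0)\|_{l^2}$ for $\epsilon$). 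For the purely-$\eta$ monomial ($i=0$), the interpolation $\||\eta(t)|^{7}\|_{l^2}\le\|\eta(t)\|_{l^\infty}^{6}\|\eta(t)\|_{l^2}$ followed by $\sum_t\|\eta(t)\|_{l^\infty}^{6}\|\eta(t)\|_{l^2}\le\|\eta\|_{l^\infty_T l^2}\|\eta\|_{l^6_T l^\infty}^{6}$ puts $|\eta|^7\in l^1_T l^2\subset\mathrm{Stz}^*_T$ with norm $\lesssim\epsilon(C_1\|\eta(0)\|_{l^2})^{6}\lesssim C_1^7\epsilon^6\|\eta(0)\|_{l^2}$. Proposition \ref{prop:inhomest} then controls the full Duhamel sum in $\mathrm{Stz}_{T+1}\cap l^2_{T+1}l^{2,2}$ by $(C_1^7\epsilon^6+C_1^2\epsilon^2)\|\eta(0)\|_{l^2}$.

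I expect the main obstacle to be the middle step: verifying the seventh-order vanishing of $D^2\mathcal{U}$, which is exactly where Assumption \ref{ass:nonlinear} is used and without which the whole scheme breaks, and — together with this — arranging the H\"older distribution of the seven factors in the last step so that the loss in $C_1$ never exceeds $C_1^7$ while every monomial keeps the factor $\epsilon^2$ needed to absorb the bootstrap constant. The passage from the pointwise multilinear bound to the function-space bounds, and the invocation of Proposition \ref{prop:inhomest}, are then routine and parallel to the treatment of $F_2$ in Lemma \ref{lem:boot1}.
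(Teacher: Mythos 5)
Your strategy is essentially the paper's: Taylor-expand $G[z,R[z]\eta]$, extract the seventh-order vanishing forced by $g'(0)=g''(0)=0$, and split the resulting terms according to whether a localized factor $\Phi_+[z]$ is present, sending the pure-$\eta$ contribution through $l^1_Tl^2\subset\mathrm{Stz}^*_T$ and the mixed contributions through the weighted space $l^2_Tl^{2,s}$. The organization differs in one respect: the paper first splits $F_3=F_{31}+F_{32}$ with $F_{31}=F_3[0,\cdot]$, applies the septic bound only to $F_{31}$, and handles $F_{32}$ by the mixed Taylor formula $F_{32}=\int_0^1\int_0^1(1-\tau)\partial_t\partial_\tau^2f(t,\tau)\,dt\,d\tau$, which guarantees one factor of $z$ and two of $\eta$ and yields only the cubic bound $\|F_{32}\|_{l^{2,s}}\lesssim_s(|z|+\|\eta\|_{l^{2,-s}})\|\eta\|_{l^{2,-s}}^2$; this is where the $C_1^2\epsilon^2$ term originates, and it does not use Assumption \ref{ass:nonlinear} at all. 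Your full septic expansion of the mixed monomials is sharper (giving $C_1\epsilon^6$ there) but requires the complete accounting of $D^2N$; both routes close. Two points to tighten. First, your ``multilinear expansion'' is really a binomial estimate on pointwise norms, since $g'(q(u))$ is not polynomial in $u$, and because $U$ contains the shift the bound is pointwise only up to translations by $\pm2$; neither issue affects the function-space estimates, but it should be said. Second, and more substantively: Proposition \ref{prop:inhomest} does not control the $l^2_{T+1}l^{2,-2}$ norm of the Duhamel sum when the source lies only in $l^1_Tl^2$ (estimate \eqref{eq:inhom2} requires an $l^2_Tl^{2,s}$ source, and $l^2\not\hookrightarrow l^{2,s}$), so for the pure-$\eta$ monomial you still need the Minkowski-plus-homogeneous-smoothing argument
\begin{align*}
\Bigl\|\sum_{\tau=0}^{t-1}U^{2(t-\tau)}P_cF(\tau)\Bigr\|_{l^2_{T+1}l^{2,-2}}\le\sum_{\tau=0}^{T}\bigl\|U^{2(t-\tau)}P_cF(\tau)\bigr\|_{l^2_{T+1}l^{2,-2}}\lesssim\|F\|_{l^1_Tl^2},
\end{align*}
which is exactly the extra display in the paper's proof of this lemma; as written, your final sentence attributes this step to Proposition \ref{prop:inhomest}, which does not provide it.
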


\begin{proof}
Set
\begin{align}\label{eq:defF31F32}
F_{31}[\eta]:=F_3[0,\eta]\text{ and }F_{32}[z,\eta]:=F_3[z,\eta]-F_{3}[0,\eta].
\end{align}
We first estimate the contribution of $F_{31}$.
By Assumption \ref{ass:nonlinear}, for $w\in B_{\C^2}(0,1)$, we have
\begin{align*}
\|\(e^{\im g(\<w,\gamma w\>_{\C^2})\gamma}-1\)w\|_{\C^2}\lesssim \|w\|_{\C^2}^7,
\end{align*}
which implies
\begin{align*}
\|F_{31}[\eta(t)]\|_{l^2}&\leq \|U\(N(UN(\eta(t)))-UN(\eta(t))\)\|_{l^2}+\|U^2(N(\eta(t))-\eta(t))\|_{l^2}\\&
\lesssim \||UN(\eta(t))|^7\|_{l^2}+\||\eta(t)|^7\|_{l^2}=\|UN(\eta(t))\|_{l^{14}}^7+\|\eta(t)\|_{l^{14}}^7\lesssim \|\eta(t)\|_{l^{14}}^7.
\end{align*}
Thus, by $\mathrm{Stz}\hookrightarrow l^7 l^{14}$, we have
\begin{align}\label{eq:F31est}
\|\sum_{\tau=0}^{t-1}U^{2(t-\tau)}P_c F_{31}[\eta(\tau)]\|_{\mathrm{Stz}_{T+1}}\lesssim \| F_{31}[\eta(\cdot)]\|_{l^1_Tl^2}\lesssim \|\eta\|_{l^7l^{14}}^7\lesssim \|\eta\|_{\mathrm{Stz}}^7.
\end{align}
On the other hand, from Minkowski inequality, \eqref{eq:stzkato} and \eqref{eq:F31est}, we have
\begin{align*}
&\|\sum_{\tau=0}^{t-1}U^{2(t-\tau)}P_c F_{31}[\eta(\tau)]\|_{l^2_{T+1}l^{2,-s}}\lesssim \|\sum_{\tau=0}^T \|U^{2(t-\tau)}P_cF_{31}[\eta(\tau)]\|_{\C^2}\|_{l^2_{T+1}l^{2,-s}}\nonumber\\&
\lesssim \sum_{\tau=0}^T \|U^{2t}P_cF_{31}[\eta(\tau)]\|_{l^2l^{2,-s}} \lesssim 
\sum_{\tau=0}^T \|F_{31}[\eta(\tau)]\|_{l^2}\lesssim \|\eta\|_{\mathrm{Stz}}^7.
\end{align*}

We next investigate the contribution of $F_{32}$.
Set
\begin{align*}
f(t,\tau)=\mathcal{U}(u_{t,\tau}),\ u(t,\tau):=\Phi_+[tz]+\tau R[tz]\eta.
\end{align*}
Then, we have
\begin{align*}
F_{32}[z,\eta]=f(1,1)-f(1,0)-\partial_\tau f(1,0)-f(0,1)=\int_0^1\int_0^1(1-\tau)\partial_t\partial_\tau^2f(t,\tau)\,dtd\tau.
\end{align*}
Further, recalling $\mathcal{U}=UN(UN(\cdot))$, and setting $v_{t,\tau}=UN(u_{t,\tau})$, $w_{t,\tau}=D\Phi_+[tz]z+\tau(DR[tz]z)\eta$, we have
\begin{align*}
&\partial_t\partial_\tau^2f(t,\tau)
=UD^3N(v_{t,\tau})\(UDN(u_{t,\tau})w_{t,\tau},UDN(u_{t,\tau})R[tz]\eta,UDN(u_{t,\tau})R[tz]\eta\)\\&
+2UD^2N(v_{t,\tau})\(UDN(u_{t,\tau})R[tz]\eta,UD^2N(u_{t,\tau})\(R[tz]\eta,w_{t,\tau}\)+UDN(u_{t,\tau})\(DR[tz]z\)\eta\)\\&
+UD^2N(v_{t,\tau})\(UDN(u_{t,\tau})w_{t,\tau},UD^2N(u_{t,\tau})\(R[tz]\eta,R[tz]\eta\)\)\\&
+UDN(v_{t,\tau})UD^3N(u_{t,\tau})(w_{t,\tau},R[tz]\eta,R[tz]\eta)\\&
+2UDN(v_{t,\tau})UD^2N(u_{t,\tau})\(R[tz]\eta,\(DR[tz]z\)\eta\).
\end{align*}
Now, since $N\in C^\infty(\C^2,\C^2)$, for $w\in B_{\C^2}(0,1)$ we have $\|D^jN(w)\|_{\mathcal{L}^j(\C^2,\C^2)}\lesssim 1$.
Thus, we have
\begin{align*}
\|DN(u_{t,\tau})\|_{\mathcal{L}(l^{2,s})}+\|DN(v_{t,\tau})\|_{\mathcal{L}(l^{2,s})}\lesssim_s 1,
\end{align*}
and
\begin{align*}
\|D^2N(x_{t,\tau})(\psi_1,\psi_2)\|_{l^{2,s}} &\lesssim_s \|\psi_1\|_{l^{2,2s}} \|\psi_2\|_{l^{2,-s}},\\
\|D^3N(x_{t,\tau})(\psi_1,\psi_2,\psi_3)\|_{l^{2,s}}&\lesssim_s \|\psi_1\|_{l^{2,3s}}\|\|\psi_2\|_{l^{2,-s}}\|\psi_3\|_{l^{2,-s}},
\end{align*}
for $x=u,v$.
Further, by $\|U\|_{\mathcal{L}(l^{2,s})}\lesssim_s 1$ we have, $\|u_{t,\tau}\|_{l^{2,\tau}}+\|v_{t,\tau}\|_{l^{2,s}}\lesssim_s |z|+\|\eta\|_{l^{2,s}}$ and by Lemma \ref{lem:modcoor}, we have $\|R[tz]\eta\|_{l^{2,s}}\lesssim_s\|\eta\|_{l^{2,s}}$, $\|(DR[tz]z)\eta\|_{l^{2,3s}}\lesssim_s |z|^2\|\eta\|_{l^{2,-s}}$ and $\|w_{t,\tau}\|_{l^{2,3s}}\lesssim_s |z|+\|\eta\|_{l^{2,-s}}$.
Taking all these estimates into account, we have
\begin{align}\label{eq:F32est}
\|F_{32}[z,\eta]\|_{l^{2,s}}\lesssim_s (|z|+\|\eta\|_{l^{2,-s}})\|\eta\|_{l^{2,-s}}^2.
\end{align}
Thus, by Proposition \ref{prop:inhomest}, we have
\begin{align*}
\|\sum_{\tau=0}^{t}U^{2(t-\tau)}P_c F_{32}[z(s),\eta(s)]\|_{\mathrm{Stz}_{T+1}\cap l^2_{T+1}l^{2,s}}\lesssim_s \(\|z\|_{l^\infty_T}+\|\eta\|_{\mathrm{Stz}_T}\)\|\eta\|_{\mathrm{Stz}_T}\|\eta\|_{l^2_Tl^{2,-s}}.
\end{align*}
Therefore, we have the conclusion.
\end{proof}

\begin{lemma}\label{lem:boot2}
We have
\begin{align*}
\|\sum_{\tau=0}^{t-1}U^{2(t-\tau)}P_c  F_1[z(\tau),z(\tau+1)]
\|_{\mathrm{Stz}_{T+1}\cap l^2_{T+1}l^{2,-2}}\lesssim \|Z\|_{l^2_T}.
\end{align*}
Recall, $Z$ is defined in \eqref{eq:defZ}.
\end{lemma}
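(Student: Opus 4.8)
The plan is to rewrite $F_1$ entirely in terms of the error quantity $Z$ using the gauge covariance \eqref{nlbs:2}. Since $\Phi_+[e^{\im\theta}z]=e^{\im\theta}\Phi_+[z]$ (apply $P_+$ to the first identity in \eqref{nlbs:2}), we get $e^{\im\Lambda_+[z(\tau)]}\Phi_+[z(\tau)]=\Phi_+[e^{\im\Lambda_+[z(\tau)]}z(\tau)]=\Phi_+[z(\tau+1)+Z(\tau)]$ directly from the definition \eqref{eq:defZ} of $Z(\tau)$, so
\[
F_1[z(\tau),z(\tau+1)]=\Phi_+[z(\tau+1)+Z(\tau)]-\Phi_+[z(\tau+1)]=\int_0^1 D\Phi_+[z(\tau+1)+sZ(\tau)]\,Z(\tau)\,ds.
\]
First I would split off the linear-in-$Z$ leading term: applying $P_+$ to the second estimate in \eqref{nlbs:3} (using that $P_+$ is bounded on every $l^{2,s}$ and commutes with $D$) gives $\|D\Phi_+[w]v-v\phi_+\|_{l^{2,2}}\lesssim|w|^2|v|$, hence, with $|z(\tau+1)+sZ(\tau)|\le|z(\tau)|+|z(\tau+1)|\lesssim\epsilon$ from the a priori bound \eqref{eq:aprioribound},
\[
\big\|F_1[z(\tau),z(\tau+1)]-Z(\tau)\phi_+\big\|_{l^{2,2}}\lesssim\big(|z(\tau)|+|z(\tau+1)|\big)^2|Z(\tau)|\lesssim\epsilon^2|Z(\tau)|.
\]

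The key observation is that $\phi_+$ is exactly the normalized eigenvector defining $P_c=1-(\cdot,\phi_+)\phi_+$, so $P_c\phi_+=0$; therefore $P_cF_1[z(\tau),z(\tau+1)]=P_c\big(F_1[z(\tau),z(\tau+1)]-Z(\tau)\phi_+\big)$, and summing the previous bound in $\tau$,
\[
\big\|P_cF_1[z(\cdot),z(\cdot+1)]\big\|_{l^2_Tl^{2,2}}\lesssim\epsilon^2\|Z\|_{l^2_T}.
\]
(Even without the cancellation $P_c\phi_+=0$ one still gets $\lesssim\|Z\|_{l^2_T}$, since $\phi_+\in l^{2,2}$ by the exponential decay of Proposition \ref{prop:ede}; the cancellation merely supplies the harmless extra $\epsilon^2$.) It then remains to apply Proposition \ref{prop:inhomest} in its $U^2$ version, legitimate by the remark closing Section \ref{sec:lin}: estimate \eqref{eq:inhom1} with $F=P_cF_1\in l^2_Tl^{2,2}\subset\mathrm{Stz}^*_T+l^2_Tl^{2,2}$ controls the $\mathrm{Stz}_{T+1}$ norm, and \eqref{eq:inhom2} with $s=2>1$ controls the $l^2_{T+1}l^{2,-2}$ norm, yielding
\[
\Big\|\sum_{\tau=0}^{t-1}U^{2(t-\tau)}P_cF_1[z(\tau),z(\tau+1)]\Big\|_{\mathrm{Stz}_{T+1}\cap l^2_{T+1}l^{2,-2}}\lesssim\big\|P_cF_1\big\|_{l^2_Tl^{2,2}}\lesssim\epsilon^2\|Z\|_{l^2_T}\lesssim\|Z\|_{l^2_T}.
\]

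I do not foresee a genuine obstacle. The one substantive step is the gauge-symmetry identity converting $F_1$ into $\Phi_+[z(\tau+1)+Z(\tau)]-\Phi_+[z(\tau+1)]$, a difference of $\Phi_+$ at two arguments differing precisely by $Z$; after that the estimate is purely a matter of the smoothness of $\Phi_+$ from Proposition \ref{prop:nlbs} together with the already-proven inhomogeneous linear bounds of Proposition \ref{prop:inhomest}. The only bookkeeping worth double-checking is that the constants in Proposition \ref{prop:inhomest} are unaffected by replacing $U$ with $U^2$, which is exactly the content of the closing remark of Section \ref{sec:lin}.
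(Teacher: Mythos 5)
Your proposal is correct and follows essentially the same route as the paper: the paper's proof is exactly the bound $\|F_1[z(t),z(t+1)]\|_{l^{2,2}}\lesssim|Z(t)|$, obtained from the gauge identity $e^{\im\Lambda_+[z]}\Phi_+[z]=\Phi_+[e^{\im\Lambda_+[z]}z]$ plus Taylor expansion and \eqref{nlbs:3}, followed by an application of Proposition \ref{prop:inhomest}. Your extra observation that $P_c\phi_+=0$ yields a further factor $\epsilon^2$ is a harmless refinement not used (or needed) by the paper at this point.
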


\begin{proof}
The statement follows from Proposition \ref{prop:inhomest} and 
\begin{align}\label{eq:F1bound}
\|F_1[z(t),z(t+1)]\|_{l^{2,2}}\lesssim  |Z(t)|,
\end{align}
which follows from Taylor expansion and Proposition \ref{prop:nlbs}.
\end{proof}

\begin{lemma}\label{lem:boot4}
	Under the assumption of \eqref{prop:boost}, we have
\begin{align}\label{eq:discest1}
\|Z\|_{l^1_T}\lesssim \(\epsilon^{6}C_1^7+\epsilon^2 C_1^2 \) \|\eta(0)\|_{l^2}.
\end{align}
\end{lemma}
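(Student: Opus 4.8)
The plan is to extract a full factor $|Z(t)|$ from every term on the right-hand side of the identity \eqref{eq:z2}, so as to obtain a pointwise bound of the form $|Z(t)|\lesssim\|F_3[z(t),\eta(t)]\|_{l^{2,-2}}$, and then to sum this in $t$ and invoke the bootstrap hypothesis \eqref{eq:prop:boost} together with the nonlinear estimates already established in the proof of Lemma \ref{lem:boot3}. Throughout I would pair the terms of \eqref{eq:z2} by Cauchy--Schwarz in the dual pair $l^{2,2}$--$l^{2,-2}$, the point of the choice of weight being that it is the space $l^2_T l^{2,-2}$ controlled by \eqref{eq:prop:boost}.

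Concretely, I would treat the three groups of terms in \eqref{eq:z2} as follows. For $\<G[z(t),\xi(t)],\im D\Phi_+[z(t+1)]\im Z(t)\>$ I use $G[z(t),\xi(t)]=F_3[z(t),\eta(t)]$ (since $\xi(t)=R[z(t)]\eta(t)$) and $\|D\Phi_+[z(t+1)]\im Z(t)\|_{l^{2,2}}\lesssim|Z(t)|$ from Proposition \ref{prop:nlbs}, obtaining a contribution $\lesssim\|F_3[z(t),\eta(t)]\|_{l^{2,-2}}|Z(t)|$. For $g(t)$ of \eqref{eq:F2}, I claim both summands are $\lesssim\epsilon^2|Z(t)|^2$: for the first, writing $F_1[z_1,z_2]=\Phi_+[e^{\im\Lambda_+[z_1]}z_1]-\Phi_+[z_2]$ via \eqref{nlbs:2} and applying the mean value inequality to $w\mapsto\Phi_+[w]-w\phi_+$, whose derivative is $O(|w|^2)$ by \eqref{nlbs:3}, on the segment joining $e^{\im\Lambda_+[z(t)]}z(t)$ and $z(t+1)$ — whose endpoints differ exactly by $Z(t)$ — gives $\|F_1[z(t),z(t+1)]-Z(t)\phi_+\|_{l^{2,2}}\lesssim\epsilon^2|Z(t)|$; for the second, \eqref{nlbs:3} gives $\|(D\Phi_+[z(t+1)]-\phi_+)\im Z(t)\|_{l^2}\lesssim|z(t+1)|^2|Z(t)|\lesssim\epsilon^2|Z(t)|$. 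Finally, for $\<L[z(t)]\xi(t),(D\Phi_+[z(t+1)]-D\Phi_+[e^{\im\Lambda_+[z(t)]}z(t)])\im Z(t)\>$ I use $\|L[z(t)]\xi(t)\|_{l^2}\lesssim\|\eta(t)\|_{l^2}\lesssim\epsilon$ and the local Lipschitz continuity of $D\Phi_+$ (from $\Phi_+\in C^\infty$, Proposition \ref{prop:nlbs}) at those same two points, so that $\|(D\Phi_+[z(t+1)]-D\Phi_+[e^{\im\Lambda_+[z(t)]}z(t)])\im Z(t)\|_{l^{2,2}}\lesssim|Z(t)|^2$, a contribution $\lesssim\epsilon|Z(t)|^2$.

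Collecting these, $|Z(t)|^2\lesssim\|F_3[z(t),\eta(t)]\|_{l^{2,-2}}|Z(t)|+(\epsilon+\epsilon^2)|Z(t)|^2$, so for $\epsilon_0$ small the quadratic term is absorbed and $|Z(t)|\lesssim\|F_3[z(t),\eta(t)]\|_{l^{2,-2}}\le\|F_{31}[\eta(t)]\|_{l^2}+\|F_{32}[z(t),\eta(t)]\|_{l^{2,2}}$. Summing over $t\in[0,T]$ and using $\|F_{31}[\eta(t)]\|_{l^2}\lesssim\|\eta(t)\|_{l^{14}}^7$ and \eqref{eq:F32est} with $s=2$ from the proof of Lemma \ref{lem:boot3}, the embedding $\mathrm{Stz}\hookrightarrow l^7l^{14}$, the a priori bound \eqref{eq:aprioribound} (so $\|z\|_{l^\infty_T}+\|\eta\|_{l^\infty_Tl^2}\lesssim\epsilon$) and \eqref{eq:prop:boost} (so $\|\eta\|_{\mathrm{Stz}_T}$ and $\|\eta\|_{l^2_Tl^{2,-2}}$ are $\le C_1\|\eta(0)\|_{l^2}$), I get
\[
\|Z\|_{l^1_T}\lesssim\|\eta\|_{\mathrm{Stz}_T}^7+\bigl(\|z\|_{l^\infty_T}+\|\eta\|_{l^\infty_Tl^2}\bigr)\|\eta\|_{l^2_Tl^{2,-2}}^2\lesssim C_1^7\|\eta(0)\|_{l^2}^7+\epsilon\,C_1^2\|\eta(0)\|_{l^2}^2,
\]
and since $\|\eta(0)\|_{l^2}\lesssim\epsilon$ this is $\lesssim(\epsilon^6C_1^7+\epsilon^2C_1^2)\|\eta(0)\|_{l^2}$, which is \eqref{eq:discest1}.

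The main obstacle is precisely the refined estimate $\|F_1[z(t),z(t+1)]-Z(t)\phi_+\|_{l^{2,2}}\lesssim\epsilon^2|Z(t)|$: a priori $F_1$ is only $O(\epsilon^3)$ in weighted norms, and it is the fact that, after subtracting its leading linear-in-$Z$ part $Z(t)\phi_+$, it still carries a factor $|Z(t)|$ that allows the quadratic-in-$Z$ terms to be absorbed and forces $\|Z\|_{l^1}$ to vanish with $\eta(0)$. Everything else is routine Cauchy--Schwarz in dual weighted spaces together with the nonlinear bounds on $F_{31}$ and $F_{32}$ already in hand.
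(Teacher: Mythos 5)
Your argument is correct and follows essentially the same route as the paper: both start from the identity \eqref{eq:z2}, absorb the quadratic-in-$Z(t)$ terms (including $g(t)$ and the $L[z]\xi$ pairing) using the smallness of $\epsilon$, deduce $|Z(t)|\lesssim \|F_{31}[\eta(t)]\|+\|F_{32}[z(t),\eta(t)]\|$, and then sum in $t$ via the bounds $\|F_{31}[\eta]\|_{l^2}\lesssim\|\eta\|_{l^{14}}^7$ and \eqref{eq:F32est} together with the bootstrap hypothesis. You merely supply more detail than the paper does for the bound $|g(t)|\lesssim\epsilon^2|Z(t)|^2$ (via the mean value inequality applied to $w\mapsto\Phi_+[w]-w\phi_+$), which the paper asserts directly.
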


\begin{proof}
By \eqref{eq:z2}, \eqref{eq:defF2F3} and \eqref{eq:defF31F32}, we have
\begin{align*}
|Z(t)|^2\lesssim& \(\|F_{31}[\eta(t)]\|_{l^{2,-s}}+\|F_{32}[z(t),\eta(t)]\|_{l^{2,-s}}\)|Z(t) | +|g(t)|+\|\eta\|_{l^{2,-s}}|Z(t)|^2.
\end{align*}
Further, by \eqref{eq:F2}, we have
\begin{align*}
|g(t)|\lesssim \(|z(t)|^2+|z(t+1)|^2\)|Z(t)|^2\lesssim \epsilon^2|Z(t)|^2.
\end{align*}
Thus, we have
\begin{align*}
|Z(t)|^2&\lesssim \|F_{31}[\eta(t)]\|_{l^{2,-s}}^2+\|F_{32}[z(t),\eta(t)]\|_{l^{2,-s}}^2  \lesssim \|\eta(t)\|_{l^{14}}^{14} + \epsilon^2\|\eta\|_{l^{2,-s}}^4.
\end{align*}
Thus, we have \eqref{eq:discest1}.
\end{proof}

\begin{proof}[Proof of Proposition \ref{prop:boost}]
	By \eqref{eq:eta}, we have
\begin{align*}
	&\eta(t)=U^{2t}\eta(0)\\&+\sum_{s=0}^{t-1}U^{2(t-s)}P_c\(\(L[z(s)]-U^2\)R[z(s)]\eta(s)+G[z,R[z]\eta](s)+e^{\im \lambda(z(s))}\Phi_+[z(s)]-\Phi_+[z(s+1)]\).
\end{align*}
Thus, by Proposition \ref{prop:homest} and Lemmas \ref{lem:boot1}, \ref{lem:boot2} and \ref{lem:boot3} we have
\begin{align*}
	\|\eta\|_{\mathrm{Stz}_{T+1}\cap l^{2}_{T+1}l^{2,-2}}\lesssim \(1+C_1^7\epsilon^6 + C_1^2\epsilon^2\)\|\eta(0)\|_{l^2} + \|e^{\im \Lambda_+[z]}z-z(\cdot+1)\|_{l^2_T}
\end{align*}
Further, by Lemma \ref{lem:boot4} and $l^1\hookrightarrow l^2$, we have
\begin{align*}
	\|\eta\|_{\mathrm{Stz}_{T+1}\cap l^{2}_{T+1}l^{2,-2}}\leq C\(1+C_1^7\epsilon^6 + C_1^2\epsilon^2 \)\|\eta(0)\|_{l^2},
\end{align*}
where the constant $C$ is independent of $\epsilon$ and $C_1$.
Thus, taking $C_0=2C$ and for $C_1>C_0$ taking $\epsilon_1>0$ sufficiently small so that if $\epsilon\in (0,\epsilon_1)$, then
\begin{align*}
	C\(1+C_1^7\epsilon^6 + C_1^2\epsilon^2\)<C_1.
\end{align*}
Thus, we have the conclusion.
\end{proof}

In the following we take $C_1=C_0$ and $\epsilon_1$ being the small constant determined by Proposition \ref{prop:boost}.

\begin{corollary}\label{cor:main}
	If $\|u_0\|_{l^2}<\epsilon_1$, then
	\begin{align}
		\|\eta\|_{\mathrm{Stz}\cap l^{2}l^{2,-2}} &\lesssim \|\eta(0)\|_{l^2},\label{eq:spest1}\\
		\|Z\|_{l^1} &\lesssim \epsilon^2  \|\eta(0)\|_{l^2},\label{eq:discest2}\\
\|F[z,z(\cdot+1),\eta]\|_{\mathrm{Stz}^*+l^2l^{2,2}}&\lesssim \epsilon^2\|\eta(0)\|_{l^2}.\label{eq:adjointest}
	\end{align}
\end{corollary}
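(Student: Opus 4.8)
The plan is to chain together the bootstrap Proposition~\ref{prop:boost} with the pointwise-in-time bounds collected in Lemmas~\ref{lem:boot1}--\ref{lem:boot4}. Throughout we keep $C_1=C_0$ and $\epsilon_1$ as in Proposition~\ref{prop:boost}, enlarging $C_0$ if necessary so that $C_0\geq1$ and shrinking $\epsilon_1$ finitely many times along the way.

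First I would establish \eqref{eq:spest1} by induction on $T\in\Z\cap[0,\infty)$. For $T=0$, since $\mathrm{Stz}_0=l^6(\{0\},l^\infty)\cap l^\infty(\{0\},l^2)$ and $l^2_0l^{2,-2}$ all reduce to a norm of $\eta(0)$ alone, the embeddings $l^2\hookrightarrow l^\infty$ and $l^2\hookrightarrow l^{2,-2}$ give that the left side of \eqref{eq:prop:boost} equals $\|\eta(0)\|_{l^2}\leq C_1\|\eta(0)\|_{l^2}$, which is the base case. Proposition~\ref{prop:boost} furnishes the inductive step from $T$ to $T+1$ as long as $\epsilon<\epsilon_1$, so \eqref{eq:prop:boost} holds for all $T$. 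Since $T\mapsto\|\eta\|_{\mathrm{Stz}_T\cap l^2_Tl^{2,-2}}$ is nondecreasing and $C_1=C_0$ is now an absolute constant, letting $T\to\infty$ yields \eqref{eq:spest1}.

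For \eqref{eq:discest2} I would apply Lemma~\ref{lem:boot4}, whose hypotheses hold for all $T$ by the previous step, to get $\|Z\|_{l^1_T}\lesssim(\epsilon^6C_1^7+\epsilon^2C_1^2)\|\eta(0)\|_{l^2}$ uniformly in $T$. With $C_1=C_0$ fixed, shrinking $\epsilon_1$ so that $\epsilon^4C_0^7\leq1$ for $\epsilon<\epsilon_1$ gives $\epsilon^6C_1^7+\epsilon^2C_1^2\lesssim\epsilon^2$; letting $T\to\infty$ then gives \eqref{eq:discest2}.

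Finally, for \eqref{eq:adjointest} I would split, as in \eqref{eq:defF2F3} and \eqref{eq:defF31F32},
\[
F[z,z(\cdot+1),\eta]=F_1[z,z(\cdot+1)]+F_2[z,\eta]+F_{31}[\eta]+F_{32}[z,\eta],
\]
and estimate each summand in the appropriate piece of $\mathrm{Stz}^*+l^2l^{2,2}$. By \eqref{eq:F1bound}, $\|F_1[z,z(\cdot+1)]\|_{l^2l^{2,2}}\lesssim\|Z\|_{l^2}\leq\|Z\|_{l^1}\lesssim\epsilon^2\|\eta(0)\|_{l^2}$ using \eqref{eq:discest2}. By \eqref{eq:F2bound} with $s=2$ and \eqref{eq:aprioribound}, $\|F_2[z,\eta]\|_{l^2l^{2,2}}\lesssim\|z\|_{l^\infty}^2\|\eta\|_{l^2l^{2,-2}}\lesssim\epsilon^2\|\eta(0)\|_{l^2}$ using \eqref{eq:spest1}. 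For $F_{31}$, the pointwise bound $\|F_{31}[\eta(t)]\|_{l^2}\lesssim\|\eta(t)\|_{l^{14}}^7$ from the proof of Lemma~\ref{lem:boot3}, together with $\mathrm{Stz}\hookrightarrow l^7l^{14}$ and $l^1l^2\hookrightarrow\mathrm{Stz}^*$, gives $\|F_{31}[\eta]\|_{\mathrm{Stz}^*}\lesssim\|\eta\|_{\mathrm{Stz}}^7\lesssim\|\eta(0)\|_{l^2}^7\leq\epsilon^2\|\eta(0)\|_{l^2}$ by \eqref{eq:spest1} and $\|\eta(0)\|_{l^2}\leq\epsilon$. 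For $F_{32}$, \eqref{eq:F32est} with $s=2$ and the argument of Lemma~\ref{lem:boot3} give $\|F_{32}[z,\eta]\|_{l^2l^{2,2}}\lesssim(\|z\|_{l^\infty}+\|\eta\|_{\mathrm{Stz}})\|\eta\|_{\mathrm{Stz}}\|\eta\|_{l^2l^{2,-2}}\lesssim\epsilon\|\eta(0)\|_{l^2}^2\leq\epsilon^2\|\eta(0)\|_{l^2}$, again by \eqref{eq:aprioribound} and \eqref{eq:spest1}. Summing the four bounds proves \eqref{eq:adjointest}.

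I do not anticipate a genuine obstacle: all the hard analysis is already contained in Proposition~\ref{prop:boost} and Lemmas~\ref{lem:boot1}--\ref{lem:boot4}. The only care needed is in the base case and admissibility ($C_0\geq1$) of the induction, and in the bookkeeping that, with $C_1$ frozen at $C_0$, the higher-order quantities $\epsilon^6C_1^7$, $\|\eta(0)\|_{l^2}^7$ and $\epsilon\|\eta(0)\|_{l^2}^2$ are all controlled by $\epsilon^2\|\eta(0)\|_{l^2}$ once $\epsilon$ is small.
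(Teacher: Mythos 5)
Your proposal is correct and follows essentially the same route as the paper: induction via Proposition \ref{prop:boost} for \eqref{eq:spest1}, Lemma \ref{lem:boot4} for \eqref{eq:discest2}, and the term-by-term estimates \eqref{eq:F1bound}, \eqref{eq:F2bound}, \eqref{eq:F31est}, \eqref{eq:F32est} for \eqref{eq:adjointest}. The only cosmetic difference is that you place $F_1$ in the $l^2l^{2,2}$ component of the sum space via $\|Z\|_{l^2}\leq\|Z\|_{l^1}$, whereas the paper uses $l^1l^2\subset\mathrm{Stz}^*$; both are valid.
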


\begin{proof}
	\eqref{eq:spest1} is a direct consequence of induction and Proposition \ref{prop:boost}.
	\eqref{eq:discest2} follows from \eqref{eq:spest1} and Lemma \ref{lem:boot4}.
For \eqref{eq:adjointest}, we estimate $F_1,F_2$ and $F_3$ separately.
First, for $F_1$, by \eqref{eq:F1bound} and we have \eqref{eq:discest2}, we have
\begin{align*}
\|F_1[z,z(\cdot+1)]\|_{l^1l^2}\lesssim \|Z\|_{l^1}\lesssim \epsilon^2 \|\eta(0)\|_{l^2}.
\end{align*}
For $F_2$, by \eqref{eq:aprioribound}, \eqref{eq:F2bound} and \eqref{eq:spest1}, we have
\begin{align*}
\|F_2[z,\eta]\|_{l^2l^{2,2}}\lesssim \epsilon^2\|\eta(0)\|_{l^2},
\end{align*}
Finally, for $F_3=F_{31}+F_{32}$, from \eqref{eq:F31est} and \eqref{eq:spest1}, we have
\begin{align*}
\|F_{31}[\eta]\|_{\mathrm{Stz}^*}\lesssim \|\eta(0)\|_{l^2}^7\lesssim\epsilon^2\|\eta(0)\|_{l^2},
\end{align*}
and by \eqref{eq:F32est} and \eqref{eq:spest1}, we have
\begin{align*}
\|F_{32}[z,\eta]\|_{l^2l^{2,2}}\lesssim \epsilon \|\eta(0)\|_{l^2}^2\lesssim \epsilon^2\|\eta(0)\|_{l^2}.
\end{align*}
Therefore, we have the conclusion.
\end{proof}

\begin{proof}[Proof of Theorem \ref{thm:main}]
	Set $f(s)=|e^{\im \Lambda_+[z(t)]}z(t) +s\(z(t+1)-e^{\im \Lambda_+[z(t)]}z(t)\)|^2$.
	Then, we have $f(0)=|z(t)|^2$ and $f(1)=|z(t+1)|^2$.
	By fundamental theorem of calculus, we have
	\begin{align*}
		\left||z(t+1)|^2-|z(t)|^2\right|\lesssim \|z\|_{l^\infty}|Z(t)|
	\end{align*}
	Thus,
	\begin{align*}
	\| |z(\cdot+1)|^2-|z|^2 \|_{l^1}\lesssim \epsilon^3 \|\eta(0)\|_{l^2}.
	\end{align*}
	Thus, we see $\lim_{t\to \infty}|z(t)|^2=:\rho^2$ exists and moreover we have
	\begin{align*}
	\left||z(0)|^2-\rho^2\right|\lesssim \epsilon^3 \|\eta(0)\|_{l^2}.
	\end{align*}
	
Next, we show that there exists $\eta_+\in l^2$ s.t.\ $\|\eta_+\|_{l^2}\lesssim \|P_cu_0\|_{l^2}$ and $\|\xi(t)-U_\infty^{2t}\eta_+\|_{l^2}\to 0$ as $t\to \infty$.
Since $\eta\in l^2l^{2,-2}$, we see that $\eta(t)\to 0$ as $t\to \infty$ in $l^{2,-2}$. Thus, recalling $\xi(t)=R[z(t)]\eta(t)$ and $\|\|R[z(t)]-1\|_{\mathcal{L}(l^{2,-2},l^2)}\lesssim \epsilon^2$, we have
	\begin{align*}
	\|\xi(t)-\eta(t)\|_{l^2}\to 0,\ t\to \infty.
	\end{align*} 
Further, since $\|P_cu_0\|_{l^2}\sim \|\eta(0)\|_{l^2}$, it suffices to show 
$\|\eta_+\|_{l^2}\lesssim \|\eta(0)\|_{l^2}$ and $\|\eta(t)-U_\infty^{2t}\eta_+\|_{l^2}\to 0$ as $t\to \infty$.
However, by \eqref{eq:dualest} of Proposition \ref{prop:homest}, we have
\begin{align*}
\|U^{-2t_1}\eta(t_1)-U^{-2t_2}\eta(t_2)\|_{l^2}\lesssim \|F\|_{\mathrm{Stz}^*([t_1,t_2-1])+l^2([t_1,t_2-1],l^{2,2})}.
\end{align*}
Since by \eqref{eq:adjointest} of Corollary \ref{cor:main}, we have
\begin{align*}
\|F\|_{\mathrm{Stz}^*+l^2l^{2,2}}\lesssim \epsilon^2\|\eta(0)\|_{l^2}<\infty,
\end{align*}
we see $U^{-2t}\eta \to \eta_1$ in $l^2$ for some $\eta_1$ satisfying $\|\eta_1\|_{l^2}\sim \|\eta(0)\|_{l^2}\sim \|P_cu(0)\|_{l^2}$.
Finally, by the completeness of the wave operator, there exists $\eta_+$ s.t.\ $\|\eta_+\|_{l^2}\sim \|\eta_1\|_{l^2}$ and $\lim_{t\to\infty}\|U^{2t}\eta_1-U_\infty^{2t}\eta_+\|_{l^2}=0$.
Thus, we have the conclusion.
\end{proof}

Finally, we prove Corollary \ref{cor:orb}.
\begin{proof}[Proof of Corollary \ref{cor:orb}]
Fix $z_0\in\C$ sufficiently small so that we have $\|\Phi_+[z_0]\|_{l^2}<\epsilon_1/2$.
Without loss of generality, we can assume $z_0>0$.
Let $\epsilon<\epsilon_1/2$ and take $u_0\in l^2$ s.t.\ $\|u_0-\Phi_+[z_0]\|_{l^2}<\epsilon$.
Let $u(t)$ be the solution of nonlinear QW \eqref{eq:nlqw2} with $u(0)=u_0$ and set $z(t),\eta(t)$ be the coordinate of $u(t)$ given by \eqref{eq:modcoor}.

First, applying $P_c$ to $u_0$, by Proposition \ref{prop:nlbs} and Lemma \ref{lem:modcoor}, we have $P_cu_0=O(|z(0)|^3)+\eta(0)$ and $(u_0,\phi)=z(0)+O(|z(0)|^2\|\eta(0)\|_{l^2})$.
Thus,
\begin{align}
\epsilon \gtrsim |(u_0-\Phi_+[z_0],\phi_0)|+\|P_c\(u_0-\Phi_+[z_0]\)\|_{l^2},
\end{align}
implies $|z(0)-z_0|+\|\eta(0)\|_{l^2}\lesssim \epsilon$.
By \eqref{eq:spest1} and \eqref{eq:discest2}, we have
\begin{align*}
\sup_{t>0}\(\|\eta(t)\|_{l^2}+\left||z(t)|-|z(0)|\right|\)\lesssim \epsilon.
\end{align*}
Thus, taking $\theta(t)$ so that $e^{-\im \theta(t)}z(t)>0$, we have
\begin{align}
\|u(t)-e^{\im \theta(t)}\Phi[z_0]\|_{l^2}\sim \|\eta(t)\|_{l^2}+ \left||z(t)|-|z(0)|\right|.
\end{align}
Therefore, we have the conclusion.
\end{proof}

\appendix
\section{Proof of Proposition \ref{prop:ede}}\label{sec:A1}

The proof of Proposition \ref{prop:ede} is parallel to the Schr\"odinger operator case.
We follow section 15.5 of \cite{SimonComplexAnalysis2}.
First, recall (see, e.g. \cite{MSSSSdis}) that $U\phi = e^{\im \lambda}\phi$ can be rewritten as
\begin{align}\label{eq.app.3}
	\psi(\cdot+1) = T_\lambda \psi,
\end{align}
where
\begin{align*}
	\psi(x)=\begin{pmatrix}
		\phi_{\downarrow}(x-1)\\ \phi_{\uparrow}(x)
	\end{pmatrix},\quad T_\lambda(x)=\frac{1}{\sqrt{1-|\alpha(x)|^2}}\begin{pmatrix}
	e^{\im(\lambda-\theta(x))} & \alpha(x)\\ \overline{\alpha(x)} & e^{-\im (\lambda-\theta(x))}
\end{pmatrix}.
\end{align*}
Since the argument will be symmetric, it suffices to prove 
\begin{align}\label{eq.app.4}
	\|\psi(x)\|_{\C^2}\sim e^{-\xi x} \ \text{for }x>0,
\end{align}
where  $\xi=\xi(\lambda)$ is given in Proposition \ref{prop:ede}.
We set $T_{\lambda,\infty}=\lim_{x\to \infty}T_{\lambda}(x)$ and 
\begin{align*}
	\widetilde{T}(x)=\begin{pmatrix}
		t_{11}(x) & t_{12}(x)\\ t_{21}(x) & t_{22}(x)
	\end{pmatrix}:=T_\lambda(x)-T_{\lambda,\infty}.
\end{align*}
It is easy to show $\sum_{j,k=1}^2|t_{jk}(x)|\sim \|C(x)-C_\infty\|_{\mathcal{L}(\C^2)}$.
Thus, for arbitrary $\epsilon>0$, there exists $x_\epsilon>0$ s.t. 
\begin{align}\label{Eq.app.1}
	\sum_{j,k=1}^2\|t_{jk}\|_{l^1(\Z_{\geq x_{\epsilon}})}<\epsilon,\ \text{where}\ \Z_{\geq x_{\epsilon}}:=[x_{\epsilon},\infty)\cap \Z.
\end{align}
We further set
\begin{align*}
	\varphi_{\pm}:=\begin{pmatrix}
		\alpha_\infty\\ \gamma_\pm
	\end{pmatrix},\ \varphi_{\pm}(x)=e^{\pm \xi x}\varphi_{\pm}\ \text{and }
\Phi(x) =\(\varphi_+(x)\ \varphi_-(x)\),
\end{align*}
where $\gamma_\pm:=e^{\pm \xi }-e^{\im \lambda}/\sqrt{1-|\alpha_\infty|^2}$.
Then, $\varphi_{\pm}$ are the eigenvectors of $T_{\lambda,\infty}$ associated with $e^{\pm \xi }$ and we have
\begin{align*}
	\Phi(x+1)=T_{\lambda,\infty}\Phi(x).
\end{align*}
We have $\det \Phi(x)=-2\alpha_\infty\sinh \xi \neq 0$ (recall Remark \ref{rem:disc}), so $\Phi(x)$ is invertible and
\begin{align*}
	\Phi(x)^{-1}=-\frac{1}{2\alpha_\infty \sinh \xi }\begin{pmatrix}
		e^{-\xi x}\gamma_- & -e^{-\xi x}\alpha \\ -e^{\xi x}\gamma_+ & e^{\xi }\alpha
	\end{pmatrix}.
\end{align*}
Now, setting $w(x)=\Phi(x)^{-1}\psi(x)$, we have
\begin{align}\label{eq.app.2}
w(x+1)=w(x)+V(x)w(x),\  V(x):=\Phi(x+1)^{-1}\widetilde{T}(x)\Phi(x)=\begin{pmatrix} v_{11}(x) & e^{-2\xi x}v_{12}(x) \\ e^{2\xi x}v_{21}(x) & v_{22}(x)
	\end{pmatrix}.
\end{align}
where
\begin{align*}
	v_{11}(x)&=e^{-\xi}\(\alpha \gamma_- t_{11}+\gamma_+\gamma_- t_{12}-\alpha^2 t_{21}-\alpha \gamma_+ t_{22}\),\\
	v_{12}(x)&=e^{-\xi}\(\alpha \gamma_- t_{11}+\gamma_-^2t_{11}-\alpha^2 t_{21}-\alpha \gamma_- t_{22}\),\\
	v_{21}(x)&=e^{\xi}\(-\alpha \gamma_+ t_{11}-\gamma_+^2t_{12}+\alpha^2t_{21}+\alpha \gamma_+ t_{22}\),\\
	v_{22}(x)&=e^{\xi}\(-\alpha \gamma_+ t_{11}-\gamma_+\gamma_- t_{12}+\alpha^2 t_{21}+\alpha \gamma_- t_{22}\).
\end{align*}
Since $v_{jk}$ are linear combination of $t_{jk}$, by \eqref{Eq.app.1} taking $x_\epsilon$ larger if necessary, we have for arbitrary $\epsilon>0$,
\begin{align*}
	\sum_{j,k=1}^2\|v_{jk}\|_{l^1(\Z_{\geq x_\epsilon})}<\epsilon.
\end{align*}

Now, set
\begin{align*}
X_{x_0}:=\{w:\Z_{\geq x_0}\to \C^2\ |\ \|w\|_X:=\sup_{x\geq x_0}\(|e^{2\xi x}w_{\uparrow}(x)|+|w_{\downarrow}(x)|\)<\infty,\ \lim_{x\to \infty}|e^{2\xi x}w_{\uparrow}(x)|=0\}.
\end{align*}
Then, $X_{x_0}$ is a Banach space.
We further set 
\begin{align*}
	\Psi(w)(x):=\begin{pmatrix}
		0\\ 1
	\end{pmatrix}-\sum_{y=x}^{\infty}V(y)w(y).
\end{align*}
It is straightforwad to show that $\Psi$ is a contraction on $X_{x_0}$ for sufficently large $x_0$.
Thus, there exists a unique fixed point $w_0$ of $\Psi$ on $X_{x_0}$.
The fixed point $w_0 \in X$ satisfies \eqref{eq.app.2} and $\lim_{x\to \infty} w_0(x)=\begin{pmatrix}
	0\\ 1
\end{pmatrix}$.
Now, reverting the procedure and setting $\psi_0:=\Phi(x)w_0(x)$, we see that $\psi_0$ satisfies \eqref{eq.app.3} and \eqref{eq.app.4}.
Finally, the solution of \eqref{eq.app.3} is two dimensional and since $w_0(x)\to 0$ as $x\to \infty$, the other solution cannot be bounded and hence cannot be in $l^2$.
Thus, we have $\psi = c\psi_0$ with some constant $c$.
Therefore, we have the conclusion.

\section{Kato smoothness for unitary operators}\label{sec:A2}

In this section, we introduce a unitary analog of Kato smoothness.
We follow \cite{Kato65MA} and \cite{RS4}.
In the following $U$ will be an unitary operator on a Hilbert space $H$.
As before, we set
\begin{align*}
	R_U(\mu):=(Ue^{-\im \mu}-1)^{-1},\ \mu\in\C/2\pi\Z.
\end{align*}

The main result in this section  is the following:
\begin{theorem}\label{thm.KSU.4.1}
	Let $A\in \mathcal{L}(H)$ and $U$ be a unitary operator on $H$.
	Then, if one of the following quantities are finite, then all the others are finite.
	Moreove, if so, there are all equal.
	\begin{enumerate}
		\item $\|AU^{\cdot}\|_{\mathcal{L}(H,l^2(\Z,H))}^2$.
		\item 
		$\frac{1}{4\pi^2}\sup_{\substack{\epsilon\neq 0\\ \|\varphi\|_H=1}}\(\|AR_U(\cdot+\im \epsilon)\varphi\|_{L^2(\T,H)}^2+\|AR_U(\cdot-\im \epsilon)\varphi\|_{L^2(\T,H)}^2\)$.
		\item 
		$\frac{1}{2\pi}\sup_{\Im \mu\neq 0}\|A\(R_U(\mu)-R_U(\overline{\mu})\)A^*\|_{\mathcal{L}(H)}$
		\item $\sup_{[a,b)\subset \T}\frac{\|A1_{[a,b)}(U)\|_{\mathcal{L}(H)}^2}{|b-a|}$
	\end{enumerate}
\end{theorem}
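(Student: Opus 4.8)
The plan is to transpose the classical proof of Kato smoothness for self-adjoint operators (Kato \cite{Kato65MA}; Reed--Simon \cite{RS4}, Theorem XIII.25) to the unit circle via the spectral theorem for $U$. Write $U=\int_{\T}e^{\im\lambda}\,dE(\lambda)$ with $E$ the projection-valued spectral measure, and for $v\in H$ set $\mu_v(B):=\|E(B)v\|_H^2$, a finite positive measure on $\T$. Two elementary identities for $R_U(\mu)=(Ue^{-\im\mu}-1)^{-1}$ drive everything. First, for $\epsilon>0$, expanding a geometric series (in $e^{\im\mu}U^{-1}$, resp.\ in $Ue^{-\im\mu}$) gives the norm-convergent Abel-mean formulas
\[
R_U(\theta+\im\epsilon)=\sum_{n\ge 1}e^{-n\epsilon}e^{\im n\theta}U^{-n},\qquad -R_U(\theta-\im\epsilon)=\sum_{n\ge 0}e^{-n\epsilon}e^{-\im n\theta}U^{n}.
\]
Second, subtracting these (equivalently, using $R_U(\mu)^{-1}-R_U(\bar\mu)^{-1}=U(e^{-\im\mu}-e^{-\im\bar\mu})$) yields the Poisson representation
\[
R_U(\theta+\im\epsilon)-R_U(\theta-\im\epsilon)=\int_{\T}P_\epsilon(\lambda-\theta)\,dE(\lambda),\qquad P_\epsilon(\phi)=\frac{\sinh\epsilon}{\cosh\epsilon-\cos\phi},
\]
where $P_\epsilon$ is the Poisson kernel for the unit disc at radius $e^{-\epsilon}$; in particular $P_\epsilon\ge 0$ and $\frac1{2\pi}\int_{\T}P_\epsilon=1$. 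I will also use the identities $R_U(\mu)^{*}=-1-R_U(\bar\mu)$ and $1+R_U(\mu)=e^{-\im\mu}UR_U(\mu)$, both immediate from the definition.

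The equivalence (1) $\Leftrightarrow$ (2) is soft. The $H$-valued functions $\theta\mapsto AR_U(\theta\pm\im\epsilon)\varphi$ on $\T$ have, by the Abel-mean formulas, Fourier coefficients equal up to sign to the vectors $e^{-n\epsilon}AU^{-n}\varphi$ ($n\ge1$), resp.\ $e^{-n\epsilon}AU^{n}\varphi$ ($n\ge0$). Plancherel on $\T$ therefore gives $\|AR_U(\cdot+\im\epsilon)\varphi\|_{L^2(\T,H)}^2+\|AR_U(\cdot-\im\epsilon)\varphi\|_{L^2(\T,H)}^2=2\pi\sum_{t\in\Z}e^{-2|t|\epsilon}\|AU^t\varphi\|_H^2$; letting $\epsilon\downarrow 0$ (monotone convergence) and taking the supremum over $\|\varphi\|_H=1$ identifies quantity (2) with quantity (1).

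The equivalence (3) $\Leftrightarrow$ (4) is also essentially soft. By the Poisson representation, for $\Im\mu\neq 0$ the operator $A\bigl(R_U(\mu)-R_U(\bar\mu)\bigr)A^{*}$ is $\pm$ a positive operator and $\langle A(R_U(\mu)-R_U(\bar\mu))A^{*}\psi,\psi\rangle=\int_{\T}P_\epsilon(\lambda-\theta)\,d\mu_{A^{*}\psi}(\lambda)$, so quantity (3) equals $\sup_{\theta,\epsilon,\|\psi\|_H=1}\frac1{2\pi}(P_\epsilon\ast\mu_{A^{*}\psi})(\theta)$, the supremum over the disc of the harmonic extensions of the measures $\mu_{A^{*}\psi}$. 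On the other hand $\|A1_{[a,b)}(U)\|^2=\|1_{[a,b)}(U)A^{*}\|^2=\sup_{\|\psi\|_H=1}\mu_{A^{*}\psi}([a,b))$, so quantity (4) equals $\sup_{[a,b),\,\|\psi\|_H=1}\mu_{A^{*}\psi}([a,b))/|b-a|$. Thus (3) $\Leftrightarrow$ (4) follows from the classical fact that for a positive finite measure $\mu$ on $\T$, $\sup_{\theta,\epsilon}\frac1{2\pi}(P_\epsilon\ast\mu)(\theta)=\sup_{[a,b)\subset\T}\mu([a,b))/|b-a|$ (applied to each $\mu=\mu_{A^{*}\psi}$ and then sup'd over $\psi$): ``$\le$'' is the layer-cake decomposition of $P_\epsilon$ into indicators of sub-arcs together with $\frac1{2\pi}\int P_\epsilon=1$, while ``$\ge$'' comes from letting $\epsilon\downarrow0$ and using that $\frac1{2\pi}(P_\epsilon\ast\mu)$ tends to the Radon--Nikodym density of $\mu$ at Lebesgue density points (both sides being $+\infty$ when $\mu$ has a singular part).

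It remains to bridge the two pairs, i.e.\ to prove (2) $\Leftrightarrow$ (3) — equivalently (1) $\Leftrightarrow$ (4) — and this is the substantive part, carrying the actual content of the theorem (a uniform interval/Carleson-type bound on the $H$-valued spectral data $A\,dE(\cdot)\varphi$ is equivalent to square-summability of $\{AU^t\varphi\}_{t\in\Z}$ in $H$). Using the two stated identities one rewrites quantity (3) as $\sup_{\theta,\epsilon}\frac1{2\pi}(e^{2\epsilon}-1)\,\|AR_U(\theta+\im\epsilon)\|_{\mathcal L(H)}^2$, a pointwise-in-$\theta$ bound on the resolvent, $\epsilon$-weighted; comparing this with quantity (2), which is the integral over $\theta$ of $\|AR_U(\theta+\im\epsilon)\varphi\|_H^2$, is the crux. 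One direction can be carried out by a contour computation: write $\int_{\T}\|AR_U(\theta+\im\epsilon)\varphi\|_H^2\,d\theta=\bigl\langle\bigl[\int_{\T}R_U(\theta-\im\epsilon)A^{*}AR_U(\theta+\im\epsilon)\,d\theta\bigr]\varphi,\varphi\bigr\rangle$, insert the spectral representation of the outer resolvents, evaluate the resulting $\theta$-integral of two Cauchy kernels by residues (which produces a Poisson kernel in the spectral-difference variable), and dominate the emerging double Poisson integral against $dE(s)A^{*}A\,dE(s')$ by the supremum in (3)/(4); the reverse direction exploits the slow variation of $\theta\mapsto AR_U(\theta+\im\epsilon)\varphi$ at scale $\epsilon$ to recover an integral lower bound from a pointwise one. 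I expect this step to be the main obstacle: since $A$ does not commute with $E$ one cannot reduce to a scalar spectral measure but must work with the operator-valued (``matrix'') spectral measure $s,s'\mapsto\langle dE(s)\varphi,A^{*}A\,dE(s')\varphi\rangle$ directly, check that the conjugate-Poisson (odd) part of the kernel produced by the residue calculation integrates to zero against it, and verify that the diagonal double Poisson integral remains controlled by the single interval-average condition (4) with the correct constant. Once the four implications are in place, tracking the normalising constants through each step gives that the four quantities are equal as elements of $[0,\infty]$, which in particular includes the ``all finite together'' assertion.
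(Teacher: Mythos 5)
Your reductions (1)\,$\Leftrightarrow$\,(2) (Abel means plus Plancherel) and (3)\,$\Leftrightarrow$\,(4) (Poisson maximal function of the scalar measures $\mu_{A^*\psi}$ versus their interval averages) are correct and coincide in substance with the paper's Lemmas on these pairs. The problem is the bridge between the two pairs: you explicitly leave it as a plan (``I expect this step to be the main obstacle''), and the route you sketch --- a direct two-sided identification of (2) with (3) via a residue computation producing a double Poisson integral against the operator-valued spectral measure $\langle dE(s)\varphi, A^*A\,dE(s')\varphi\rangle$ --- is both unexecuted and harder than necessary. As written, the proof is incomplete: none of the implications connecting $\{(1),(2)\}$ to $\{(3),(4)\}$ is actually established.

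The paper avoids your obstacle entirely by proving a \emph{cyclic chain of one-sided inequalities}, $(1)\le(3)\le(4)\le(1)$, never needing a two-sided comparison of (2) with (3). The two links you are missing are short. For $(1)\le(3)$: since $R_U(\lambda+\im\epsilon)-R_U(\lambda-\im\epsilon)=(1-e^{-2\epsilon})R_U(\lambda-\im\epsilon)^*R_U(\lambda-\im\epsilon)\ge0$ for $\epsilon>0$, write it as $K(\lambda+\im\epsilon)^2$ with $K$ self-adjoint; then by Plancherel and Cauchy--Schwarz,
\begin{align*}
\|e^{-\epsilon|t|}AU^t\varphi\|_{l^2(\Z,H)}^2
=\tfrac{1}{4\pi^2}\|AK(\cdot+\im\epsilon)^2\varphi\|_{L^2(\T,H)}^2
\le\tfrac{1}{4\pi^2}\,\sup_\lambda\|AK(\lambda+\im\epsilon)\|^2\,\|K(\cdot+\im\epsilon)\varphi\|_{L^2(\T,H)}^2,
\end{align*}
and the two factors are exactly $\sup_\lambda\|A(R_U(\lambda+\im\epsilon)-R_U(\lambda-\im\epsilon))A^*\|$ and $2\pi\|\varphi\|_H^2$ (the latter by the total-mass identity $\tfrac{1}{2\pi}\int_\T P_\epsilon=1$ you already recorded). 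For $(4)\le(1)$: Stone's formula for $U$ gives $\(A^*\psi,1_{[a,b)}(U)\varphi\)$ as the limit of $\tfrac{1}{2\pi}\int_a^b\(A^*\psi,(R_U(\lambda+\im\epsilon)-R_U(\lambda-\im\epsilon))\varphi\)d\lambda$, and Cauchy--Schwarz over $[a,b]$ in $\lambda$ bounds $\|A1_{[a,b)}(U)\|^2/|b-a|$ by $\tfrac{1}{4\pi^2}\sup_{\|\varphi\|=1}\lim_{\epsilon\to0+}\|A(R_U(\cdot+\im\epsilon)-R_U(\cdot-\im\epsilon))\varphi\|_{L^2(\T,H)}^2=\|AU^\cdot\|_{\mathcal{L}(H,l^2(\Z,H))}^2$. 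The point is that in every step $A$ and $A^*$ sit on opposite ends of an inner product, so only the scalar measures $\mu_{A^*\psi}$ and the operator norm in (3) ever occur; the noncommutativity of $A$ with $E$ that worries you never enters. If you adopt this chain, your parts (1)\,$=$\,(2) and (3)\,$\le$\,(4) can be kept as is and the proof closes; do, however, recheck the $2\pi$ normalizations once you fix a Plancherel convention, since your own computation gives quantity (2) equal to $\tfrac{1}{2\pi}$ times quantity (1) with the standard convention.
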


\begin{definition}\label{def:KS}
	If one (hence all) of the quantities 1.$\sim$4. of Theorem \ref{thm.KSU.4.1} is finite, we say $A$ is $U$-smooth.
\end{definition}

Before going in to the proof of Theorem \ref{thm.KSU.4.1}, we give an sufficient condition of $U$-smoothness.

\begin{corollary}\label{cor:KS}
	If $\sup_{\Im \mu\neq 0}\|AR_U(\mu)A^*\|_{\mathcal{L}(H)}<\infty$, then $A$ is $U$ smooth.
\end{corollary}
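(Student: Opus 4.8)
The plan is to deduce Corollary \ref{cor:KS} directly from Theorem \ref{thm.KSU.4.1}: by Definition \ref{def:KS}, "$A$ is $U$-smooth" is the statement that quantity 1 of that theorem, $\|AU^{\cdot}\|_{\mathcal{L}(H,l^2(\Z,H))}^2$, is finite, and the theorem asserts this is equivalent to finiteness of quantity 3, namely $\frac{1}{2\pi}\sup_{\Im\mu\neq 0}\|A(R_U(\mu)-R_U(\overline{\mu}))A^*\|_{\mathcal{L}(H)}$. So it suffices to bound quantity 3 under the hypothesis.

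First I would note that for any $\mu$ with $\Im\mu\neq 0$ one has $\Im\overline{\mu}=-\Im\mu\neq 0$ as well, so $R_U(\overline{\mu})$ is again a resolvent of $U$ evaluated off the unit circle; consequently $\|AR_U(\overline{\mu})A^*\|_{\mathcal{L}(H)}$ is controlled by the very same supremum $M:=\sup_{\Im\mu\neq 0}\|AR_U(\mu)A^*\|_{\mathcal{L}(H)}$ appearing in the hypothesis. The triangle inequality then gives, uniformly over $\mu$ with $\Im\mu\neq 0$,
\begin{align*}
\|A(R_U(\mu)-R_U(\overline{\mu}))A^*\|_{\mathcal{L}(H)}\leq \|AR_U(\mu)A^*\|_{\mathcal{L}(H)}+\|AR_U(\overline{\mu})A^*\|_{\mathcal{L}(H)}\leq 2M,
\end{align*}
so quantity 3 is at most $M/\pi<\infty$. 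By Theorem \ref{thm.KSU.4.1}, all four quantities are then finite (and equal); in particular quantity 1 is finite, i.e.\ $A$ is $U$-smooth.

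There is essentially no obstacle to surmount here: all the analytic substance—the equivalence of the four characterizations, which rests on the Parseval/Fourier identity relating the orbit $U^{\cdot}$ to boundary values of $R_U$, a Stone-type formula for the spectral projections $1_{[a,b)}(U)$, and the usual $TT^*$ duality—is already contained in Theorem \ref{thm.KSU.4.1}. The only point that deserves an explicit sentence is the observation that $\overline{\mu}$ ranges over the same index set $\{\Im\mu\neq 0\}$ as $\mu$, so that the one-sided resolvent bound in the hypothesis automatically dominates the symmetrized expression in quantity 3; everything else is the triangle inequality. This is precisely the unitary analogue of the classical sufficient condition for Kato smoothness in the self-adjoint setting.
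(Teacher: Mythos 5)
Your argument is correct and matches the paper's own proof: both bound quantity 3 of Theorem \ref{thm.KSU.4.1} by the hypothesized supremum via the triangle inequality, using that $\overline{\mu}$ ranges over the same set $\{\Im\mu\neq 0\}$. Nothing further is needed.
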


\begin{proof}
	From triangle inequality we can bound 3.\ of Theorem \ref{thm.KSU.4.1} by $\sup_{\Im \mu\neq 0}\|AR_U(\mu)A^*\|_{\mathcal{L}(H)}$.
\end{proof}

To prove theorem \ref{thm.KSU.4.1}, we prepare several formula.
First, by direct computation we can check that for $\lambda,\epsilon\in\R$, we have
	\begin{align}\label{KSU.4.4}
		R_U(\lambda+\im \epsilon)-R_U(\lambda-\im \epsilon)=(1-e^{-2\epsilon})R_U(\lambda-\im \epsilon)^* R_U(\lambda-\im \epsilon).
	\end{align}
	In particular, if $\epsilon>0$, then 
	\begin{align}
		R(\lambda+\im \epsilon)-R(\lambda-\im \epsilon)\geq 0.\label{eq.KSU.4.10}
	\end{align}


We next prepare several integral formula.
In the following $\oint_{|z|=1}$ will mean the integral in the anti-clockwise direction.
\begin{lemma}\label{lem.KSU.4.2}
	Let $\epsilon>0$.
	Then, we have
	\begin{align}\label{KSU.4.2}
		\frac{1}{2\pi}\int_{\T}\(\frac{1}{e^{-\im \lambda}e^{\epsilon}-1}-\frac{1}{e^{-\im \lambda}e^{-\epsilon}-1}\)\,d\lambda=1.
	\end{align}
	Moreover, let $I=(a,b)\subsetneq \T$ and define
	\begin{align*}
		f_{I,\epsilon}(\zeta):=\frac{1}{2\pi}\int_I \(\frac{1}{e^{\im (\zeta-\lambda)}e^{\epsilon}-1}-\frac{1}{e^{\im (\zeta-\lambda)}e^{-\epsilon}-1}\)\,d\lambda,
	\end{align*}
	then $\sup_{\epsilon}\|f_{I,\epsilon}\|_{L^\infty}\leq 1$ and
	\begin{align}\label{KSU.4.3}
		\lim_{\epsilon\to 0+}f_{I,\epsilon}(\zeta)=\begin{cases}
			1 & \zeta\in (a,b),\\
			\frac{1}{2} & \zeta=a,b,\\
			0 & \zeta\not\in (a,b).
		\end{cases}
	\end{align}
\end{lemma}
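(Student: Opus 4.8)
The plan is to evaluate everything by residues after the substitution $z = e^{\im\lambda}$ (or $z=e^{\im(\lambda-\zeta)}$), which turns each $\lambda$-integral into a contour integral over the unit circle. First I would establish \eqref{KSU.4.2}. Writing $z=e^{-\im\lambda}$, $dz = -\im e^{-\im\lambda}\,d\lambda = -\im z\,d\lambda$, the integrand $\frac{1}{ze^{\epsilon}-1}-\frac{1}{ze^{-\epsilon}-1}$ times $d\lambda$ becomes $\frac{\im}{z}\bigl(\frac{1}{ze^{\epsilon}-1}-\frac{1}{ze^{-\epsilon}-1}\bigr)\,dz$ traversed appropriately. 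Inside $|z|=1$ the first term has a simple pole only at $z=0$ (since $z=e^{-\epsilon}<1$ is the pole of $\frac{1}{ze^{\epsilon}-1}$, wait — $ze^{\epsilon}=1$ gives $z=e^{-\epsilon}$, which for $\epsilon>0$ lies \emph{inside} the unit disk), so I must bookkeep carefully: $\frac{1}{ze^{\epsilon}-1}$ has its pole at $z=e^{-\epsilon}\in(0,1)$ (inside), while $\frac{1}{ze^{-\epsilon}-1}$ has its pole at $z=e^{\epsilon}>1$ (outside). Multiplying by $\frac1z$ introduces an extra pole at $z=0$ for both terms. Collecting residues at $z=0$ and $z=e^{-\epsilon}$ for the first term, and only at $z=0$ for the second, a short computation gives the value $1$; the $z=0$ residues cancel in the $\epsilon\to 0$ formal sense but here they combine with the $z=e^{-\epsilon}$ residue to produce exactly $1$ independently of $\epsilon$.

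For the second part, I would substitute $w = e^{\im(\zeta-\lambda)}$ in the definition of $f_{I,\epsilon}(\zeta)$ so that as $\lambda$ runs over $I=(a,b)$, $w$ runs over an arc of $|w|=1$; again $dw = -\im w\,d\lambda$. The integrand becomes a rational function of $w$ with poles at $w=e^{-\epsilon}$ (inside) and $w=e^{\epsilon}$ (outside) — note $\zeta$ has dropped out of the pole locations and only affects the \emph{endpoints} of the arc of integration. So $f_{I,\epsilon}(\zeta)$ is (a constant multiple of) the integral of a fixed rational function over an arc whose endpoints are $e^{\im(\zeta-a)}$ and $e^{\im(\zeta-b)}$. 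The uniform bound $\sup_\epsilon\|f_{I,\epsilon}\|_{L^\infty}\le 1$ should follow because the integrand, after combining the two fractions, is (up to sign) the Poisson-type kernel $\frac{1-e^{-2\epsilon}}{|we^{-\epsilon}-1|^2}\cdot(\text{something of modulus}\le 1)$ — indeed \eqref{KSU.4.4}/\eqref{eq.KSU.4.10} say the combination $R_U(\lambda+\im\epsilon)-R_U(\lambda-\im\epsilon)\ge 0$, so the scalar analog $\frac{1}{e^{\im(\zeta-\lambda)}e^{\epsilon}-1}-\frac{1}{e^{\im(\zeta-\lambda)}e^{-\epsilon}-1}$ is a nonnegative approximate identity in $\lambda$ of total mass $1$ by \eqref{KSU.4.2}; hence its integral over any subarc is between $0$ and $1$, giving the bound, and the pointwise limit \eqref{KSU.4.3} is the standard fact that such an approximate identity, integrated over $I$, converges to $1_I$ with the value $\tfrac12$ at the endpoints by symmetry of the kernel about its peak.

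The main obstacle I expect is purely bookkeeping: correctly tracking which pole is inside versus outside the unit circle (it flips depending on the sign of $\epsilon$), and keeping the orientation of the contour straight through the substitution $z\mapsto e^{\pm\im\lambda}$ so that the residue computation lands on $+1$ rather than $-1$ or $0$. Once \eqref{KSU.4.2} is in hand and the kernel is recognized as a nonnegative approximate identity concentrating at $\lambda=\zeta$, the bound $\|f_{I,\epsilon}\|_{L^\infty}\le1$ and the limit \eqref{KSU.4.3} are routine: split $I$ near and far from $\zeta$, use mass $1$ on the near part and vanishing mass on the far part, and use the evenness of the kernel in $\lambda-\zeta$ to get the factor $\tfrac12$ at $\zeta=a$ or $\zeta=b$.
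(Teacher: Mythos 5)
Your plan is correct and follows essentially the same route as the paper: the normalization \eqref{KSU.4.2} by residues after $z=e^{\pm\im\lambda}$ (the paper's choice $z=e^{\im\lambda}$ avoids the extra pole at $0$, since the $z$ in the numerator cancels $d\lambda=dz/(\im z)$), the nonnegativity of the kernel via the scalar case of \eqref{KSU.4.4} to get $0\le f_{I,\epsilon}\le f_{\T,\epsilon}=1$, and the approximate-identity concentration plus evenness of the kernel in $\lambda-\zeta$ for the limit \eqref{KSU.4.3}, including the factor $\tfrac12$ at the endpoints.
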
 

\begin{proof}
	Set $z=e^{\im \lambda}$.
	Then, we have $d\lambda = \frac{dz}{\im z}$.
	Thus, by residue theorem we have
	\begin{align*}
		\frac{1}{2\pi}\int_{\T}\(\frac{1}{e^{-\im \lambda}e^{\epsilon}-1}-\frac{1}{e^{-\im \lambda}e^{-\epsilon}-1}\)\,d\lambda&=\frac{1}{2\pi \im}\oint_{|z|=1}\(\frac{1}{e^{\epsilon}-z}+\frac{1}{z-e^{-\epsilon}}\)\,dz\\&
		=1.
	\end{align*}
	For \eqref{KSU.4.3}, first if $\zeta\not\in [a,b]$, then the integrand is uniformly bounded and uniformly converges to $0$.
	Thus we have this case.
	If $\zeta\in (a,b)$ we have
	\begin{align*}
		\lim_{\epsilon\to 0+}f_{I,\epsilon}(\zeta)=\lim_{\epsilon\to 0+}f_{\T,\epsilon}(\zeta)+\lim_{\epsilon\to 0+}f_{\T\setminus[a,b],\epsilon}(\zeta)=1,
	\end{align*}
	by \eqref{KSU.4.2} and the fact $\zeta\not\in \T\setminus[a,b]$.
	
	Next, using the formula \eqref{KSU.4.4} for $U=e^{\im \zeta}$, we see that the integrand is nonnegative.
	Thus,
	\begin{align*}
		|f_{I,\epsilon}(\zeta)|\leq |f_{\T,\epsilon}(\zeta)|=1.
	\end{align*}
	
	Finally, if $\zeta=a$, by proceeding as above and using the formula \eqref{KSU.4.4} with $U=1$, we have
	\begin{align*}
		\lim_{\epsilon\to 0}f_{I,\epsilon}&=\frac{1}{2\pi}\lim_{\epsilon\to 0+}\int_0^\pi \(\frac{1}{e^{-\im \lambda}e^{\epsilon}-1}-\frac{1}{e^{-\im \lambda}e^{-\epsilon}-1}\)\,d\lambda\\&
		=\frac{1}{2\pi}\lim_{\epsilon\to 0+}(1-e^{-2\epsilon})\int_0^\pi \left|\frac{1}{e^{-\im \lambda}e^{-\epsilon}-1} \right|^2\,d\lambda\\&
		=\frac{1}{2\pi}\lim_{\epsilon\to 0+}(1-e^{-2\epsilon})\int_0^\pi \frac{1}{1+e^{-2\epsilon}-2e^{-\epsilon}\cos \lambda} \,d\lambda.
	\end{align*}
	Since, the integrand is even function, we have the conclusion.
\end{proof}

From spectral theorem and Lemma \ref{lem.KSU.4.2}, we have the Stone's formula for unitary operators.
That is, we have
\begin{align}\label{eq.Stone}
	\frac{1}{2}\(1_{(a,b)}(U)+1_{[a,b]}(U)\)=\mathrm{s-lim}_{\epsilon\to 0+}\frac{1}{2\pi}\int_a^b \(R(\lambda+\im \epsilon)-R(\lambda-\im \epsilon)\)\,d\lambda.
\end{align}

%

Similarly, using the formula \eqref{KSU.4.2}, we have
\begin{align}\label{eq:KSU.1}
	\frac{1}{2\pi}\int_{\T}\(\varphi,\(R(\lambda+\im \epsilon)-R(\lambda-\im\epsilon)\)\varphi\)\,d\lambda=\|\varphi\|_H^2.
\end{align}

%
%

Recall we are setting the Fourier transform as
$
	\mathcal{F}u(\lambda):=\frac{1}{2\pi}\sum_{t\in\Z}e^{-\im \lambda t }u(t).
$
Then, we have Plancherel theorem
$
	\|u\|_{l^2(\Z,H)}=\|\mathcal{F}u\|_{L^2(\T,H)}
$
\begin{lemma}
	Let $\epsilon>0$.
	Then, we have
	\begin{align}
		\mathcal{F}\(e^{-\epsilon(\cdot)} 1_{[0,\infty)}(\cdot)A U^\cdot \varphi \)(\lambda)&=-\frac{1}{2\pi}A R(\lambda-\im \epsilon) \varphi,\label{eq.KSU.4.5}\\
		\mathcal{F}\(e^{\epsilon(\cdot )} 1_{(-\infty,-1]}(\cdot)A U^\cdot \varphi \)(\lambda)&=-\frac{1}{2\pi}A R(\lambda+\im \epsilon) \varphi.\label{eq.KSU.4.6}
	\end{align}
	In particular, we have
	\begin{align}
		\mathcal{F}(e^{-\epsilon|\cdot|}AU^\cdot \varphi)(\lambda)=\frac{1}{2\pi}A\(R(\lambda+\im \epsilon)-R(\lambda-\im \epsilon)\)\varphi.\label{eq.KSU.4.7}
	\end{align}
\end{lemma}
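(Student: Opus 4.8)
The plan is to expand each Fourier series explicitly and recognise the resulting operator-valued power series as a Neumann series for the resolvent $R_U$, the convergence being immediate from the unitarity of $U$. Note first that for fixed $\epsilon>0$ the functions on the left-hand sides lie in $l^2(\Z,H)$ (indeed in $l^1$), since $\|AU^{t}\varphi\|_H\leq\|A\|\,\|\varphi\|_H$ and the exponential weights are summable, so each $\mathcal{F}(\cdot)(\lambda)$ is an absolutely convergent sum for every $\lambda$.

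For \eqref{eq.KSU.4.5}, only the terms with $t\geq 0$ contribute, so the left-hand side equals $\frac{1}{2\pi}A\sum_{t=0}^{\infty}\(e^{-\im\lambda}e^{-\epsilon}U\)^{t}\varphi$. Since $U$ is unitary and $\epsilon>0$ we have $\|e^{-\im\lambda}e^{-\epsilon}U\|_{\mathcal{L}(H)}=e^{-\epsilon}<1$, so the series converges in operator norm to $\(1-e^{-\im\lambda}e^{-\epsilon}U\)^{-1}$; using $e^{-\im(\lambda-\im\epsilon)}=e^{-\im\lambda}e^{-\epsilon}$, this operator equals $-\(Ue^{-\im(\lambda-\im\epsilon)}-1\)^{-1}=-R_U(\lambda-\im\epsilon)$, which gives \eqref{eq.KSU.4.5}. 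For \eqref{eq.KSU.4.6}, only the terms with $t\leq-1$ survive; substituting $t=-n$ with $n\geq 1$ and using $U^{-n}=(U^{-1})^n$ turns the left-hand side into $\frac{1}{2\pi}A\sum_{n=1}^{\infty}\(e^{\im\lambda}e^{-\epsilon}U^{-1}\)^{n}\varphi$, and again $\|e^{\im\lambda}e^{-\epsilon}U^{-1}\|_{\mathcal{L}(H)}=e^{-\epsilon}<1$, so the geometric series sums in $\mathcal{L}(H)$ to $e^{\im\lambda}e^{-\epsilon}U^{-1}\(1-e^{\im\lambda}e^{-\epsilon}U^{-1}\)^{-1}$. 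Factoring $1-e^{\im\lambda}e^{-\epsilon}U^{-1}=\(U-e^{\im\lambda}e^{-\epsilon}\)U^{-1}$ and simplifying, while keeping $e^{-\im(\lambda+\im\epsilon)}=e^{-\im\lambda}e^{\epsilon}$ in mind, identifies this expression with $\(Ue^{-\im(\lambda+\im\epsilon)}-1\)^{-1}=R_U(\lambda+\im\epsilon)$.

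Finally, \eqref{eq.KSU.4.7} is obtained by writing $e^{-\epsilon|t|}=e^{-\epsilon t}1_{[0,\infty)}(t)+e^{\epsilon t}1_{(-\infty,-1]}(t)$ and adding \eqref{eq.KSU.4.5} and \eqref{eq.KSU.4.6} using linearity of $\mathcal{F}$. I do not expect a genuine obstacle here: every series converges absolutely in $\mathcal{L}(H)$ thanks to $\|U\|=1$, which is exactly what legitimises pulling the bounded operator $A$ outside the sum and commuting it with the limit. The only delicate point is bookkeeping — matching the $\pm\im\epsilon$ shifts in $R_U$ against the $e^{\mp\epsilon}$ weights and carrying out the $U^{-1}$-algebra in the second identity without sign errors; a useful sanity check is that \eqref{eq.KSU.4.5} and \eqref{eq.KSU.4.6} must combine to reproduce \eqref{eq.KSU.4.7}, which fixes the relative signs.
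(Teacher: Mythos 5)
Your proof is correct and is essentially the paper's own argument: expand the Fourier sum, recognize the geometric series in $e^{-\im\lambda}e^{-\epsilon}U$ (resp.\ $e^{\im\lambda}e^{-\epsilon}U^{-1}$), and identify its sum with $\mp R_U(\lambda\mp\im\epsilon)$ (the paper merely carries an extra translation parameter $T$ and sets $T=0$ at the end). One remark: your computation yields $+\frac{1}{2\pi}AR_U(\lambda+\im\epsilon)\varphi$ for the second identity, which is also what the paper's own proof obtains and is the sign needed for the third identity to follow by addition, so the minus sign in the displayed statement of the second identity is a typo rather than an error in your argument.
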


\begin{proof}
	For \eqref{eq.KSU.4.5},
	\begin{align*}
		\mathcal{F}\(e^{-\epsilon(\cdot-T)} 1_{[T,\infty)}(\cdot)A U^\cdot \varphi \)(\lambda)&=\frac{1}{2\pi}A\sum_{t=T}^{\infty}e^{-\im \lambda t}e^{-\epsilon(t-T)}U^t\varphi\\&
		=e^{-\im \lambda T}\frac{1}{2\pi}A(1-e^{-\im(\lambda-\im \epsilon)}U)^{-1}U^T\varphi\\&
		=-e^{-\im \lambda T}\frac{1}{2\pi}AR(\lambda-\im \epsilon)U^T\varphi.
	\end{align*}
	For \eqref{eq.KSU.4.6},
	\begin{align*}
		\mathcal{F}\(e^{\epsilon(\cdot-T)} 1_{(-\infty,T-1]}(\cdot)A U^\cdot \varphi \)(\lambda)&=\frac{1}{2\pi}A\sum_{t=-\infty}^{T-1}e^{-\im \lambda t}e^{\epsilon(t-T)}U^t\varphi\\&
		=\frac{1}{2\pi}A\sum_{s=0}^{\infty}e^{-\im \lambda (-s+T-1)}e^{\epsilon(-s-1)}U^{-s+T-1}\varphi\\&
		=e^{-\im \lambda T} e^{\im(\lambda+\im \epsilon)}\frac{1}{2\pi}A\sum_{s=0}^{\infty}e^{\im \lambda s}e^{-\epsilon s}U^{-s}U^{-1}U^T\varphi\\&
		=e^{-\im \lambda T} e^{\im(\lambda+\im \epsilon)}\frac{1}{2\pi}A(1-e^{\im (\lambda+\im \epsilon)}U^{-1})^{-1}U^{-1}U^T\varphi\\&
		=e^{-\im \lambda T} \frac{1}{2\pi}AR(\lambda+\im \epsilon)U^T\varphi.
	\end{align*}
	Finally, we have \eqref{eq.KSU.4.7} by adding \eqref{eq.KSU.4.5} and \eqref{eq.KSU.4.6} with $T=0$.
\end{proof}

\begin{lemma}\label{lem.KSU.A}
	We have
	\begin{align*}
		\|AU^{\cdot}\|_{\mathcal{L}(H,l^2(\Z,H))}^2
		=\frac{1}{4\pi^2}\sup_{\substack{\epsilon\neq 0\\ \|\varphi\|_H=1}} \|AR(\cdot+\im \epsilon)\varphi\|_{L^2(\T,H)}^2+\|AR(\cdot-\im \epsilon)\varphi\|_{L^2(\T,H)}^2.
	\end{align*}
\end{lemma}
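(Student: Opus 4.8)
The plan is to reduce the asserted operator identity, after taking a supremum, to an exact pointwise-in-$\varphi$ Plancherel identity at each fixed regularization parameter. Fix $\epsilon>0$. Since $\|e^{-\epsilon|\cdot|}AU^\cdot\varphi\|_{l^2(\Z,H)}^2=\sum_{t\in\Z}e^{-2\epsilon|t|}\|AU^t\varphi\|_H^2\leq \|A\|_{\mathcal{L}(H)}^2\|\varphi\|_H^2\sum_{t\in\Z}e^{-2\epsilon|t|}<\infty$, the map $\varphi\mapsto(e^{-\epsilon|t|}AU^t\varphi)_{t\in\Z}$ is a genuine bounded operator $H\to l^2(\Z,H)$. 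First I would prove that for every $\varphi\in H$,
\begin{align}\label{eq:planch-id}
\|e^{-\epsilon|\cdot|}AU^\cdot\varphi\|_{l^2(\Z,H)}^2=\frac{1}{4\pi^2}\(\|AR(\cdot+\im\epsilon)\varphi\|_{L^2(\T,H)}^2+\|AR(\cdot-\im\epsilon)\varphi\|_{L^2(\T,H)}^2\).
\end{align}

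To obtain \eqref{eq:planch-id}, decompose the sequence in the time variable: $e^{-\epsilon|t|}AU^t\varphi=g_+(t)+g_-(t)$, where $g_+(t):=e^{-\epsilon t}1_{[0,\infty)}(t)AU^t\varphi$ and $g_-(t):=e^{\epsilon t}1_{(-\infty,-1]}(t)AU^t\varphi$ (for $t\geq 0$ only the first term survives and equals $e^{-\epsilon t}AU^t\varphi$, and for $t\leq -1$ only the second, equal to $e^{\epsilon t}AU^t\varphi=e^{-\epsilon|t|}AU^t\varphi$). Since $[0,\infty)\cap\Z$ and $(-\infty,-1]\cap\Z$ are disjoint with union $\Z$, the functions $g_\pm$ have disjoint supports, so $\|e^{-\epsilon|\cdot|}AU^\cdot\varphi\|_{l^2}^2=\|g_+\|_{l^2}^2+\|g_-\|_{l^2}^2$. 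By Plancherel and the computations \eqref{eq.KSU.4.5} and \eqref{eq.KSU.4.6} with $T=0$, one has $\mathcal{F}g_+=-\frac{1}{2\pi}AR(\cdot-\im\epsilon)\varphi$ and $\mathcal{F}g_-=-\frac{1}{2\pi}AR(\cdot+\im\epsilon)\varphi$, and \eqref{eq:planch-id} follows. (Equivalently, \eqref{eq:planch-id} is \eqref{eq.KSU.4.7} together with the fact that the two Hardy-type pieces are $L^2(\T,H)$-orthogonal, but the time-domain splitting makes the orthogonality manifest.)

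Finally I would pass to the supremum. The right-hand side of \eqref{eq:planch-id} is invariant under $\epsilon\mapsto-\epsilon$, which merely interchanges the two resolvent terms, so $\sup_{\epsilon\neq 0}=\sup_{\epsilon>0}$. Taking $\sup_{\|\varphi\|_H=1}$ in \eqref{eq:planch-id}, the left side becomes $\sup_{\|\varphi\|_H=1}\sum_{t\in\Z}e^{-2\epsilon|t|}\|AU^t\varphi\|_H^2$; for fixed $\varphi$ this quantity is nondecreasing as $\epsilon\downarrow 0$ and, by monotone convergence, tends to $\sum_{t\in\Z}\|AU^t\varphi\|_H^2$. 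The standard interchange of a supremum with an increasing limit (the supremum of the limits dominates each $f_\epsilon(\varphi_0)\to f_0(\varphi_0)$, and is in turn dominated by $\sup f_0$) then gives, as an identity in $[0,\infty]$,
\begin{align*}
\sup_{\epsilon>0}\ \sup_{\|\varphi\|_H=1}\sum_{t\in\Z}e^{-2\epsilon|t|}\|AU^t\varphi\|_H^2=\sup_{\|\varphi\|_H=1}\sum_{t\in\Z}\|AU^t\varphi\|_H^2=\|AU^{\cdot}\|_{\mathcal{L}(H,l^2(\Z,H))}^2,
\end{align*}
which combined with \eqref{eq:planch-id} is exactly the claimed equality. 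The computation is essentially routine; the only point needing care is this interchange of $\sup$ and $\lim_{\epsilon\to0+}$, which is why it is important to read every quantity above as valued in $[0,\infty]$ so that no a priori finiteness is assumed.
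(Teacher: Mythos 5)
Your proof is correct and follows essentially the same route as the paper: split the time axis into $t\geq 0$ and $t\leq -1$, apply Plancherel to each piece via \eqref{eq.KSU.4.5} and \eqref{eq.KSU.4.6} with $T=0$, and sum. You additionally make explicit the final interchange of $\sup_{\|\varphi\|_H=1}$ with the monotone limit $\epsilon\downarrow 0$, which the paper leaves implicit; that step is handled correctly.
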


\begin{proof}
	First, by Plancherel and \eqref{eq.KSU.4.5}, for $\varphi\in H$ with $\|\varphi\|_H=1$, we have
	\begin{align}
		\sum_{t=0}^\infty\|e^{-\epsilon t}AU^t \varphi\|_{H}^2&=\frac{1}{4\pi^2}\|AR(\lambda-\im \epsilon)\varphi\|_{L^2(\T,H)}^2,\label{eq.KSU.4.8}\\
		\sum_{t=-\infty}^{-1}\|e^{\epsilon t}AU^t \varphi\|_{H}^2&=\frac{1}{4\pi^2}\|AR(\lambda-\im \epsilon) \varphi\|_{L^2(\T,H)}^2. \label{eq.KSU.4.9}
	\end{align}
%
%
%
%
	Summing \eqref{eq.KSU.4.8} and \eqref{eq.KSU.4.9} we have
	\begin{align*}
		\|e^{-\epsilon|t|}AU^t\varphi\|_{l^2(\Z,H)}^2=\frac{1}{4\pi^2}\(\|AR(\lambda+\im \epsilon)\varphi\|_{L^2(\T,H)}^2+\|AR(\lambda+\im \epsilon)\varphi\|_{L^2(\T,H)}^2\).
	\end{align*}
Therefore, we have the conclusion.
\end{proof}

\begin{lemma}\label{lem.KSU.B}
	We have
	\begin{align*}
		\|AU^{\cdot}\|_{l^2(\Z,H)}^2&=\frac{1}{2\pi}\sup_{\Im \mu\neq 0}\|A\(R(\mu)-R(\overline{\mu})\)A^*\|_{\mathcal{L}(H)}
		=\sup_{I\subset \T}\frac{\|AE_I\|_{\mathcal{L}(H)}^2}{|I|}.
	\end{align*}
\end{lemma}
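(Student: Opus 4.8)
The plan is to tie items~3 and~4 of Theorem~\ref{thm.KSU.4.1} to the quantity in item~1, which I abbreviate $\Gamma:=\|AU^{\cdot}\|_{\mathcal{L}(H,l^2(\Z,H))}^2=\sup_{\|\varphi\|_H=1}\sum_{t\in\Z}\|AU^t\varphi\|_H^2$; since Lemma~\ref{lem.KSU.A} already identifies $\Gamma$ with item~2, only items~3 and~4 remain. Throughout I would write, for $\epsilon>0$, $D_\lambda(\epsilon):=R(\lambda+\im\epsilon)-R(\lambda-\im\epsilon)$, which is $\geq0$ by \eqref{eq.KSU.4.10} and, by the Neumann expansion of the resolvent, equals $\sum_{t\in\Z}e^{-\im\lambda t}e^{-\epsilon|t|}U^t$; all numerical factors ($1/2\pi$ versus $1/4\pi^2$) are tracked routinely through \eqref{eq:KSU.1}, \eqref{eq.KSU.4.7}, \eqref{KSU.4.2} and Plancherel and I will not display them.

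\emph{Item~1 $=$ item~3.} For $\Gamma\lesssim$ item~3: by \eqref{eq.KSU.4.7} and Plancherel, $\sum_t e^{-2\epsilon|t|}\|AU^t\varphi\|^2$ is a constant multiple of $\int_{\T}\|AD_\lambda(\epsilon)\varphi\|_H^2\,d\lambda$; since $D_\lambda(\epsilon)\geq0$ one has the operator Cauchy--Schwarz inequality $\|AD_\lambda(\epsilon)\varphi\|_H^2\leq\|AD_\lambda(\epsilon)A^*\|_{\mathcal{L}(H)}\,(\varphi,D_\lambda(\epsilon)\varphi)$, and integrating in $\lambda$ and invoking \eqref{eq:KSU.1} replaces $\int_{\T}(\varphi,D_\lambda(\epsilon)\varphi)\,d\lambda$ by a constant times $\|\varphi\|^2$, so $\sum_t e^{-2\epsilon|t|}\|AU^t\varphi\|^2\lesssim\|\varphi\|^2\sup_\lambda\|AD_\lambda(\epsilon)A^*\|\leq\|\varphi\|^2\cdot(\text{item }3)$. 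Letting $\epsilon\to0$ by monotone convergence and taking the supremum over $\varphi$ gives $\Gamma\leq$ item~3. For the reverse inequality I would fix $\mu=\lambda_0+\im\epsilon$, $\epsilon>0$, and use \eqref{KSU.4.4} to write $A(R(\mu)-R(\overline{\mu}))A^*=(1-e^{-2\epsilon})(R(\overline{\mu})A^*)^{*}(R(\overline{\mu})A^*)$, so $\|A(R(\mu)-R(\overline{\mu}))A^*\|=(1-e^{-2\epsilon})\|AR(\overline{\mu})^{*}\|_{\mathcal{L}(H)}^2$. Expanding $R(\overline{\mu})^{*}=-\sum_{t\leq0}e^{-\im\lambda_0 t}e^{\epsilon t}U^{t}$ and estimating $\|AR(\overline{\mu})^{*}\varphi\|\leq\sum_{t\leq0}e^{\epsilon t}\|AU^t\varphi\|\leq(1-e^{-2\epsilon})^{-1/2}\big(\sum_{t\leq0}\|AU^t\varphi\|^2\big)^{1/2}\leq(1-e^{-2\epsilon})^{-1/2}\Gamma^{1/2}\|\varphi\|$ by the triangle inequality and Cauchy--Schwarz gives $\|A(R(\mu)-R(\overline{\mu}))A^*\|\leq\Gamma$; the case $\Im\mu<0$ is identical, so item~3 $\leq\Gamma$.

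\emph{Item~3 $=$ item~4.} Writing $E_I:=1_I(U)$ and noting $\|AE_I\|^2=\|AE_IA^*\|$ (as $E_I$ is a projection), I would argue (item~4) $\leq$ (item~3) via Stone's formula \eqref{eq.Stone}: it presents $\tfrac12(E_{(a,b)}+E_{[a,b]})$ as the strong limit of $\tfrac1{2\pi}\int_a^bD_\lambda(\epsilon)\,d\lambda$ as $\epsilon\to0+$, and since $0\leq E_{[a,b)}\leq E_{[a,b]}$ while the operator norm is lower semicontinuous under strong limits, $\|AE_{[a,b)}A^*\|\leq\liminf_{\epsilon\to0+}\tfrac1{2\pi}\int_a^b\|AD_\lambda(\epsilon)A^*\|\,d\lambda\leq\tfrac{|b-a|}{2\pi}\sup_{\Im\mu\neq0}\|A(R(\mu)-R(\overline{\mu}))A^*\|$, i.e.\ $\|AE_I\|^2/|I|\leq$ item~3. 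For (item~3) $\leq$ (item~4): for $\mu=\lambda+\im\epsilon$, $\epsilon>0$, the spectral theorem and the Poisson‑kernel identity $D_\lambda(\epsilon)=\int_{\T}P_r(\theta-\lambda)\,dE(\theta)$ ($r=e^{-\epsilon}$; see \eqref{KSU.4.4}) give $(\varphi,AD_\lambda(\epsilon)A^*\varphi)=\int_{\T}P_r(\theta-\lambda)\,d\mu_{A^*\varphi}(\theta)$ with $\mu_{A^*\varphi}=(A^*\varphi,E(\cdot)A^*\varphi)$; since $P_r$ is even and unimodal on $\T$, writing $P_r=\int_0^\infty 1_{J_s}\,ds$ over symmetric intervals $J_s$ and using Fubini turns this into $\int_0^\infty\mu_{A^*\varphi}(\lambda+J_s)\,ds\leq\|\varphi\|^2\int_0^\infty\|AE_{\lambda+J_s}\|^2\,ds\leq(\text{item }4)\,\|\varphi\|^2\int_0^\infty|J_s|\,ds=(\text{item }4)\,\|\varphi\|^2\int_{\T}P_r$, and $\int_{\T}P_r$ is a fixed constant by \eqref{KSU.4.2}; taking suprema over $\varphi$ and $\mu$ bounds item~3 by item~4.

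\emph{Where the difficulty lies.} The essential subtlety is the lower bound on item~3 in the first step: because $A(R(\mu)-R(\overline{\mu}))A^*$ is a \emph{pointwise‑in‑$\mu$} object, one cannot use the $L^2_\lambda$ (Plancherel) identity to recover the time sum, and is forced to estimate the Fourier series $AR(\overline{\mu})^{*}\varphi$ at a single frequency — hence through an $l^1$ bound followed by a Cauchy--Schwarz that is sharp only in the limit $\epsilon\to0$ — so care is needed to see that the expected constant is not lost. The endpoint atoms in Stone's formula (the discrepancy between $E_{[a,b)}$ and $\tfrac12(E_{(a,b)}+E_{[a,b]})$) and the possibility that some $J_s$ is all of $\T$ are minor points, handled by monotonicity of $I\mapsto\|AE_IA^*\|$ and by treating $J_s=\T$ separately.
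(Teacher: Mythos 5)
Your argument is correct in substance, and for half of it you follow the paper's route: the bound of item~1 by item~3 via Plancherel, \eqref{eq.KSU.4.7}, the positivity \eqref{eq.KSU.4.10} and the Cauchy--Schwarz step (your ``operator Cauchy--Schwarz'' is exactly the paper's factorization $R(\lambda+\im\epsilon)-R(\lambda-\im\epsilon)=K^2$ with $\|AK^2\varphi\|\le\|AK\|\,\|K\varphi\|$), and the bound of item~3 by item~4 via the spectral measure of $A^*\varphi$ and \eqref{KSU.4.2} (your layer-cake decomposition of the Poisson kernel replaces the paper's Radon--Nikodym step, $d\mu=g\,d\zeta$ with $\|g\|_\infty$ controlled by item~4; the two are equivalent). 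Where you genuinely diverge is in how the cycle is closed. The paper proves the remaining inequality as item~4 $\le$ item~1, going from Stone's formula \eqref{eq.Stone} through Cauchy--Schwarz in $\lambda$ and back to the time side by Plancherel. You instead prove item~3 $\le$ item~1 directly, using \eqref{KSU.4.4} to write $A(R(\mu)-R(\overline{\mu}))A^*=(1-e^{-2\epsilon})(R(\overline{\mu})A^*)^*(R(\overline{\mu})A^*)$ and estimating the Neumann series of $AR(\overline{\mu})^*\varphi$ by triangle inequality plus a weighted Cauchy--Schwarz whose weight sum exactly cancels the prefactor $1-e^{-2\epsilon}$; and you prove item~4 $\le$ item~3 from Stone's formula and lower semicontinuity of the norm under strong limits. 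Both closures are valid; yours has the mild redundancy of proving four inequalities where three suffice, but the direct $3\le1$ step is clean and self-contained, and your treatment of the endpoint atoms of $E_{[a,b)}$ versus $\tfrac12(E_{(a,b)}+E_{[a,b]})$ by enlarging the interval is adequate.

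One caveat you should not wave away: the lemma asserts \emph{equalities}, so the $2\pi$'s are part of the claim, and ``not displaying them'' leaves that part unverified. In fact, if you track the constants through your own chain (Plancherel with the paper's normalization $\mathcal{F}u(\lambda)=\tfrac1{2\pi}\sum_t e^{-\im\lambda t}u(t)$ gives $\|\mathcal{F}u\|_{L^2(\T)}^2=\tfrac1{2\pi}\|u\|_{l^2}^2$, and $\int_\T P_r=2\pi$), you find
\begin{align*}
\|AU^{\cdot}\|_{\mathcal{L}(H,l^2(\Z,H))}^2=\sup_{\Im\mu\neq0}\|A\(R_U(\mu)-R_U(\overline{\mu})\)A^*\|_{\mathcal{L}(H)}=2\pi\,\sup_{[a,b)\subset\T}\frac{\|A1_{[a,b)}(U)\|_{\mathcal{L}(H)}^2}{|b-a|},
\end{align*}
which you can confirm on the example $H=l^2(\Z)$, $U$ the shift, $A=(\cdot,\delta_0)\delta_0$ (there item~1 equals $1$ while $\sup_I\|AE_I\|^2/|I|=1/2\pi$). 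So the normalization printed in the lemma is not the one your (correct) argument produces; this is a defect of the statement's bookkeeping rather than of your method, but since you chose to suppress exactly the constants that distinguish the two, you should carry them through at least once.
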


\begin{proof}
	First, by Plancherel and \eqref{eq.KSU.4.7},
	\begin{align}
		\|e^{-\epsilon|t|}AU^t \varphi\|_{l^2(\Z,H)}^2=\frac{1}{4\pi^2}\|A(R(\lambda+\im \epsilon)-R(\lambda-\im \epsilon))\varphi\|_{L^2(\T,H)}^2\label{eq.KSU.4.12}
	\end{align}
	From \eqref{eq.KSU.4.10}, there exists a self-adjoint $K(\lambda+\im \epsilon)$ s.t. 
	\begin{align*}
		R(\lambda+\im \epsilon)-R(\lambda-\im \epsilon)=K(\lambda+\im \epsilon)^2.
	\end{align*}
	Thus,
	\begin{align*}
		\|e^{-\epsilon|t|}AU^t \varphi\|_{l^2(\Z,H)}^2&=\frac{1}{4\pi^2}\|AK(\lambda+\im \epsilon)^2\varphi\|_{L^2(\T,H)}^2\\&\leq \frac{1}{4\pi^2} \|AK(\lambda+\im \epsilon)\|_{L^\infty(\T,\mathcal{L}(H))}^2\|K(\lambda+\im \epsilon)\varphi\|_{L^2(\T,H)}^2
	\end{align*}
	Now, by \eqref{eq:KSU.1}, we have
	\begin{align*}
		\|K(\lambda+\im \epsilon)\varphi\|_{L^2(\T,H)}^2=\int_{\T}\(\varphi,\(R(\lambda+\im \epsilon)-R(\lambda-\im \epsilon)\)\varphi\)\,d\lambda=2\pi \|\varphi\|_{H}^2.
	\end{align*}
	On the other hand,
	\begin{align*}
		\|AK(\lambda+\im \epsilon)\|_{L^\infty(\T,\mathcal{L}(H))}^2&=\|K(\lambda+\im \epsilon)A^*\|_{L^\infty(\T,\mathcal{L}(H))}^2=\sup_{\lambda\in\T}\sup_{\|\varphi\|_H=1}\|K(\lambda+\im \epsilon)A^*\varphi\|_{H}^2\\&
		=\sup_{\lambda\in\T}\sup_{\|\varphi\|_H=1}\left|\(\varphi,A\(R(\lambda+\im \epsilon)-R(\lambda-\im \epsilon)\)A^*\varphi \)\right|\\&
		=\sup_{\lambda\in\T}\|A(R(\lambda+\im \epsilon)-R(\lambda-\im \epsilon))A^*\|_{\mathcal{L}(H)}.
	\end{align*}
	Here, we used the fact that $A(R(\lambda+\im \epsilon)-R(\lambda-\im \epsilon))A^*$ is self-adjoint, which follows from \eqref{KSU.4.4}.
	Thus, we obtain
	\begin{align*}
		\|e^{-\epsilon|t|}AU^t \varphi\|_{l^2(\Z,H)}^2\leq \frac{1}{2\pi} \sup_{\Im\mu\neq 0}\|A(R(\mu)-R(\overline{\mu}))A^*\|_{\mathcal{L}(H)}\|\varphi\|_H^2.
	\end{align*}
	This implies
	\begin{align*}
		\|AU^{\cdot}\|_{\mathcal{L}(H,l^2(\Z,H))}^2\leq \frac{1}{2\pi}\sup_{\Im\mu\neq 0}\|A(R(\mu)-R(\overline{\mu}))A^*\|_{\mathcal{L}(H)}.
	\end{align*}
	Next, let $\mu$ be the spectral measure for $U$ w.r.t.\ $A^*\varphi$.
	Then, for any interval $[a,b)\subset \T=[0,2\pi)$, we have
	\begin{align*}
		\mu([a,b))= \(A^*\varphi,1_{[a,b)}(U) A^*\varphi\) \leq \(\sup_{[a,b)\subset \T}\frac{\|A1_{[a,b)}(U)A^*\|_{\mathcal{L}(H)}}{|b-a|}\)\|\varphi\|_H^2 |b-a| .
	\end{align*}
	Thus, $\mu$ is absolute continuous.
	Further, there exists $g$ s.t. $d\mu=g d\zeta$ with
	\begin{align}
		\|g\|_{L^\infty(\T)}\leq \(\sup_{[a,b)\subset \T}\frac{\|A1_{[a,b)}(U)A^*\|_{\mathcal{L}(H)}}{|b-a|}\)\|\varphi\|_H^2=\(\sup_{[a,b)\subset \T}\frac{\|A1_{[a,b)}(U)\|_{\mathcal{L}(H)}^2}{|b-a|}\)\|\varphi\|_H^2.\label{App.1}
	\end{align}
	Thus, by Lemma \ref{lem.KSU.4.2} (if there exists an eigenfunction $\varphi$ of $U$ s.t. $\varphi\not\in \mathrm{Ker}A $, then r.h.s.\ of \eqref{App.1} is infinite so without loss of generality we can assume $A(1_{(a,b)}(U)+1_{[a,b]}(U))=A1_{[a,b)}(U)$) we have
	\begin{align}
		&\frac{1}{2\pi}\(\varphi,A(R(\mu)-R(\bar{\mu}))A^*\varphi\)=\frac{1}{2\pi}\left|\int_{\T}\(\frac{1}{e^{\im (\zeta-\lambda)}e^{\epsilon}-1}-\frac{1}{e^{\im (\zeta-\lambda)}e^{-\epsilon}-1}\)g(\zeta)\,d\zeta\right|\nonumber\\&
		\leq \frac{1}{2\pi} \int_{\T}\(\frac{1}{e^{\im (\zeta-\lambda)}e^{\epsilon}-1}-\frac{1}{e^{\im (\zeta-\lambda)}e^{-\epsilon}-1}\) \,d\zeta  \sup_{[a,b)\subset \T}\frac{\|A1_{[a,b)}(U)\|_{\mathcal{L}(H)}^2}{|b-a|}\|\varphi\|_H^2 \label{eq.KSU.4.11} \\&
		= \sup_{[a,b)\subset \T}\frac{\|A1_{[a,b)}(U)\|_{\mathcal{L}(H)}}{|b-a|}\|\varphi\|_H^2.\nonumber
	\end{align}
	Notice that the integrand in \eqref{eq.KSU.4.11} is positive from \eqref{eq.KSU.4.10} with $U=e^{\im \zeta}$.
	Thus, we have
	\begin{align*}
		\frac{1}{2\pi}\sup_{\Im\mu\neq 0}\|A(R(\mu)-R(\overline{\mu}))A^*\|_{\mathcal{L}(H)}\leq 
		\sup_{[a,b)\subset \T}\frac{\|A1_{[a,b)} \|_{\mathcal{L}(H)}^2}{|b-a|}.
	\end{align*}
	Finally, from Stone's formula \eqref{eq.Stone} we have
	\begin{align*}
		\frac{\|A1_{[a,b)}(U) \|_{\mathcal{L}(H)}^2}{|b-a|}&=\frac{1}{|b-a|}\sup_{\|\varphi\|_H,\|\psi\|_H = 1}\(A^*\psi,1_{[a,b)}(U)\varphi\)^2\\&
		=\frac{1}{4\pi^2}\frac{1}{|b-a|}\sup_{\|\varphi\|_H,\|\psi\|_H = 1}\(\lim_{\epsilon\to 0+}\int_a^b\(A^*\psi,(R(\lambda+\im \epsilon)-R(\lambda-\im \epsilon)\varphi)\)\,d\lambda\)^2\\&\leq
		\frac{1}{4\pi^2}\sup_{\|\varphi\|_H=1}\lim_{\epsilon\to 0+}\|A\(R(\lambda+\im \epsilon)-R(\lambda-\im \epsilon)\)\varphi\|_{L^2(\T,H)}^2\\&=\sup_{\|\varphi\|_H=1}\lim_{\epsilon\to 0+}\|e^{-\epsilon|\cdot|}AU^{\cdot}\varphi\|_{l^2(\Z,H)}^2=\|AU^{\cdot}\varphi\|_{\mathcal{L}(H,l^2(\Z,H))}^2.
	\end{align*}
Therefore, we have the conclusion.
\end{proof}

Combining Lemmas \ref{lem.KSU.A} and \ref{lem.KSU.B}, we have Theorem \ref{thm.KSU.4.1}.

\medskip

Masaya Maeda

Department of Mathematics and Informatics,
Faculty of Science,
Chiba University,
Chiba 263-8522, Japan

{\it E-mail Address}: {\tt maeda@math.s.chiba-u.ac.jp}

\end{document}